\crefname{theorem}{Thm.}{Thms.}
\crefname{lemma}{Lem.}{Lemmas}
\crefname{corollary}{Cor.}{Cors.}
\crefname{figure}{Fig.}{Figs.}
\crefname{definition}{Defn.}{Defns.}
\crefname{table}{Tab.}{Tabs.}
\crefname{example}{Ex.}{Exs.}
\crefname{item}{item}{items}
\crefname{footnote}{footnote}{footnotes}
\crefname{observation}{Obs.}{Obs.}
\crefname{remark}{Remark}{Remarks}
\crefname{proposition}{Prop.}{Props.}
\crefname{equation}{Eqn.}{Eqns.}
\crefname{counterexample}{Counterexample}{Counterexamples}
\crefname{property}{Property}{Properties}
\crefname{algorithm}{Algorithm}{Algorithms}
\newcommand{\Blue}[1]{\textcolor{blue}{#1}}
\newcommand{\Red}[1]{\textcolor{red}{#1}}
\newcommand{\Gray}[1]{\textcolor{gray}{#1}}
\renewcommand{\dagger}{\text{\textdagger}}
\newcommand{\QBUAd}{QBUA\textsuperscript{$\Diamond$}\xspace}
\newcommand{\Bd}{B\textsuperscript{$\Diamond$}}
\newcommand{\pjudge}[3]{\vdash\left\{#1\right\}#2\left\{#3\right\}}
\newcommand{\judge}[3]{\vdash\left[#1\right]#2\left[#3\right]}
\newcommand{\bjudge}[3]{\vdash_{\mathsf{B}}\left[#1\right]#2\left[#3\right]}
\newcommand{\fjudge}[3]{\vdash_{\mathsf{F}}\left[#1\right]#2\left[#3\right]}
\newcommand{\ajudge}[3]{\vdash_{\dagger}\left[#1\right]#2\left[#3\right]}
\newcommand{\djudge}[3]{\vdash^{\Diamond}_{\mathsf{B}}\left[#1\right]#2\left[#3\right]}
\newcommand{\bJudge}[3]{\vDash_{\mathsf{B}}\left[#1\right]#2\left[#3\right]}
\newcommand{\fJudge}[3]{\vDash_{\mathsf{F}}\left[#1\right]#2\left[#3\right]}
\newcommand{\dJudge}[3]{\vDash^{\Diamond}_{\mathsf{B}}\left[#1\right]#2\left[#3\right]}
\newcommand{\aajudge}[5]{\vdash_{\dagger}\left[\left.#1\right|#2\right]#3\left[\left.#4\right|#5\right]}
\newcommand{\AAjudge}[5]{\vdash_{\text{\textdaggerdbl}}\left[\left.#1\right|#2\right]#3\left[\left.#4\right|#5\right]}
\newcommand{\bigstep}[6]{\left\langle#1,#2,#3\right\rangle\Downarrow^{#4}\left\langle#5,#6\right\rangle}
\newcommand{\bigstepp}[5]{\left\langle#1,#2,#3\right\rangle\Downarrow\left\langle#4,#5\right\rangle}
\newcommand{\bigstepl}[5]{\left\langle#1,#2,#3\right\rangle\Downarrow^{\le 0}\left\langle#4,#5\right\rangle}
\newcommand{\eval}[2]{\left\llbracket#1\right\rrbracket#2}
\newcommand{\post}[2]{\mathrm{sp}\left\llbracket#1\right\rrbracket\left(#2\right)}
\newcommand{\pre}[2]{\mathrm{wp}\left\llbracket#1\right\rrbracket\left(#2\right)}
\newcommand{\prel}[2]{\mathrm{wp}^{\le 0}\left\llbracket#1\right\rrbracket\left(#2\right)}
\newcommand{\Skip}{\textsf{skip}}
\newcommand{\Assign}[2]{#1\coloneqq#2}
\newcommand{\Assume}[1]{\textsf{assume}~#1}
\newcommand{\Tick}[1]{\textsf{tick}~#1}
\newcommand{\Seq}[2]{#1;#2}
\newcommand{\Choice}[2]{#1+#2}
\renewcommand{\Loop}[1]{#1^*}
\newcommand{\Ite}[3]{\textsf{if}~#1~\textsf{then}~#2~\textsf{else}~#3}
\renewcommand{\While}[2]{\textsf{while}~#1~\textsf{do}~#2}
\newcommand{\Local}[2]{\textsf{local}~#1~\textsf{in}~#2}
\newcommand{\Fv}{\mathrm{fv}}
\newcommand{\Mod}{\mathrm{mod}}
\newcommand{\Sup}{\mathop{\scalebox{-1}[1]{\textsf{S}}}}
\newcommand{\Inf}{\mathop{\scalebox{-1}[1]{\textsf{J}}}}
\newcommand{\true}{\mathsf{true}}
\newcommand{\false}{\mathsf{false}}
\begin{document}


\title{A Program Logic for Under-approximating Worst-case Resource Usage}

\author{Ziyue Jin \and Di Wang\thanks{Corresponding Author}}
\institute{Key Lab of HCST (PKU), MOE; SCS, Peking University, China}

%

\maketitle

\begin{abstract}
%
Understanding and predicting the worst-case resource usage is crucial for software quality; however, existing methods either over-approximate with potentially loose bounds or under-approximate without asymptotic guarantees.
This paper presents a program logic to under-approximate worst-case resource usage, adapting incorrectness logic (IL) to reason quantitatively about resource consumption.
We propose \underline{q}uantitative \underline{f}orward and \underline{b}ackward \underline{u}nder-\underline{a}pproximate (QFUA and QBUA) triples, which generalize IL to identify execution paths leading to high resource usage.
We also introduce a variant of QBUA that supports reasoning about high-water marks.
Our logic is proven sound and complete with respect to a simple IMP-like language, and all meta-theoretical results are mechanized and verified in Rocq.
We implement a prototype checker for all three variants of our logic and demonstrate its utility through a few examples and four case studies.
%

\keywords{Worst-case Resource Bounds \and Quantitative Under-approximate Logic \and Incorrectness Logic}
\end{abstract}

\section{Introduction}
\label{sec:intro}


%
%
%

Understanding and predicting the worst-case resource usage of programs is a fundamental challenge in computer science.
Knowing the bounds of resource usage, whether for memory consumption, CPU cycles, or network bandwidth, is crucial for ensuring system reliability, performance, and security.
However, precisely determining these bounds is notoriously difficult.
Static methods for analyzing resource usage tend to \emph{over-}approximate worst-case scenarios, resulting in non-tight resource bounds and, consequently, false positives.
On the other hand, dynamic methods, such as testing, can typically produce specific inputs of a certain size to induce high resource usage---thus \emph{under-}approximating worst-case scenarios---but do not offer an asymptotic characterization of the worst-case behavior.
This paper aims to develop a program logic to under-approximate worst-case resource usage and offer a compositional method for identifying scenarios with high resource usage.

Previous approaches to resource analysis have mainly concentrated on over-approximation methods, including abstract interpretation and constraint-based techniques.
These techniques aim to compute sound upper bounds on resource usage by taking into account all possible program behaviors.
However, they may produce non-tight bounds and typically do not indicate which inputs would result in the worst-case resource usage.
Under-approximation techniques, such as fuzzing and dynamic analysis, have been used to identify specific resource-intensive execution paths.
However, these techniques lack compositionality and generality for comprehensive resource analysis.
Incorrectness logic (IL), which has been successfully applied to bug detection, offers a promising alternative by providing a formal foundation for under-approximate reasoning~\cite{POPL:OHearn20}.
Recent work has adapted IL to prove non-termination~\cite{OOPSLA:RVO24}, i.e., a \emph{qualitative} argument about resource usage.
However, adapting IL to prove \emph{quantitative} resource bounds has not been explored.

In this paper, we adapt IL to under-approximate worst-case resource usage.
The key idea is to use under-approximate triples to reason about the existence of execution paths that lead to high resource usage.
Specifically, we introduce a new form of under-approximate triple that captures the relationship between program states and resource consumption.
Our approach leverages IL's compositional nature to analyze a program by breaking it down into smaller, manageable components.

One of the main challenges is to formalize resource usage within the IL framework.
The original IL triple $\fjudge{p}{C}{ok:q}$---also called a \emph{forward under-approximate} (FUA) triple---denotes that \emph{every} post-state that satisfies $q$ is \emph{reachable} by executing $C$ from \emph{some} pre-state that satisfies $p$.
Raad et al.~\cite{OOPSLA:RVO24} suggested using \emph{backward under-approximate} (BUA) triples when reasoning about non-termination; that is, $\bjudge{p}{C}{ok:q}$ denotes that from \emph{every} pre-state that satisfies $p$ is possible to \emph{reach} \emph{some} post-state that satisfies $q$ by executing $C$.
The pre-conditions $p$ and post-conditions $q$ are \emph{qualitative}, i.e., they are Boolean-valued assertions on program states; however, they do not provide a natural way to reason about resource consumption.
In our work, we need to extend IL to capture the relationship between program states and resource usage.
We adapt the idea of \emph{quantitative Hoare logic}~\cite{PLDI:CHR14,PLDI:CHS15}: instead of the Boolean $true$, a quantitative assertion returns a natural number that indicates the amount of resource that is required to safely execute the program; at the same time, the Boolean $false$ is encoded by $\infty$.
Intuitively, quantitative Hoare logic \emph{refines} classic Hoare logic to reason about both the functionality and the resource usage of a program.
In this paper, we devise \emph{quantitative} generalizations of IL:
let $P$ and $ Q$ be assertions of type $State \to \bbZ \cup \{-\infty, +\infty\}$;
the FUA triple $\fjudge{P}{C}{Q}$ denotes that every post-state with \emph{at least} $Q$ units of resource is reachable by executing $C$ from some pre-state with \emph{at least} $P$ units of resource;
and the BUA triple $\bjudge{P}{C}{Q}$ denotes that every pre-state with \emph{at most} $P$ units of resource is possible to reach some post-state with \emph{at most} $Q$ units of resource by executing $C$.
In addition, we adapt Carbonneaux et al.~\cite{PLDI:CHS15}'s approach to separate qualitative assertions and quantitative resource bounds in the FUA/BUA triples, as well as Zhang and Kaminski's approach~\cite{OOPSLA:ZK22} to calculate quantitative strongest post-conditions, to simplify the reasoning in our prototype implementation and case studies.

Another challenge is to support reasoning about \emph{high-water marks} in our quantitative IL.
High-water marks provide a more precise characterization of a program's resource usage when the resource is non-monotone.
For example, when reasoning about stack-space bounds, we want to know the highest stack watermark \emph{during} a program's execution.
However, an FUA/BUA triple $\ajudge{P}{C}{Q}$ only constrains the amount of resource at the pre- and post-states;
that is, although the difference between $P$ and $Q$ serves as an under-approximation of the worst-case resource consumption of $C$, we cannot say that $P$ provides an under-approximation of the high-water mark of executing $C$.
It is worth noting that for quantitative Hoare logic $\vdash \left\{P\right\} C \left\{Q\right\}$, $P$ is indeed an over-approximation of the high-water mark.
In this paper, we devise a variant of the BUA logic to under-approximate high-water marks:
the triple $\djudge{P}{C}{Q}$ means that every pre-state with at most $P$ units of resource is possible to reach some post-state with at most $Q$ units of resource by executing $C$, \emph{and the particular execution must make the resource counter non-positive at some point}.
In this way, $P$ can serve as an under-approximation of the high-water mark of $C$ because, if we start to execute $C$ with $P$ units of resource, there exists an execution that consumes all the $P$ units at some point.
Note that different from quantitative Hoare logic, our quantitative IL allows the quantitative assertions $P,Q$ to take negative values, i.e., they have type $State \to \bbZ \cup \{\pm\infty\}$.

In this paper, we focus on the theoretical properties of our quantitative IL for under-approximating worst-case resource usage.
We prove that both the FUA and BUA variants are sound and complete with respect to a resource-aware operational semantics for integer IMP programs.
To demonstrate the usefulness of our quantitative IL, we implement a prototype checker for our logic that supports arrays via the array theory and present several case studies, including some array-based sorting algorithms.
In the future, we plan to automate our logic by integrating with existing tools that target realistic languages, e.g., Pulse~\cite{OOPSLA:LRV22}.

%
In this paper, we make the following three main contributions.
\begin{itemize}
  \item We devise a program logic to under-approximate worst-case resource usage of programs. Our logic is a generalization of the incorrectness logic, and we formulate three variants: forward, backward, and backward high-water mark.
  \item We prove that our program logic (both the forward and the backward variants) is sound and complete with respect to a resource-aware operational semantics, with all proofs mechanized in Rocq.
  \item We implement a prototype checker for all three variants of our logic and present four case studies to demonstrate the usefulness of it.
\end{itemize}

\section{Overview}
\label{sec:overview}



We first examine quantitative Hoare logic to introduce the idea of resource-aware program logics (\cref{sec:overview:qhl}).
Next, we review incorrectness logic and non-termination logic to discuss the concept of under-approximation (\cref{sec:overview:il}).
Finally, we outline our approach to developing quantitative under-approximate logic to reason about worst-case resource usage (\cref{sec:overview:qual}).

\subsection{Prior Work: Quantitative Hoare Logic}
\label{sec:overview:qhl}

Our work is not the first to consider developing program logics to derive resource-usage bounds.
We take inspiration from the \emph{quantitative Hoare logic} (QHL), which extends classic Hoare logic by incorporating resource consumption into program reasoning~\cite{PLDI:CHR14,PLDI:CHS15}.

The key idea behind QHL is to augment Hoare triples with resource annotations.
A \emph{quantitative Hoare triple}, written as $\pjudge{P}{C}{Q}$ where $P$ and $Q$ are quantitative assertions of type $State \to \bbN \cup \{ \infty\}$, states that if $C$ is executed in a pre-state with \emph{at least} $P$ units of resource, then after its execution, \emph{at least} $Q$ units of resource will remain.
Intuitively, these quantitative assertions can be seen as \emph{potential functions} that map program states to non-negative potentials: the pre-potential $P$ is sufficient to pay for the resource usage of $C$ as well as the post-potential $Q$.
In other words, QHL essentially captures the principle of the \emph{potential method} for amortized complexity analysis~\cite{JADM:Tarjan85}.

Below are three representative rules of QHL.
To explicitly annotate resource consumption, people usually introduce a primitive command $\Tick{e}$ that computes $e$ to an integer $n$ and then consumes $n$ units of resource.
(If $n$ is negative, this tick command releases $-n$ units of resource.)
The rule \textsc{(QHL:Tick)} states that if the pre-potential is at least $P + e$, then it is safe to execute $\Tick{e}$ and end with post-potential $P$.
The rule \textsc{(QHL:Seq)} illustrates the \emph{compositional} nature of QHL: to reason about the resource usage of the sequencing command $\Seq{C_1}{C_2}$, one can derive quantitative triples \emph{individually} for $C_1$ and $C_2$.
The rule \textsc{(QHL:While)} generalizes the invariant-based reasoning of classic Hoare logic, where the predicate $\mathsf{istrue}(B)$ encodes a potential function that returns $0$ if $B$ is true and $\infty$ otherwise.
\begin{linenomath}
\begin{mathpar}\small
  \inferrule[(QHL:Tick)]
  { P + e \ge 0 }
  { \pjudge{P + e}{\Tick{e}}{P} }
  \quad
  \inferrule[(QHL:Seq)]
  { \pjudge{P}{C_1}{R} ~ \pjudge{R}{C_2}{Q} }
  { \pjudge{P}{\Seq{C_1}{C_2}}{Q} }
  \quad
  \inferrule[(QHL:While)]
  { \pjudge{I + \mathsf{istrue}(B)}{C}{I} }
  { \pjudge{I}{\While{B}{C}}{I + \mathsf{istrue}(\neg B)} }
\end{mathpar}
\end{linenomath}

Consider the program $\While{x<n}{(\Seq{\Assign{x}{x+1}}{\Tick{1}})}$.
Intuitively, the total resource consumption is given by $I \coloneqq \max(0, n - x)$.
Using $I$ as the loop invariant, we can prove the judgement $\pjudge{I}{\While{x<n}{(\Assign{x}{x+1};\Tick{1})}}{0}$.
By \textsc{(QHL:While)}, this amounts to proving $\pjudge{I+\mathsf{istrue}(x<n)}{\Assign{x}{x+1}; \Tick{1}}{I}$.
If $x \ge n$, then the pre-potential is $\infty$ and thus the triple is valid;
otherwise, we can use \textsc{(QHL:Seq)} with $\max(0,n-x) + 1$ as the intermediate assertion and conclude by $\max(0,n-(x+1))+1 = \max(0,n-x)$ when $x<n$.

It is worth noting that QHL also supports reasoning about \emph{high-water marks}.
In fact, the quantitative triple $\pjudge{P}{C}{Q}$ indicates that $P$ is an upper bound (i.e., an over-approximation) of the high-water mark of executing $C$.
Consider the non-terminating program $\While{\mathsf{true}}{(\Tick{1}; \Tick{{-1}})}$.
We can prove the judgement $\pjudge{1}{\While{\mathsf{true}}{(\Tick{1};\Tick{{-1}})}}{1}$, i.e., the high-water mark of the program is upper-bounded by $1$.
Again, by \textsc{(QHL:While)}, this amounts to proving $\pjudge{1}{\Tick{1}; \Tick{{-1}}}{1}$.
This can be justified by using \textsc{(QHL:Seq)} with $0$ as the intermediate assertion.
Note that the triple also indicates that after the program terminates, the resource consumption is upper-bounded by $1-1=0$.
However, because the program is non-terminating, the triple is indeed an over-approximation in terms of resource consumption.

It is important to note that while QHL focuses on proving sound upper bounds (i.e., over-approximations), the goal of this paper is to establish sound \emph{under-approximations} of resource usage, as we will further elaborate in \cref{sec:overview:qual}.

\subsection{Prior Work: Incorrectness Logic and Non-termination Logic}
\label{sec:overview:il}

Incorrectness logic (IL), introduced by O'Hearn~\cite{POPL:OHearn20}, provides a formal foundation for under-approximate reasoning about program behaviors, particularly for bug detection.
Unlike classic Hoare logic, which focuses on proving program correctness by over-approximating program behaviors, IL under-approximates program behaviors to identify specific execution paths that lead to bugs.
This makes IL well-suited for detecting errors such as memory safety violations~\cite{CAV:RBD20}, concurrency bugs~\cite{POPL:RBD22}, and non-termination~\cite{OOPSLA:RVO24}, where the goal is to find concrete evidence of incorrect behavior rather than proving the absence of errors.

The key idea behind IL is the use of \emph{forward, under-approximate} (FUA) triples, written as $\fjudge{p}{C}{\epsilon : q}$, which state that starting from a set of pre-states $p$, executing program $C$ can lead to a set of post-states under-approximated by $q$ under the exit condition $\epsilon$ (e.g., normal termination $ok$ or error $er$).
The under-approximate nature of FUA triples ensures that any bug detected is a true positive, as it corresponds to a concrete execution path that exhibits the error.
Below are two rules for under-approximating the behavior of while loops in IL.
Different from invariant-based reasoning in classic Hoare logic, IL essentially \emph{unrolls} a loop to examine a subset of execution paths through the loop.
The rule \textsc{(IL:WhileFalse)} states that if the loop condition $B$ does not hold before entering the loop, the pre-states and post-states coincide because the loop is not entered.
The rule \textsc{(IL:WhileSubvar)} provides \emph{subvariant}-based reasoning, which generalizes bounded unrolling by asserting that there exists some $k$ satisfying that the loop body $C$ transforms $p(n) \wedge B$ to $p(n+1) \wedge B$ for $n<k$, and that $C$ transforms $p(k) \wedge B$ to $q \wedge \neg B$, i.e., the loop could end after $k$ iterations.
\begin{linenomath}
\begin{mathpar}\small
  \inferrule[(IL:WhileFalse)]
    { }
    { \fjudge{p \wedge \neg B}{\While{B}{C}}{ok: p \wedge \neg B} }
    \and
    \inferrule[(IL:WhileSubvar)]
    { \forall n < k. \fjudge{p(n) \wedge B}{C}{ok: p(n+1) \wedge B} \\\\
      \fjudge{p(k) \wedge B}{C}{\epsilon: q \wedge \neg B}
    }
    { \fjudge{p(0) \wedge B}{\While{B}{C}}{\epsilon: q \wedge \neg B} }
\end{mathpar}
\end{linenomath}

Among the prior extensions of IL, the \emph{under-approximate, non-termination logic} (\textsc{UNTer}) is the most related work because it reasons---qualitatively---about the usage of a particular kind of resource, i.e., time~\cite{OOPSLA:RVO24}.
\textsc{UNTer} introduces divergent triples of the form $\judge{p}{C}{\infty}$, which state that starting from any state in $p$, the program $C$ has at least one divergent (i.e., non-terminating) execution.
A key insight in \textsc{UNTer} is the use of \emph{backward, under-approximate} (BUA) triples, written as $\bjudge{p}{C}{\epsilon: q}$, which state that starting from a set of pre-states $p$, executing program $C$ can reach some post-states in $q$ under the exit condition $\epsilon$.
Consider the program $\Assign{x}{x-1}$.
The FUA judgement $\fjudge{x>0}{\Assign{x}{x-1}}{ok: x>0}$ is valid, because the precise set of post-states is $x \ge 0$ and $x > 0$ is its subset, i.e., an under-approximation.
On the other hand, the BUA judgement $\bjudge{x>0}{\Assign{x}{x-1}}{ok: x > 0}$ is \emph{invalid}, because executing from the pre-state $x=1$ results in $x=0$, which is not included in the set $x>0$.
\textsc{UNTer} uses BUA triples to reason about the non-termination of loops.
For example, the rule \textsc{(UNTer:While)} shown below states that if executing the loop body $C$ keeps the states $p$ and the loop condition $B$ unchanged, we can establish that the loop can diverge.
If the rule used FUA triples instead, it would prove the triple $\judge{x > 0}{\While{x>0}{\Assign{x}{x-1}}}{\infty}$, which is unsound.
Another rule \textsc{(UNTer:WhileSubvar)} is similar to \textsc{(IL:WhileSubvar)}, providing subvariant-based reasoning about divergent loops.
\begin{linenomath}
\begin{mathpar}\small
  \inferrule[(UNTer:While)]
  { \bjudge{p \wedge B}{C}{ok: p \wedge B} }
  { \judge{p \wedge B}{\While{B}{C}}{\infty} }
  \and
  \inferrule[(UNTer:WhileSubvar)]
  { \forall n \in \bbN. \bjudge{p(n) \wedge B}{C}{p(n+1) \wedge B} }
  { \judge{p(0) \wedge B}{\While{B}{C}}{\infty} }
\end{mathpar}
\end{linenomath}


\subsection{This Work: Quantitative Under-approximate Logic}
\label{sec:overview:qual}

Our goal is to reason about the \emph{worst-case resource usage} of programs from an \emph{under-approximate} perspective.
Indeed, this can in principle be achieved within Incorrectness Logic (IL) by adding a ghost variable that tracks resource consumption.
For example, one may introduce a counter $ticks$ that is decreased by each $\Tick{e}$ command and reason about triples of the form $\fjudge{ticks\ge P}{C}{ticks\ge Q}$, expressing that there exists an execution of $C$ consuming at least $P-Q$ units of resource.
However, such an encoding is less compositional and modular.
For instance, from $\fjudge{ticks\ge 10}{C_1}{ticks\ge 0}$ and $\fjudge{ticks\ge 5}{C_2}{ticks\ge 0}$ one cannot directly derive a triple for $\Seq{C_1}{C_2}$ without re-establishing intermediate assertions such as $\fjudge{ticks\ge 15}{C_1}{ticks\ge 5}$.

To obtain a more compositional and elegant approach, we draw inspiration from quantitative Hoare Logic and lift incorrectness reasoning itself to the quantitative setting, devising \emph{quantitative under-approximate logic}.
Recall that in qualitative under-approximate logic, forward and backward triples capture two complementary reasoning modes: the backward form identifies possible initial states that can lead to certain final states, while the forward form discovers possible final states that can arise from some initial state.
Generalizing these two directions to the quantitative setting, we obtain \emph{quantitative, forward/backward, under-approximate} (QFUA/QBUA) triples that reason not only about reachability but also about the amount of resource consumed along executions.
Intuitively, QFUA is outcome-oriented, focusing on verifying that certain outputs can arise from executions with high resource usage, whereas QBUA is input-oriented, identifying initial states that can lead to such costly behaviors.
Formally:
\begin{itemize}
  \item A QFUA judgement $\fjudge{P}{C}{Q}$ states that every post-potential \emph{no less than} $Q$ is reachable by executing $C$ from some pre-potential with \emph{no less than} $P$.
  Thus, if we fix the post-potential to $Q$, the judgement states that there exists an execution path with some pre-potential $P' \ge P$, so that the difference between $P$ and $Q$ serves as an under-approximation of the worst-case cost, which is lower-bounded by the difference between $P'$ and $Q$.
  \item A QBUA judgement $\bjudge{P}{C}{Q}$ states that every pre-potential \emph{no greater than} $P$ is possible to reach some post-potential \emph{no greater than} $Q$ by executing $C$.
  If we fix the pre-potential to $P$, the judgement states that there exists an execution path with some post-potential $Q' \le Q$, so that the difference between $P$ and $Q$ also serves as an under-approximation of the worst-case scenario, which is lower-bounded by the difference between $P$ and $Q'$.
\end{itemize}

Different from \textsc{UNTer}~\cite{OOPSLA:RVO24}, where only backward triples are meaningful for non-termination reasoning, we will prove in \cref{sec:technical} that both QFUA and QBUA logics are sound and complete for under-approximating worst-case resource usage.
In our setting, we allow quantitative assertions to have type $State \to \bbZ \cup \{\pm\infty\}$.

To illustrate how the forward and backward reasoning modes complement each other, consider the program $\Ite{x = 42}{(\Seq{\Tick{2}}{\Assign{x}{0}})}{\Tick{1}}$.
Suppose we are interested in whether a final state with $x = 0$ can result from an execution that consumes at least 2 units of resource.
This can be expressed using a QFUA triple
\[
\fjudge{2}{\Ite{x = 42}{(\Seq{\Tick{2}}{\Assign{x}{0}})}{\Tick{1}}}{\max(0,[x \ne 0])},
\]
where the predicate $[B]$ encodes a potential function that returns $+\infty$ if $B$ is true and $-\infty$ otherwise.
It shows that some execution reaching a post-state with $x = 0$ requires at least 2 units of resource, indicating the existence of a high-cost final configuration.
On the other hand, if we wish to determine whether there exists an input that causes this high usage, we use a QBUA triple:
\[
\bjudge{\min(2, [x = 42])}{\Ite{x = 42}{(\Seq{\Tick{2}}{\Assign{x}{0}})}{\Tick{1}}}{0}, 
\]
which states that for any initial state satisfying $x = 42$ and having at most 2 units of resource, it is possible to execute the program and reach a post-state with at most 0 units remaining.
This example illustrates the complementary nature of the two logics: QFUA helps confirm that certain costly outcomes are possible, while QBUA helps identify the inputs that lead to them.

Below are some rules to derive the above QFUA and QBUA triples.
The \textsc{(F/B:IfTrue)} rules reflect the difference between QFUA and QBUA:
the $\max$ operator reflects QFUA's ``no less than'' nature, whereas
the $\min$ operator reflects QBUA's ``no greater than'' nature.
\begin{linenomath}
\begin{mathpar}\small
  \inferrule[(F:Tick)]
  { }
  { \fjudge{P}{\Tick{e}}{P-e} }
  \and
  \inferrule[(F:IfTrue)]
  { \fjudge{\max(P, [\neg B])}{C_1}{Q} }
  { \fjudge{\max(P, [\neg B])}{\Ite{B}{C_1}{C_2}}{Q} }  
  \\
  \inferrule[(B:Tick)]
  { }
  { \bjudge{P}{\Tick{e}}{P-e} }
  \and
  \inferrule[(B:IfTrue)]
  { \bjudge{\min(P, [B])}{C_1}{Q} }
  { \bjudge{\min(P, [B])}{\Ite{B}{C_1}{C_2}}{Q} }  
\end{mathpar}
\end{linenomath}

In contrast to the ghost-variable encoding, our logic admits direct composition through the \textsc{(Relax)} rule, which allows shifting both pre- and post-potentials uniformly without re-verifying the body of the command:
\begin{linenomath}
\begin{mathpar}\small
  \inferrule[(\dagger:Relax)]
  { \ajudge{P}{C}{Q} \\ F \text{ is invariant under } C }
  { \ajudge{P+F}{C}{Q+F} },
  \quad \text{where~}\dagger \in \{\mathsf{F}, \mathsf{B}\}
\end{mathpar}
\end{linenomath}
This rule restores the compositionality that is lost in the ghost-variable encoding.
For example, recall the non-compositional case discussed at the beginning of this section, where the triples $\fjudge{10}{C_1}{0}$ and $\fjudge{5}{C_2}{0}$ could not be directly combined.
By applying the \textsc{(F:Relax)} rule, we can derive $\fjudge{15}{C_1}{5}$ directly and then combine them via the sequential composition rule to obtain $\fjudge{15}{C_1;C_2}{0}$, without re-verifying $C_1$.
This demonstrates how our framework supports modular reasoning about resource usage, addressing the compositionality issue mentioned earlier.
We will see further examples of such compositional reasoning in~\cref{sec:case-studies}.

Constructing an equivalent rule for a ghost-variable encoding, however, poses a dilemma.
If a ghost-variable rule were designed to only handle pre- and post-conditions of the specific form $ticks \ge \cdots$, it would lack the generality to handle more complex state-resource relationships.
Conversely, if the rule were designed to accommodate arbitrary pre- and post-conditions, defining the additive operation ($P+F$) used in our \textsc{(Relax)} rule would become non-trivial.
Thus, our logics conceptually achieve modularity in a more systematical way.

Recall that a triple in quantitative Hoare logic also provides an over-approximation of the high-water mark for executing a program.
However, this is not the case for QFUA or QBUA triples.
For example, we can derive the QBUA judgement $\bjudge{\min(20,[x>y])}{\Ite{x>y}{\Tick{2}}{\Tick{1}}}{18}$, but $\min(20,[x>y])$ is obviously \emph{not} a lower bound of the worst-case high-water mark for executing the conditional command.
To support under-approximating high-water marks, we devise a third kind of quantitative under-approximate triples, which we write as $\djudge{P}{C}{Q}$ and call \QBUAd triples\footnote{We discuss why the forward variant QFUA\textsuperscript{$\Diamond$} is not developed in~\cref{sec:technical:hwm}.}.
Such a triple is a refinement of $\bjudge{P}{C}{Q}$; that is, every pre-potential no greater than $P$ is possible to reach some post-potential no greater than $Q$ by executing $C$, \emph{and during the particular execution, the potential must become non-positive at some point}.
We prove in \cref{sec:technical} that $\djudge{P}{C}{Q}$ indicates that $P$ serves as an under-approximation of the high-water mark for executing $C$.
In addition, the \QBUAd logic is also complete.
Below are the $\Diamond$ versions of the rules shown earlier.
\begin{linenomath}
\begin{mathpar}\small
  \inferrule[(\Bd:Tick)]
  { \min(P,P-e) \le 0 }
  { \djudge{P}{\Tick{e}}{P-e} }
  \and
  \inferrule[(\Bd:IfTrue)]
  { \djudge{\min(P, [B])}{C_1}{Q} }
  { \djudge{\min(P, [B])}{\Ite{B}{C_1}{C_2}}{Q} }  
\end{mathpar}
\end{linenomath}
In our formulation, the \QBUAd logic depends on the QBUA logic.
For example, there are two possibilities for reasoning about the sequencing command $C_1;C_2$: executing $C_1$ makes the potential non-positive, or executing $C_2$ makes the potential non-positive, as shown below.
\begin{linenomath}
\begin{mathpar}\small
  \inferrule[(\Bd:SeqL)]
  { \djudge{P}{C_1}{R} \\ \bjudge{R}{C_2}{Q} }
  { \djudge{P}{C_1;C_2}{Q} }
  \and
  \inferrule[(\Bd:SeqR)]
  { \bjudge{P}{C_1}{R} \\ \djudge{R}{C_2}{Q} }
  { \djudge{P}{C_1;C_2}{Q} }
\end{mathpar}
\end{linenomath}

\section{Technical Details}
\label{sec:technical}



In this section, we present the formal systems for the QFUA, QBUA, and \QBUAd logics for under-approximating worst-case resource consumption.
\cref{sec:technical:lang} formulates an IMP-style programming language with integer variables and tick commands.
\cref{sec:technical:qual} describes the QFUA and QBUA logics and proves their soundness and completeness.
\cref{sec:technical:hwm} extends QBUA to develop \QBUAd, which aims to under-approximate high-water marks.
All metatheoretic results presented in this section have been mechanically verified in Rocq; the full formalization is available in the artifact~\cite{software:JW26}.

\subsection{Syntax and Semantics}
\label{sec:technical:lang}

The language syntax defines resource-aware computation through the grammar:
\begin{align*}
C ::= \Skip \mid \Assign{x}{e} \mid \Assume{B} \mid \Tick{e} 
     \mid \Seq{C}{C} \mid \Choice{C}{C} \mid \Loop{C} \mid \Local{x}{C}
\end{align*}
The basic operations include: $\Skip$ denoting no-operation, $\Assign{x}{e}$ for variable assignments, $\Assume{B}$ constraining execution to paths satisfying boolean condition $B$, and $\Tick{e}$ modeling resource consumption by deducting the value of arithmetic expression $e$ from an implicit resource counter.
Control flow is structured through $\Seq{C_1}{C_2}$ for sequencing, $\Choice{C_1}{C_2}$ for non-deterministic choice, and $\Loop{C}$ for iterative execution. Variable scoping is managed via $\Local{x}{C}$.
All program variables are integer-valued.
Derived control structures are defined as: conditional execution $\Ite{B}{C_1}{C_2} \coloneqq \Choice{\Seq{(\Assume{B}}{C_1})}{(\Seq{\Assume{(\neg B)}}{C_2})}$ and looping constructs $\While{B}{C} \coloneqq \Seq{\Loop{\left(\Seq{\Assume{B}}{C}\right)}}{\Assume{(\neg B)}}$.

\begin{figure}[t!]
\begin{mathpar}\footnotesize
  \inferrule[(BS:Tick)]{}{\bigstep{\Tick{e}}{\sigma}{p}{\min\{p,p-\eval{e}{\sigma}\}}{\sigma}{p-\eval{e}{\sigma}}}
  \hva\and
  \inferrule[(BS:Seq)]{\bigstep{C_1}{\sigma}{p}{l_1}{\rho}{r} \\ \bigstep{C_2}{\rho}{r}{l_2}{\tau}{q}}{\bigstep{\Seq{C_1}{C_2}}{\sigma}{p}{\min\{l_1,l_2\}}{\tau}{q}}
\end{mathpar}
\caption{Selected Rules for the Big-step Semantics}
\label{fig:selectedsemantics}
\end{figure}

The big-step semantics formalizes program behavior with explicit resource tracking.
The judgment $\bigstep{C}{\sigma}{p}{l}{\tau}{q}$ states that executing $C$ from initial state $\sigma \in State \coloneqq Var \to \bbZ$ with resource $p \in \bbZ$ terminates in state $\tau \in State$ with residual resource $q \in \bbZ$, where $l \in \bbZ$ records the minimal resource level observed during execution.
The semantics rules are selectively presented in \cref{fig:selectedsemantics}, with the full set of rules provided in the Appendix (\cref{fig:fullsemantics}).
We also write $\bigstepp{C}{\sigma}{p}{\tau}{q}$ to denote $\exists l.\bigstep{C}{\sigma}{p}{l}{\tau}{q}$, and $\bigstepl{C}{\sigma}{p}{\tau}{q}$ to denote $\exists l\le 0.\bigstep{C}{\sigma}{p}{l}{\tau}{q}$.


\subsection{Quantitative Under-approximate Logic}
\label{sec:technical:qual}

We generalize both forward and backward under-approximate logic to quantitative reasoning by extending assertions from Boolean predicates $State \to \{\true, \false\}$ to \emph{resource functions} $State \to \mathbb{Z} \cup \{\pm\infty\}$. These functions map program states to integer-valued resource quantities (with $\pm\infty$ denoting Boolean truth and falsehood). The generalized logics are called \emph{quantitative forward under-approximate} (QFUA) logic and \emph{quantitative backward under-approximate} (QBUA) logic, whose semantics are defined in terms of resource functions.

\begin{definition}
  The semantics of QFUA and QBUA triples are defined as follows:
  \begin{itemize}
    \item QFUA: $\fJudge{P}{C}{Q}$ holds if and only if for all $\tau$ and $q$ such that $q\ge Q(\tau)$, there exists a $\sigma$ and a $p$ such that $p\ge P(\sigma)$ and $\bigstepp{C}{\sigma}{p}{\tau}{q}$. This implies that for all $\tau$ such that $Q(\tau)\in\mathbb{Z}$, there exists an execution of $C$ ending in $\tau$ that cost at least $\inf_\sigma\{P(\sigma)\}-Q(\tau)$ ticks.
    \item QBUA: $\bJudge{P}{C}{Q}$ holds if and only if for all $\sigma$ and $p$ such that $p\le P(\sigma)$, there exists a $\tau$ and a $q$ such that $q\le Q(\tau)$ and $\bigstepp{C}{\sigma}{p}{\tau}{q}$. This implies that for all $\sigma$ such that $P(\sigma)\in\mathbb{Z}$, there exists an execution of $C$ starting from $\sigma$ that cost at least $P(\sigma)-\sup_\tau\{Q(\tau)\}$ ticks.
  \end{itemize}
\end{definition}

\begin{figure}[t!]
\begin{mathpar}\footnotesize
  P\curlywedge Q\coloneqq\lambda\sigma.\min\{P(\sigma),Q(\sigma)\}
  \hva\and
  P\curlyvee Q\coloneqq\lambda\sigma.\max\{P(\sigma),Q(\sigma)\}
  \hva\and
  \Sup x.P\coloneqq\lambda\sigma.\sup_v\{P(\sigma[x\mapsto v])\}
  \hva\and
  \Inf x.P\coloneqq\lambda\sigma.\inf_v\{P(\sigma[x\mapsto v])\}
  \hva\and
  P\preceq Q\quad\text{iff}\quad\forall\sigma.P(\sigma)\leq Q(\sigma)
  \hva\and
  [B]\coloneqq\lambda\sigma.\begin{cases}+\infty&\text{if }B(\sigma)=\true\\-\infty&\text{otherwise}\end{cases}
\end{mathpar}
\caption{Resource Function Operators}
\label{fig:operators}
\end{figure}

To formulate quantitative verification rules, we define operators on resource functions as shown in \cref{fig:operators}. The $\curlyvee$ and $\curlywedge$ operators compute pointwise maximum and minimum of two functions respectively, while $\Sup x.P$ and $\Inf x.P$---adapted from Batz et al.~\cite{POPL:BKK21}'s work and Zhang and Kaminski~\cite{OOPSLA:ZK22}'s work---capture external values over variable assignments. The refinement operator $\preceq$ establishes pointwise ordering between functions, and $[B]$ converts Boolean predicates by mapping truth values to infinite bounds.

\begin{table}[t!]
\centering
\caption{Correspondence between Boolean and Quantitative Operators}
\label{fig:correspondence}

\vspace{0.5em}

\begin{tabular}{l@{\hspace{2em}}c@{\hspace{2em}}c@{\hspace{2em}}c}
\toprule
&Boolean&QFUA&QBUA\\
\midrule
Conjunction&$P\land Q$&$P\curlyvee Q$&$P\curlywedge Q$\\
Disjunction&$P\lor Q$&$P\curlywedge Q$&$P\curlyvee Q$\\
Existential&$\exists x.P$&$\Inf x.P$&$\Sup x.P$\\
Implication&$P\Rightarrow Q$&$P\preceq Q$&$Q\preceq P$\\
Boolean Predicate&$B$&$[\neg B]$&$[B]$\\
\bottomrule
\end{tabular}
\end{table}

These operators extend Boolean logic to integer-valued resource analysis. QFUA and QBUA triples use different interpretations of logical operators due to their opposite inequality directions in their semantics.
To illustrate this generalization, consider the case of conjunction: in Boolean logic, $P \land Q$ means both predicates must hold. For QFUA triples, this translates to $P \curlyvee Q$, as satisfying both $p \geq P(\sigma)$ and $p \geq Q(\sigma)$ is equivalent to $p \geq \max\{P(\sigma), Q(\sigma)\}$. Conversely, QBUA triples use $P \curlywedge Q$, because $p \leq P(\sigma)$ and $p \leq Q(\sigma)$ is equivalent to $p \leq \min\{P(\sigma), Q(\sigma)\}$. 
Consider also quantitative existential quantification: in QFUA triples, $p \geq \Inf x.P(\sigma)$ guarantees $\exists v.\ p \geq P(\sigma[x \mapsto v])$, because the infimum becomes the minimum in discrete domains. The QBUA triples case follows a similar pattern.
Similar duality extends to other operators (see \cref{fig:correspondence}).
Lastly, the Boolean predicate conversion $[B]$ bridges Boolean and quantitative reasoning. For QFUA triples, $[\neg B]$ assigns $-\infty$ when $B$ is true (because $p \geq -\infty$ always holds), and $+\infty$ otherwise. For QBUA triples, $[B]$ assigns $+\infty$ when $B$ is true (because $p \leq +\infty$ always holds), and $-\infty$ otherwise.

\begin{figure}[t!]
\centering
\begin{mathpar}\footnotesize
  \inferrule[(\textdagger:Skip)]{}{\ajudge{P}{\Skip}{P}}
  \hva\and
  \Blue{\inferrule[(F:Assign)]{}{\fjudge{P}{\Assign{x}{e}}{\Inf x'.P[x'/x]\curlyvee[x\ne e[x'/x]]}}}
  \hva\and
  \Red{\inferrule[(B:Assign)]{}{\bjudge{P}{\Assign{x}{e}}{\Sup x'.P[x'/x]\curlywedge[x=e[x'/x]]}}}
 \hva\and
  \Blue{\inferrule[(F:Assume)]{}{\fjudge{P\curlyvee[\neg B]}{\Assume{B}}{P\curlyvee[\neg B]}}}
  \hva\and
  \Red{\inferrule[(B:Assume)]{}{\bjudge{P\curlywedge[B]}{\Assume{B}}{P\curlywedge[B]}}}
  \hva\and
  \inferrule[(\textdagger:Tick)]{}{\ajudge{P}{\Tick{e}}{P-e}}
  \hva\and
  \inferrule[(\textdagger:Seq)]{\ajudge{P}{C_1}{R} \\ \ajudge{R}{C_2}{Q}}{\ajudge{P}{\Seq{C_1}{C_2}}{Q}}
  \hva\and
  \inferrule[(\textdagger:ChoiceL)]{\ajudge{P}{C_1}{Q}}{\ajudge{P}{\Choice{C_1}{C_2}}{Q}}
  \hva\and
  \inferrule[(\textdagger:ChoiceR)]{\ajudge{P}{C_2}{Q}}{\ajudge{P}{\Choice{C_1}{C_2}}{Q}}
  \hva\and
  \inferrule[(\textdagger:Loop)]{\forall n<k.\ajudge{P(n)}{C}{P(n+1)}}{\ajudge{P(0)}{\Loop{C}}{P(k)}}
  \hva\and
  \Blue{\inferrule[(F:Local)]{\fjudge{P}{C}{Q}}{\fjudge{\Inf x.P}{\Local{x}{C}}{\Inf x.Q}}}
  \hva\and
  \Red{\inferrule[(B:Local)]{\bjudge{P}{C}{Q}}{\bjudge{\Sup x.P}{\Local{x}{C}}{\Sup x.Q}}}
  \hva\and
  \Blue{\inferrule[(F:Disj)]{\forall i\in I.\fjudge{P_i}{C}{Q_i}}{\fjudge{\bigcurlywedge_{i\in I}P_i}{C}{\bigcurlywedge_{i\in I}Q_i}}}
  \hva\and
  \Red{\inferrule[(B:Disj)]{\forall i\in I.\bjudge{P_i}{C}{Q_i}}{\bjudge{\bigcurlyvee_{i\in I}P_i}{C}{\bigcurlyvee_{i\in I}Q_i}}}
  \hva\and
  \Blue{\inferrule[(F:Constancy)]{\fjudge{P}{C}{Q}\\\Fv(B)\cap\Mod(C)=\emptyset}{\fjudge{P\curlyvee[B]}{C}{Q\curlyvee[B]}}}
  \hva\and
  \Red{\inferrule[(B:Constancy)]{\bjudge{P}{C}{Q}\\\Fv(B)\cap\Mod(C)=\emptyset}{\bjudge{P\curlywedge[B]}{C}{Q\curlywedge[B]}}}
  \hva\and
  \inferrule[(\textdagger:Relax)]{\ajudge{P}{C}{Q}\\\Fv(F)\cap\Mod(C)=\emptyset}{\ajudge{P+F}{C}{Q+F}}
  \hva\and
  \inferrule[(\textdagger:Cons)]{P\preceq P'\\\ajudge{P'}{C}{Q'}\\Q'\preceq Q}{\ajudge{P}{C}{Q}}
  \hva\and
  \inferrule[(\textdagger:Subst)]{\ajudge{P}{C}{Q}\\y\notin\Fv(P)\cup\Fv(Q)\cup\Fv(C)}{\ajudge{P[y/x]}{C[y/x]}{Q[y/x]}}
\end{mathpar}
\caption{Proof Rules for QFUA and QBUA Triples}
\label{fig:rules}
\end{figure}

We present the proof rules for QFUA and QBUA triples in \cref{fig:rules}. Rules marked with \textdagger\ apply to both QFUA and QBUA triples, while those in \Blue{blue} are specific to QFUA triples, and those in \Red{red} are specific to QBUA triples. Except for the \textsc{(\textdagger:Tick)} and the \textsc{(\textdagger:Relax)} rule, all other rules are generalized from the corresponding Boolean cases, with the operators replaced by their quantitative counterparts.

Concretely, the \textsc{(\textdagger:Skip)} rule trivially preserves quantitative assertions.
The \textsc{(F:Assign)} and \textsc{(B:Assign)} rules for assignment statements are generalized from the standard Floyd assignment rule $\ajudge{P}{\Assign{x}{e}}{\exists x'.P[x'/x] \land x=e[x'/x]}$, which is sound for both QFUA and QBUA triples.
The \textsc{(F:Assume)} and \textsc{(B:Assume)} rules for assume statements are generalized from the assume rule $\ajudge{P\land B}{\Assume{B}}{P\land B}$, which filters out states that do not satisfy the predicate $B$.
The \textsc{(\textdagger:Tick)} rule models the effect of a tick statement by subtracting the tick cost from the resource function.

The \textsc{(\textdagger:Seq)} rule models the sequential composition of two programs, where the postcondition of the first program coincides with the precondition of the second program.
The \textsc{(\textdagger:ChoiceL)} and \textsc{(\textdagger:ChoiceR)} rules approximate non-deterministic choice by considering the behavior of one of the branches.
The \textsc{(\textdagger:Loop)} rule uses an indexed resource function $P(n)$ as a subvariant to approximate loop behavior: if $C$ transforms $P(n)$ to $P(n+1)$ for all $n < k$, then $P(0)$ can transform to $P(k)$ via $\Loop{C}$.
The \textsc{(F:Local)} and \textsc{(B:Local)} rules use the $\Inf$ and $\Sup$ quantifiers, respectively, to erase the effect of local variables, which generalizes the existential quantifier in the Boolean case.

The \textsc{(F:Disj)} and \textsc{(B:Disj)} rules merge multiple triples into a single one by the quantitative corresponding of the disjunction operator.
The \textsc{(F:Constancy)} and \textsc{(B:Constancy)} rule preserves unmodified Boolean constraints — when program $C$ doesn't modify $B$'s free variables, both pre- and post- predicates can be simultaneously conjuncted with $B$ through the quantitative operators.
The \textsc{(\textdagger:Relax)} rule allows additive adjustments to resource functions: if $F$'s variables remain unchanged by $C$, both assertion bounds can be shifted by $F$ through arithmetic addition.
The (\textdagger:Cons) rule shares identical syntax for both QFUA and QBUA. However, the implication direction is different: QFUA rules allow weakening of preconditions and strengthening of postconditions, while QBUA rules allow the opposite.

We next show that the QBUA and QFUA variants are sound and complete with respect to the corresponding semantics.

\begin{theorem}[Soundness of QFUA and QBUA triples]\label{thm:QUA-soundness}
  For all $P,Q,C$:
  \begin{itemize}
    \item If $\fjudge{P}{C}{Q}$ is derivable, then $\fJudge{P}{C}{Q}$ holds.
    \item If $\bjudge{P}{C}{Q}$ is derivable, then $\bJudge{P}{C}{Q}$ holds.
  \end{itemize}
\end{theorem}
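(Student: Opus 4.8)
The plan is to proceed by rule induction on the derivation of the triple, carried out simultaneously for the QFUA and QBUA interpretations. Because the \textdagger-rules are shared between the two logics but their semantics carry opposite inequality directions, each such rule must be discharged twice: once by unfolding the QFUA definition (quantifying over post-states $\tau, q$ with $q \ge Q(\tau)$ and producing a witness pre-state) and once by unfolding the QBUA definition (quantifying over pre-states $\sigma, p$ with $p \le P(\sigma)$ and producing a witness post-state). For \textsc{(\textdagger:Skip)} the witness is the given state unchanged. For \textsc{(\textdagger:Tick)} I would again take the witness state equal to the given state and read off the resource level directly from \textsc{(BS:Tick)}: in the QFUA case set $p = q + \eval{e}{\tau}$ and check $p \ge P(\tau)$ from $q \ge P(\tau) - \eval{e}{\tau}$, dually for QBUA. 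The \textsc{(F/B:Assume)} rules follow from the semantics of $\Assume{B}$, using that $(P\curlyvee[\neg B])(\tau) = +\infty$ (resp. $(P\curlywedge[B])(\tau) = -\infty$) exactly when $B$ fails on $\tau$, which renders the corresponding obligation vacuous.

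Before attacking the structural rules I would establish two semantic lemmas about the big-step relation. First, a resource-shift lemma: if $\bigstep{C}{\sigma}{p}{l}{\tau}{q}$ then $\bigstep{C}{\sigma}{p+c}{l+c}{\tau}{q+c}$ for every $c \in \bbZ$, proved by a routine induction on the semantic derivation since every rule tracks resources additively. Second, a frame lemma: whenever $\Fv(F)\cap\Mod(C) = \emptyset$ and $\bigstepp{C}{\sigma}{p}{\tau}{q}$, the states $\sigma$ and $\tau$ agree on $\Fv(F)$, so $F(\sigma) = F(\tau)$; this follows by induction on the semantics, tracking that $C$ writes only to variables in $\Mod(C)$. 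The shift lemma is precisely what \textsc{(\textdagger:Relax)} needs: given a witness execution for $\ajudge{P}{C}{Q}$, I shift it by $c = F(\sigma) = F(\tau)$ (which is finite on the states where the shifted bound $P+F$ or $Q+F$ is finite, the infinite cases being vacuous) to realize the $P+F$/$Q+F$ obligations. The frame lemma likewise drives \textsc{(F/B:Constancy)}, since the truth value of $B$ is preserved along the execution.

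The remaining compositional rules proceed as expected. For \textsc{(\textdagger:Seq)} I would chain the two witness executions through the shared intermediate assertion $R$ using the induction hypotheses, threading backward from $\tau$ to $\sigma$ through $R$ in the QBUA case and forward in the QFUA case. The \textsc{(\textdagger:ChoiceL/R)} rules select the corresponding branch of the non-deterministic-choice semantics. The \textsc{(F/B:Disj)} rules exploit that $q \ge \min_{i} Q_i(\tau)$ (resp. $q \le \max_i Q_i(\tau)$) selects an index $i$ whose premise supplies a witness already satisfying the merged pre-bound. The \textsc{(\textdagger:Cons)} rule is immediate from transitivity of $\le/\ge$ against the pointwise ordering $\preceq$. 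The \textsc{(F/B:Assign)} rules require unfolding the generalized Floyd rule: I would verify that a witness of the form $\sigma = \tau[x\mapsto v]$ for a suitable old value $v$ realizes the $\Inf$/$\Sup$ together with the guard $[x \ne e[x'/x]]$ / $[x = e[x'/x]]$, crucially using that infima and suprema over a set of integers bounded below (resp. above) are attained, so the quantifier can be witnessed concretely. The \textsc{(F/B:Local)} and \textsc{(\textdagger:Subst)} rules follow from the scoping semantics of $\Local{x}{C}$ and a standard renaming argument, respectively.

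I expect \textsc{(\textdagger:Loop)} to be the main obstacle. Here I would fix the unrolling count $k$ and run an \emph{inner} induction on $k$ that composes the $k$ premise executions $\ajudge{P(n)}{C}{P(n+1)}$ (for $n < k$) into a single execution of $\Loop{C}$. In the QBUA direction, starting from $\sigma, p$ with $p \le P(0)(\sigma)$, each step invokes the $n$-th hypothesis to extend the partial run from a state bounded by $P(n)$ to one bounded by $P(n+1)$, and after $k$ iterations the star semantics closes off with some $\tau, q$ satisfying $q \le P(k)(\tau)$; the QFUA direction is dual, building the witness backward from $\tau, q$ with $q \ge P(k)(\tau)$. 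The delicate points are (i) correctly invoking the multi-step/Kleene-star semantics of $\Loop{C}$, including the $k = 0$ empty-iteration case, and (ii) keeping the two opposite inequality directions and the attainment of each intermediate witness straight across the nested induction. It is this bookkeeping, rather than any single deep idea, that makes the loop case the heaviest part of the argument.
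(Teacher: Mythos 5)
Your proposal is correct and follows essentially the same route as the paper's proof: rule induction on the derivation, supported by the same two auxiliary facts (the additive resource-shift lemma for \textsc{(\textdagger:Relax)} and the frame/constancy property for \textsc{(F/B:Constancy)}), with identical witness constructions in each case and the same inner induction over the unrolling count for \textsc{(\textdagger:Loop)}. No gaps.
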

\begin{proof}
  By induction on the derivation of the triple. See the Appendix (\cref{thm:qfua-sound,thm:qbua-sound}).
\end{proof}

\begin{theorem}[Completeness of QFUA and QBUA triples]\label{thm:QUA-completeness}
  For all $P,Q,C$ such that $P$ and $Q$ are finitely supported:
  \begin{itemize}
    \item If $\fJudge{P}{C}{Q}$ holds, then $\fjudge{P}{C}{Q}$ is derivable.
    \item If $\bJudge{P}{C}{Q}$ holds, then $\bjudge{P}{C}{Q}$ is derivable.
  \end{itemize}
\end{theorem}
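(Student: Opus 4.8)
The plan is to prove completeness by the usual route for under-approximate logics: for each command $C$ I exhibit a single \emph{canonical} derivable triple that is semantically the most precise one, and then recover the target triple by one application of \textsc{(\textdagger:Cons)}. For QFUA the canonical object is the quantitative strongest postcondition
\[
  \post{C}{P}(\tau)\;=\;\inf\bigl\{\,q\;:\;\exists\sigma,\ p\ge P(\sigma).\ \bigstepp{C}{\sigma}{p}{\tau}{q}\,\bigr\},
\]
and for QBUA it is the quantitative weakest precondition
\[
  \pre{C}{Q}(\sigma)\;=\;\sup\bigl\{\,p\;:\;\exists\tau,\ q\le Q(\tau).\ \bigstepp{C}{\sigma}{p}{\tau}{q}\,\bigr\}.
\]
The first preliminary fact I would establish is an up-/down-closure property of the resource semantics: since only $\Tick{e}$ touches the counter and no command branches on it, running the same derivation with a larger initial resource produces a correspondingly larger residual (and symmetrically for smaller). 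Hence, at a fixed $\tau$, the set of reachable residuals is up-closed, and at a fixed $\sigma$ the set of admissible initial resources is down-closed; this makes the displayed $\inf$/$\sup$ the exact thresholds rather than loose bounds.

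From this, the \textbf{characterization lemma} is essentially definitional unfolding: $\fJudge{P}{C}{Q}$ holds iff $\post{C}{P}\preceq Q$, and $\bJudge{P}{C}{Q}$ holds iff $P\preceq\pre{C}{Q}$. In the forward case, validity asks that every $q\ge Q(\tau)$ be a reachable residual; by up-closure the reachable residuals at $\tau$ form exactly $[\post{C}{P}(\tau),\infty)$, so this holds iff $Q(\tau)\ge\post{C}{P}(\tau)$ pointwise. The backward case is dual. I would also record the compositionality identities $\post{\Seq{C_1}{C_2}}{P}=\post{C_2}{\post{C_1}{P}}$ and $\post{C^{n+1}}{P}=\post{C}{\post{C^{n}}{P}}$ (and their $\mathrm{wp}$ analogues), which drop straight out of \textsc{(BS:Seq)} and the inductive definition of $\Loop{C}$ as an unbounded choice of iterates.

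The heart of the argument is the \textbf{canonical derivability lemma}: $\fjudge{P}{C}{\post{C}{P}}$ and $\bjudge{\pre{C}{Q}}{C}{Q}$ are derivable, proved by structural induction on $C$. The base cases are forced to match the rules — e.g.\ I would check that $\Inf x'.P[x'/x]\curlyvee[x\ne e[x'/x]]$ computes exactly $\post{\Assign{x}{e}}{P}$, and similarly for \textsc{(F/B:Assume)} and \textsc{(\textdagger:Tick)} — while \textsc{(\textdagger:Seq)} and \textsc{(\textdagger:ChoiceL/R)} follow from the compositionality identities. The loop case is where I expect the real difficulty. For QFUA I would instantiate \textsc{(\textdagger:Loop)} with the subvariant $P(j)=\post{C^{j}}{P}$ (its premises $\fjudge{\post{C^{j}}{P}}{C}{\post{C^{j+1}}{P}}$ are exactly the induction hypothesis for $C$ composed with the iterate identity), obtaining $\fjudge{P}{\Loop{C}}{\post{C^{k}}{P}}$ for every $k$; then \textsc{(F:Disj)} merges these over all $k$ into $\fjudge{P}{\Loop{C}}{\bigcurlywedge_{k}\post{C^{k}}{P}}$, and one verifies $\bigcurlywedge_{k}\post{C^{k}}{P}=\post{\Loop{C}}{P}$ because the reachable residuals of $\Loop C$ are the union over iterates and union corresponds to $\curlywedge$ (pointwise $\inf$) in the forward reading. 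The QBUA loop is dual, merging $\bjudge{\pre{C^{k}}{Q}}{\Loop{C}}{Q}$ via \textsc{(B:Disj)} using that backward union corresponds to $\curlyvee$.

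Finally I would close the theorem: given $\fJudge{P}{C}{Q}$, the characterization lemma yields $\post{C}{P}\preceq Q$, so \textsc{(F:Cons)} applied to the canonical triple $\fjudge{P}{C}{\post{C}{P}}$ (with $P'=P$, $Q'=\post{C}{P}$) gives $\fjudge{P}{C}{Q}$; the QBUA direction uses $P\preceq\pre{C}{Q}$ and \textsc{(B:Cons)} symmetrically. The role of the finite-support hypothesis, and where I expect the main technical care beyond the loop bookkeeping, is twofold: it guarantees that the $\inf$/$\sup$ defining $\post{C}{P}$ and $\pre{C}{Q}$ (and the $\Inf x'/\Sup x'$ in the assignment rules) are \emph{attained} by an actual state, which the QFUA/QBUA semantics need since they quantify over an existing $\sigma$ respectively produce an existing $\tau$; and it is preserved by the $\mathrm{sp}/\mathrm{wp}$ transformers, so the class of assertions in the inductive hypotheses stays well-behaved. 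Verifying attainment and preservation of finite support across the assignment and loop constructs — rather than the combinatorics of the unrolling itself — is the step I anticipate being the most delicate.
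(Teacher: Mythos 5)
Your overall strategy coincides with the paper's: structural induction on $C$, with the quantitative transformers $\post{C}{P}$ and $\pre{C}{Q}$ serving as canonical intermediate assertions, the \textsc{Disj} rules to merge the branches of a choice and the iterates of a loop, and \textsc{(\textdagger:Cons)} to reach the target triple. The paper does the \textsc{Cons} massaging inline in each case rather than through a single canonical-triple lemma, but that is cosmetic; your up-/down-closure observation is \cref{lem:relax}, and your loop decomposition $\bigcurlywedge_{k}\post{C^{k}}{P}$ matches the paper's splitting of $Q$ into $\bigcurlywedge_{n\ge 0}Q_n$ via \cref{lem:loop-unroll}.

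There is, however, one genuine gap: you never treat $\Local{x}{C}$, and it is the one case where the plan as stated does not go through. Applying \textsc{(F:Local)} to the induction hypothesis $\fjudge{P}{C}{\post{C}{P}}$ yields $\fjudge{\Inf x.P}{\Local{x}{C}}{\Inf x.\post{C}{P}}$; since $\Inf x.P\preceq P$ but not conversely, \textsc{(F:Cons)} cannot recover the precondition $P$ when $P$ mentions $x$, and $\Inf x.\post{C}{P}$ is not $\post{\Local{x}{C}}{P}$ in that situation, because the semantics havocs $x$ on entry and restores its outer value on exit while $P$ must be evaluated at the outer pre-state. The paper resolves this by renaming the bound variable to a fresh $y\notin\Fv(P)\cup\Fv(Q)\cup\Fv(C)$ (so that $\Inf y.P=P$ and $\Sup y.Q=Q$), which in turn forces a strengthened induction hypothesis---$C[y/x]$ is not a structural subterm of $\Local{x}{C}$---together with an auxiliary rule for the semantic equivalence of $\Local{y}{C[y/x]}$ and $\Local{x}{C}$. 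This is also where the finite-support hypothesis is actually used, namely to guarantee that a fresh $y$ exists; attainment of the $\inf$/$\sup$ already follows from up-/down-closure and the discreteness of $\mathbb{Z}$, so your stated motivation for that hypothesis is slightly off. The remaining cases of your proposal are correct and match the paper's proof.
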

\begin{proof}
  By induction on the structure of $C$. See the Appendix (\cref{thm:qfua-complete,thm:qbua-complete}).
\end{proof}

\subsection{Under-approximating High-water Marks}
\label{sec:technical:hwm}

\begin{figure}[t!]
\centering
\begin{mathpar}\footnotesize
  \inferrule[(\Bd:Skip)]{P\preceq 0}{\djudge{P}{\Skip}{P}}
  \quad 
  \inferrule[(\Bd:Assign)]{P\preceq 0}{\djudge{P}{\Assign{x}{e}}{\Sup x'.P[x'/x]\curlywedge[x=e[x'/x]]}}
  \quad
  \inferrule[(\Bd:Assume)]{P\preceq 0}{\djudge{P\curlywedge[B]}{\Assume{B}}{P\curlywedge[B]}}
  \hva\and
  \inferrule[(\Bd:Tick)]{P\curlywedge P-e\preceq 0}{\djudge{P}{\Tick{e}}{P-e}}
  \quad
  \inferrule[(\Bd:SeqL)]{\djudge{P}{C_1}{R} \\ \bjudge{R}{C_2}{Q}}{\djudge{P}{\Seq{C_1}{C_2}}{Q}}
  \quad
  \inferrule[(\Bd:SeqR)]{\bjudge{P}{C_1}{R} \\ \djudge{R}{C_2}{Q}}{\djudge{P}{\Seq{C_1}{C_2}}{Q}}
  \hva\and
  \inferrule[(\Bd:LoopZero)]{P\preceq 0}{\djudge{P}{C^\star}{P}}
  \quad
  \inferrule[(\Bd:Loop)]{\forall n<k.\bjudge{P(n)}{C}{P(n+1)} \\ \exists m<k.\djudge{P(m)}{C}{P(m+1)}}{\djudge{P(0)}{C^\star}{P(k)}}
  \\
  \inferrule[(\Bd:Disj)]{\forall i\in I.\djudge{P_i}{C}{Q_i}}{\djudge{\bigcurlyvee_{i\in I}P_i}{C}{\bigcurlyvee_{i\in I}Q_i}}
  \hva\and
  \inferrule[(\Bd:Relax)]{\djudge{P}{C}{Q}\\\Fv(F)\cap\Mod(C)=\emptyset\\F\preceq 0}{\djudge{P+F}{C}{Q+F}}
  \hva\and
\end{mathpar}
\caption{Selected Proof Rules for \QBUAd Triples}
\label{fig:QBUAd-rules}
\end{figure}

To under-approximate the worst-case high-water mark,  we refine the QBUA semantics to track minimal resource levels during execution. We introduce \QBUAd triples, whose semantics are defined as follows.

\begin{definition}
  A \QBUAd triple $\dJudge{P}{C}{Q}$ holds if and only if for all $\sigma$ and $p$ such that $p\le P(\sigma)$, there exists a $\tau$ and a $q$ such that $q\le Q(\tau)$ and $\bigstepl{C}{\sigma}{p}{\tau}{q}$.
  This implies that for all $\sigma$ such that $P(\sigma)\in\mathbb{Z}$, there exists an execution of $C$ starting from $\sigma$ that costs at least $P(\sigma)-\sup_{\tau}\{Q(\tau)\}$ ticks, and has a high-water mark of at least $P(\sigma)$.
\end{definition}

Before presenting the proof rules, we explain why the forward variant is not defined.
A natural idea is to refine the QFUA semantics by requiring that the execution must pass through a point where the resource becomes non-positive.
Formally, one could define $\vDash_{\mathsf{F}}^\Diamond\left[P\right]C\left[Q\right]$ to mean that for all $\tau$ and $q \ge Q(\tau)$, there exists $\sigma$ and $p \ge P(\sigma)$ such that $\bigstepl{C}{\sigma}{p}{\tau}{q}$.
However, this definition leads to unintuitive results.
Consider the program $C = \Tick{10};\Tick{(-5)}$.
Intuitively, we would expect $\vDash_{\mathsf{F}}^\Diamond\left[10\right]C\left[5\right]$ to hold.
But if we take $q = 100 \ge 5$, there is no $p$ such that executing $C$ from $p$
leads to $q$ while the potential ever drops to~$\le 0$,
so the triple fails to hold.
This fragility makes the forward definition hard to reason about, whereas the backward formulation (\QBUAd) remains robust and compositional.
For this reason, we only develop the backward variant for reasoning about
high-water marks.

Selected proof rules for \QBUAd triples are shown in \cref{fig:QBUAd-rules}, which are similar to the QBUA rules but with additional constraints.
The full version can be found in the Appendix (\cref{fig:fullqbuad}).
The \textsc{(\Bd:Skip)}, \textsc{(\Bd:Assign)}, and \textsc{(\Bd:Assume)} rules assume that the resource function $P$ is non-positive.
The \textsc{(\Bd:Tick)} rule requires $P\curlywedge P-e\preceq 0$, which ensures that either before or after the tick operation, the resource count is non-positive.
For the sequential composition, if an execution of $\Seq{C_1}{C_2}$ makes the resource counter non-positive at some point, then the point will occur either during the execution of $C_1$ or $C_2$. This is captured by the \textsc{(\Bd:SeqL)} and \textsc{(\Bd:SeqR)} rules.
The loop is similar: if at some point during the execution of $\Loop{C}$ the resource is non-positive, then either $\Loop{C}$ degenerates into a $\Skip$ and the resource is non-positive from the start, or this point will appear during the execution of $C$ at some step. The former case is captured by the \textsc{(\Bd:LoopZero)} rule, while the latter is captured by the subvariant-base \textsc{(\Bd:Loop)} rule, with assuming $\exists m<k.\djudge{P(m)}{C}{P(m+1)}$.
The \textsc{(\Bd:Disj)} rule is similar to the \textsc{(B:Disj)} rule.
Finally, the \textsc{(\Bd:Relax)} rule allows only non-positive additive adjustments to resource functions, which ensures that the point of non-positivity resource is preserved.

\QBUAd is also sound and complete, as shown in the following theorems.

\begin{theorem}[Soundness of \QBUAd triples]
  For all $P,Q,C$, if $\djudge{P}{C}{Q}$ is derivable, then $\dJudge{P}{C}{Q}$ holds.
\end{theorem}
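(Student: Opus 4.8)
The plan is to proceed by induction on the derivation of $\djudge{P}{C}{Q}$, establishing the semantic statement $\dJudge{P}{C}{Q}$ in each case. Because several \QBUAd rules have premises that are ordinary QBUA judgements---namely \textsc{(\Bd:SeqL)}, \textsc{(\Bd:SeqR)}, and \textsc{(\Bd:Loop)}---the argument will lean on the already-established soundness of QBUA triples (\cref{thm:QUA-soundness}) as an auxiliary lemma. Whenever a sub-derivation is a QBUA judgement I invoke its soundness to obtain a terminating execution $\bigstepp{C}{\sigma}{p}{\tau}{q}$ with no constraint on its minimal level, whereas whenever a sub-derivation is itself a \QBUAd judgement the inductive hypothesis yields the stronger $\bigstepl{C}{\sigma}{p}{\tau}{q}$, whose recorded minimal resource level is non-positive.

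For the atomic rules the side conditions are exactly what is needed. In \textsc{(\Bd:Skip)}, \textsc{(\Bd:Assign)}, and \textsc{(\Bd:Assume)} the side condition $P \preceq 0$ together with $p \le P(\sigma)$ forces the starting resource $p$ to be non-positive; since these commands leave the resource counter unchanged, the minimal level recorded along the execution is $p \le 0$. For \textsc{(\Bd:Tick)}, rule \textsc{(BS:Tick)} records the minimal level $\min\{p, p - \eval{e}{\sigma}\}$, and from $p \le P(\sigma)$ together with the side condition $P \curlywedge (P - e) \preceq 0$ I obtain $\min\{p, p - \eval{e}{\sigma}\} \le \min\{P(\sigma), P(\sigma) - \eval{e}{\sigma}\} \le 0$, so the witness level is non-positive.

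The central cases are sequencing and looping, where the non-positivity witness produced by one sub-execution must survive composition. For \textsc{(\Bd:SeqL)}, the inductive hypothesis on $\djudge{P}{C_1}{R}$ yields an execution of $C_1$ with minimal level $l_1 \le 0$ reaching an intermediate state whose residual resource is bounded by $R$; QBUA soundness applied to $\bjudge{R}{C_2}{Q}$ extends this to an execution of $C_2$ with some minimal level $l_2$. By rule \textsc{(BS:Seq)} the composite minimal level is $\min\{l_1, l_2\} \le l_1 \le 0$, which is precisely the witness required; \textsc{(\Bd:SeqR)} is symmetric. For \textsc{(\Bd:Loop)}, I chain the $k$ iterations: each iteration $n < k$ is instantiated via QBUA soundness of $\bjudge{P(n)}{C}{P(n+1)}$, whose residual bound $P(n+1)$ matches the precondition of the next iteration, while the distinguished iteration $m$ uses the inductive hypothesis on the \QBUAd premise to contribute a local minimal level $\le 0$. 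Since the loop semantics accumulates per-iteration minimal levels under $\min$, the overall minimal level is bounded above by the $m$-th contribution and hence non-positive; \textsc{(\Bd:LoopZero)} is the degenerate zero-iteration instance, handled as in the atomic cases.

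Finally, for \textsc{(\Bd:Relax)} I shift the starting resource: given $p \le (P+F)(\sigma)$, I set $p' = p - F(\sigma) \le P(\sigma)$ and apply the inductive hypothesis to obtain an execution from $(\sigma, p')$ with minimal level $l' \le 0$. Since $\Fv(F) \cap \Mod(C) = \emptyset$, the value $F(\sigma)$ is constant along the execution, so adding it back uniformly shifts the final resource and every recorded level by $F(\sigma)$; the resulting minimal level is $l' + F(\sigma)$, which the side condition $F \preceq 0$ keeps non-positive. This is exactly the role of the $F \preceq 0$ constraint. The \textsc{(\Bd:Disj)} case follows the pattern of \textsc{(B:Disj)}, selecting for each $\sigma$ the index $i \in I$ achieving the pointwise maximum $\bigcurlyvee_{i\in I} P_i$ and reusing that branch's witness. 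I expect the main obstacle to be the bookkeeping of minimal-level accumulation across composite executions: the soundness argument depends crucially on the big-step semantics recording levels under $\min$, so that a single non-positive contribution dominates the whole execution, and on the $F \preceq 0$ side condition guarding against a positive shift lifting the minimal level back above zero.
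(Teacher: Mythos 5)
Your proposal is correct and follows essentially the same route as the paper's proof: induction on the derivation, invoking QBUA soundness for the plain QBUA premises of \textsc{(\Bd:SeqL)}, \textsc{(\Bd:SeqR)}, and \textsc{(\Bd:Loop)}, using the inductive hypothesis for the $\Diamond$ premises, and observing that the big-step semantics combines minimal levels with $\min$ so a single non-positive contribution (together with $F\preceq 0$ in \textsc{(\Bd:Relax)}) survives composition. The paper merely spells out the loop case as an explicit two-stage induction (building the tail from iteration $m$ to $k$, then prepending iterations $m$ down to $0$) and also writes out the routine cases you leave implicit (choice, local, constancy, consequence, substitution), none of which require new ideas.
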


\begin{proof}
  By induction on the derivation of the triple. Details are in the Appendix (\cref{thm:qbuad-sound}).
\end{proof}

\begin{theorem}[Completeness of \QBUAd triples]
  For all $P,Q,C$ such that $P$ and $Q$ are finitely supported, if $\dJudge{P}{C}{Q}$ holds, then $\djudge{P}{C}{Q}$ is derivable.
\end{theorem}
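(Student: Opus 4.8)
The plan is to prove completeness by induction on the structure of $C$, mirroring the proof of QBUA completeness (\cref{thm:QUA-completeness}) but additionally tracking \emph{where} along an execution the resource counter first becomes non-positive. Recall that $\dJudge{P}{C}{Q}$ strengthens $\bJudge{P}{C}{Q}$ with the requirement that each witnessing execution attains a minimal level $l \le 0$. Thus every case must produce not only a reachability witness (which QBUA completeness already supplies) but also a derivation that \emph{locates} the non-positive point through the corresponding \Bd-rule. Two auxiliary facts drive the argument. First, QBUA is complete, so any purely reachability-based sub-obligation $\bJudge{R}{C'}{Q}$ can be discharged by \cref{thm:QUA-completeness} once $R$ is shown finitely supported. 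Second, the big-step semantics is \emph{shift-equivariant} in the resource dimension: if $\bigstep{C}{\sigma}{p}{l}{\tau}{q}$, then $\bigstep{C}{\sigma}{p-d}{l-d}{\tau}{q-d}$ for every $d \ge 0$, since every rule (e.g.\ \textsc{(BS:Tick)}) shifts all resource levels uniformly. Consequently, to classify the behaviour of all pre-potentials $p \le P(\sigma)$ it suffices to inspect the single maximal value $p = P(\sigma)$ (when finite): lowering the starting resource only pushes $l$, $q$, and the intermediate levels further down, so a witness with $l \le 0$ and $q \le Q(\tau)$ at the top transfers to every smaller $p$.

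For the base cases ($\Skip$, $\Assign{x}{e}$, $\Assume{B}$, $\Tick{e}$) the non-positivity requirement collapses to a side condition on $P$. For instance, the only execution of $\Tick{e}$ has minimal level $\min\{p, p-\eval{e}{\sigma}\}$, so $\dJudge{P}{\Tick{e}}{P-e}$ forces $P \curlywedge (P-e) \preceq 0$ at every finite point of $P$, which is exactly the premise of \textsc{(\Bd:Tick)}; the other atoms force $P \preceq 0$. In each case I would first derive the strongest-postcondition form via the matching \Bd atomic rule and then weaken to $Q$ using the \Bd consequence rule (monotone in the same direction as QBUA), after checking that the semantic assumption guarantees the required $\preceq$.

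The compositional cases are handled by partitioning the precondition according to the location of the non-positive point and recombining with \textsc{(\Bd:Disj)}. For $\Seq{C_1}{C_2}$, each witnessing execution splits as $\bigstep{C_1}{\sigma}{p}{l_1}{\rho}{r}$ followed by $\bigstep{C_2}{\rho}{r}{l_2}{\tau}{q}$ with $\min\{l_1,l_2\}\le 0$, so $l_1 \le 0$ or $l_2 \le 0$. Using shift-equivariance I would classify each $\sigma$ (at $p=P(\sigma)$) into $P_L$, where the maximal witness attains non-positivity already in $C_1$, and $P_R$, where it is attained in $C_2$, so that $P = P_L \curlyvee P_R$. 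For the $P_L$ part I would take $R$ to be the strongest $C_1$-postcondition of $P_L$, obtain $\djudge{P_L}{C_1}{R}$ (non-positivity inside $C_1$) from the induction hypothesis and $\bjudge{R}{C_2}{Q}$ (pure reachability) from \cref{thm:QUA-completeness}, and conclude by \textsc{(\Bd:SeqL)}; the $P_R$ part is symmetric via \textsc{(\Bd:SeqR)}, and \textsc{(\Bd:Disj)} merges the two. The $\Choice{C_1}{C_2}$ case is the same partition by branch, closed with the \Bd choice rules, and the $\Local{x}{C}$ case reduces directly to the induction hypothesis through the $\Sup$-based local rule. For $\Loop{C}$ I would reuse the QBUA subvariant construction---defining $P(n)$ as the strongest postcondition after $n$ iterations and bounding the iteration count $k$ using finite support---and additionally isolate, for each pre-state, the iteration $m<k$ at which the maximal witness first drops to $\le 0$; states needing zero iterations are covered by \textsc{(\Bd:LoopZero)} (where $P \preceq 0$), and the rest by \textsc{(\Bd:Loop)} after the disjunctive split over $m$.

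The main obstacle is the sequencing and loop partition: turning the informal split ``the non-positive point lies in this sub-computation'' into genuine, finitely supported resource functions $P_L, P_R$ (resp.\ a finite family indexed by the iteration $m$) for which the two sub-triples hold \emph{simultaneously}---a \QBUAd triple for the fragment owning the non-positive point and a plain QBUA triple for the remainder. This requires the shift-equivariance reduction to the maximal $p$ (so that classifying at $p=P(\sigma)$ is sound for all smaller $p$), the preservation of finite support under strongest postconditions (so that \cref{thm:QUA-completeness} and the induction hypothesis apply), and, in the loop case, a uniform bound on $k$ together with the extra bookkeeping of the distinguished iteration $m$. Once these well-definedness conditions are established, every case closes by a direct application of the corresponding \Bd-rule.
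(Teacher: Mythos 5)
Your proposal is correct and follows essentially the same route as the paper's proof: structural induction on $C$, base cases reduced to the side conditions $P\preceq 0$ (resp.\ $P\curlywedge P-e\preceq 0$), and the compositional cases handled by partitioning the precondition according to where the non-positive point occurs, discharging the reachability half via QBUA completeness and the located half via the induction hypothesis, then recombining with \textsc{(\Bd:Disj)} --- your ``shift-equivariance'' is exactly the paper's \cref{lem:relax}, which underlies its sup-based definitions of the partition. The only (harmless) deviation is that the paper needs no uniform bound on the iteration count $k$: it simply takes the infinite disjunction $\bigcurlyvee_{n\ge 0}P_n$ and, within each $P_n$, the finite disjunction over the distinguished iteration $m<n$, both of which \textsc{(\Bd:Disj)} accommodates directly.
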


\begin{proof}
  By induction on the structure of $C$. Details are in the Appendix (\cref{thm:qbuad-complete}).
\end{proof}

\section{Implementation}
\label{sec:implementation}

To assess the practicality of our quantitative under-approximate program logic, we implemented a prototype verifier in OCaml targeting a C-style imperative language.
The verifier supports semi-automated checking of the three variants of under-approximate logic introduced in this paper: QFUA, QBUA, and \QBUAd.
Users provide the preconditions, postconditions, and---when needed---loop subvariants (and the index $m$ for high-water mark analysis).
The verifier then checks the validity of the triple using predicate transformers and SMT solving.

\subsection{Quantitative Predicate Transformers}
\label{sec:implementation:transformers}

To support the verification of quantitative under-approximate triples, we adopt and adapt Zhang and Kaminski~\cite{OOPSLA:ZK22}'s quantitative strongest postconditions to define three predicate transformers, each corresponding to one of our three logics: QFUA, QBUA, and \QBUAd.
These transformers generalize classical Boolean predicate transformers---such as strongest postconditions and weakest preconditions---by lifting them to operate over resource functions $State\to\mathbb{Z}\cup\{\pm\infty\}$.
For a loop-free program $C$, all three predicate transformers can be computed compositionally by structural recursion on the syntax of $C$.
The rules for computing the transformers for QFUA and QBUA are summarized in \cref{fig:transformers}, while those for \QBUAd can be found in the Appendix (\cref{fig:transformers-qbuad}).
Below we describe each transformer’s definition, intuition, and role in verification.

\paragraph{QFUA: Strongest Postcondition.}
For QFUA triples, which assert that some execution from a sufficiently resourced pre-state leads to a post-state with a certain resource level, we define a quantitative strongest postcondition transformer:
\begin{align*}
  \post{C}{P}\coloneqq\lambda\tau.\inf\{q:\exists\sigma,p.p\ge P(\sigma)\land\bigstepp{C}{\sigma}{p}{\tau}{q}\}.
\end{align*}
This function maps each post-state $\tau$ to the minimum amount of remaining resource $q$ that can result from running $C$ from some pre-state $\sigma$ with resource at least $P(\sigma)$.
Intuitively, $\post{C}{P}$ gives the lowest postcondition such that the triple $\fjudge{P}{C}{Q}$ is valid, as long as $Q$ lies above it.
To verify a triple $\fjudge{P}{C}{Q}$, it therefore suffices to compute $\post{C}{P}$ and check that $Q\succeq\post{C}{P}$.

\paragraph{QBUA: Weakest Precondition.}
For QBUA triples, which assert that every pre-state with a bounded resource can lead to some post-state where the remaining resource does not exceed a given bound, we define a quantitative weakest precondition transformer:
\begin{align*}
\pre{C}{Q}\coloneqq\lambda\sigma.\sup\{p:\exists\tau,q.q\le Q(\tau)\land\bigstepp{C}{\sigma}{p}{\tau}{q}\}.
\end{align*}
This function maps each pre-state $\sigma$ to the maximum amount of initial resource $p$ such that there exists some execution of $C$ reaching a post-state $\tau$ with remaining resource no greater than $Q(\tau)$.
Intuitively, $\pre{C}{Q}$ gives the greatest precondition such that the triple $\bjudge{P}{C}{Q}$ is valid, as long as $P$ lies above it.
To verify a triple $\bjudge{P}{C}{Q}$, it therefore suffices to compute $\pre{C}{Q}$ and check that $P\preceq\pre{C}{Q}$.

\paragraph{\QBUAd: Weakest Precondition with High-water Mark.}
For \QBUAd triples, which additionally require that the execution passes through a point where the resource drops to zero or below, we define a modified weakest precondition transformer that tracks such high-water behavior:
\begin{align*}
\prel{C}{Q}\coloneqq\lambda\sigma.\sup\{p:\exists\tau,q.q\le Q(\tau)\land\bigstepl{C}{\sigma}{p}{\tau}{q}\}.
\end{align*}
Compared to the transformer for QBUA, this definition adds the side condition that the execution must deplete the resource to at most zero at some intermediate point. The rest of the structure remains unchanged.
This function maps each pre-state $\sigma$ to the maximum amount of initial resource $p$ such that there exists an execution of $C$ that both reaches a post-state $\tau$ bounded by $Q(\tau)$ and passes through a resource level of $\le 0$ during execution.
Similarly to the QBUA case, $\prel{C}{Q}$ gives the greatest precondition such that the triple $\djudge{P}{C}{Q}$ is valid, as long as $P$ lies above it.
And to verify a triple $\djudge{P}{C}{Q}$, it suffices to compute $\prel{C}{Q}$ and check that $P\preceq\prel{C}{Q}$.

\begin{table}[t!]
\centering
\caption{Rules for Computing Quantitative Predicate Transformers $\mathrm{sp}$ and $\mathrm{wp}$}
\label{fig:transformers}

\vspace{0.5em}

\begin{tabular}{l@{\hspace{0.5em}}|@{\hspace{0.5em}}l@{\hspace{0.5em}}|@{\hspace{0.5em}}l}
\toprule
&$\post{C}{P}$&$\pre{C}{Q}$\\
\midrule
$\Skip$&$P$&$Q$\\
$\Assign{x}{e}$&$\Inf x'.P[x'/x]\curlyvee[x\ne e[x'/x]]$&$Q[e/x]$\\
$\Assume{B}$&$P\curlyvee [\neg B]$&$P\curlywedge [B]$\\
$\Seq{C_1}{C_2}$&$\post{C_2}{\post{C_1}{P}}$&$\pre{C_1}{\pre{C_2}{Q}}$\\
$\Choice{C_1}{C_2}$&$\post{C_1}{P}\curlywedge\post{C_2}{P}$&$\pre{C_1}{Q}\curlyvee\pre{C_2}{Q}$\\
$\Local{x}{C}$&$\Inf x.\post{C}{P}$&$\Sup x.\pre{C}{Q}$\\
\bottomrule
\end{tabular}

%
\end{table}

\subsection{The Verification Framework}
\label{sec:implementation:framework}

To implement the program logic described in the previous sections, we design a verification framework that adapts the theoretical model to a more practical setting.
In particular, while our logic formalizes assertions as resource functions of type $State \rightarrow \mathbb{Z} \cup \{\pm\infty\}$, our implementation factorizes these into two components: a Boolean specification predicate that identifies admissible program states, and an integer-valued function that represents the required or remaining resource.
Furthermore, to support realistic program features, our implementation extends the core language with array operations, treating arrays as total functions and encoding read-over-write semantics via axioms compatible with standard SMT array theories.
In the rest of this section, we explain how the predicate transformers introduced earlier are instantiated in this factored setting, how program annotations are provided, and how compositional verification is performed for both loop-free and loop-containing programs.

We represent resource functions $P : State \to \mathbb{Z} \cup \{\pm\infty\}$ in a factored form as pairs $[P_S; P_R]$, where $P_S : State \to \{\true,\false\}$ is a logical specification predicate indicating which states are admissible, and $P_R : State \to \mathbb{Z}$ is a numeric function specifying the resource bound.
The intention is that the resource constraint $P_R$ applies only when $P_S$ holds.
Formally, the pair $[P_S; P_R]$ denotes a resource function whose interpretation depends on the logic.
In the case of QFUA, it corresponds to the function $[\neg P_S] \curlyvee P_R$, which evaluates to $+\infty$ on any state not satisfying $P_S$, and to $P_R$ otherwise.
As a result, only states satisfying $P_S$ are allowed, and on those, the available resource must be at least $P_R$.
In contrast, for QBUA and \QBUAd triples, the pair corresponds to $[P_S] \curlywedge P_R$, which evaluates to $-\infty$ outside $P_S$, again disallowing states that violate the specification, and requires the resource to be at most $P_R$ where $P_S$ holds.
With this notation, we provide some derived rules in \cref{fig:derived-rules}.
The \textdagger\ rule can be used for both QFUA and QBUA triples, while the \textdaggerdbl\ rule is valid for all three kinds of triples.

\begin{figure}[t!]
\begin{mathpar}\footnotesize
  \inferrule[(\textdagger:Assign')]{x\notin\Fv(P_R)}{\AAjudge{P_S}{P_R}{\Assign{x}{e}}{\exists x'.P_S[x'/x]\land x=e[x'/x]}{P_R}}
  \quad
  \inferrule[(\textdaggerdbl:IfTrue)]{\AAjudge{P_S\land B}{P_R}{C_1}{Q_S}{Q_R}}{\AAjudge{P_S}{P_R}{\Ite{B}{C_1}{C_2}}{Q_S}{Q_R}}
  \hva\and
  \inferrule[(\textdagger:WhileSubvar)]{\forall n<k-1.\aajudge{P_S(n)\land B}{P_R(n)}{C}{P_S(n+1)\land B}{P_R(n+1)} \\ \aajudge{P_S(k-1)\land B}{P_R(k-1)}{C}{P_S(k)\land\neg B]}{P_S(k)}}{\aajudge{P_S(0)\land B}{P_R(0)}{\While{B}{C}}{P_S(k)\land\neg B}{P_R(k)}}
\end{mathpar}
\caption{Selected Derived Rules}
\label{fig:derived-rules}
\end{figure}

To verify a program against a quantitative triple, our framework requires the user to annotate the program at both the entry and exit points with pre- and post-conditions of the form $[P_S; P_R]$ described above.

For programs without loops, verification is fully automatic.
Given the annotated pre- and post-conditions and the desired logic (QFUA, QBUA, or \QBUAd), the system performs symbolic predicate transformation and generates the corresponding verification condition.
This condition typically takes the form of an implication between symbolic arithmetic expressions over state variables.
It is then passed to the \textsc{Z3} SMT solver~\cite{TACAS:Z308} to decide validity.

For programs with loops, the user must provide additional annotations.
Specifically, the user supplies a variable $k$ that tracks the number of loop iterations on each execution path, and is required to remain unchanged within the loop body and a loop subvariant expressed as a family of assertions $[P_S(n); P_R(n)]$, indexed by the iteration counter $n$.

For QFUA and QBUA logic, the verifier then checks that for all $n$ such that $0 \le n < k$, the loop body transforms $[P_S(n); P_R(n)]$ into $[P_S(n+1); P_R(n+1)]$, using the logic-specific predicate transformer.
If this inductive check succeeds, the loop is summarized by the precondition $[P_S(0) \land 0 \le k; P_R(0)]$ and the postcondition $[P_S(k) \land 0 \le k; P_R(k)]$, which are used for verifying the continuation of the program.

The \QBUAd logic extends QBUA by requiring that the execution path passes through a point where the resource drops to zero or below.
To capture this, the user must additionally provide an index $m$, indicating the iteration during which this drop is expected to occur.
As with $k$, the value of $m$ must remain unchanged inside the loop.
The verifier first checks, as in QBUA, that for all $0 \le n < k$, the loop body transforms $[P_S(n); P_R(n)]$ into $[P_S(n+1); P_R(n+1)]$.
Then, it performs a second check under \QBUAd, verifying that the loop body transforms $[P_S(m); P_R(m)]$ into $[P_S(m+1); P_R(m+1)]$ via an execution in which the resource drops to $\le 0$ at some point.
Together, these two conditions ensure that the loop behaves correctly both in terms of normal execution and the high-water resource constraint.
The loop is again summarized using $[P_S(0) \land 0 \le m < k; P_R(0)]$ and $[P_S(k) \land 0 \le m < k; P_R(k)]$.

Overall, the user supplies structural annotations at key points---preconditions, postconditions, and loop specifications---and the framework handles the rest.
Predicate transformation reduces the verification problem to logical and arithmetic conditions, which are then discharged using SMT solving.
Our system is fully automatic for straight-line code, and requires only local, bounded annotations for loops to enable modular, compositional verification.

Finally, our verifier supports reasoning over arrays by encoding them as total functions $a : \mathbb{Z} \to \mathbb{Z}$, equipped with two standard operations.
The expression $a[e]$ denotes the value stored at index $e$, while $a\{e \mapsto e'\}$ represents a new array identical to $a$, except that index $e$ is updated to value $e'$.
These operations satisfy the usual read-over-write axioms:
\begin{align*}
  &i=k\to(a\{i \mapsto v\})[k] = v &
  &i\ne k\to(a\{i \mapsto v\})[k] = a[k]
\end{align*}
This abstraction aligns with the standard theory of arrays supported by SMT solvers such as \textsc{Z3}~\cite{TACAS:Z308}.

\section{Case Studies}
\label{sec:case-studies}

In this section, we present a total of eight verified programs as our case studies: four small illustrative combinations of pre- and post-conditions for the example from \cref{sec:overview:qual} and four larger case studies.
All the case studies are successfully verified using our prototype implementation, and the corresponding annotated code is provided in the supplementary material.

We begin this section by revisiting the example from \cref{sec:overview:qual} to demonstrate how our prototype verifier operates concretely on annotated programs.
Consider the following simple conditional:

\medskip

\begin{algorithmic}[1]\small\setstretch{0.75}
  \State \Gray{\texttt{/*}~precondition: $P_1=[\top;~2]$~\texttt{*/} \texttt{/*}~precondition: $P_2=[x=42;~2]$~\texttt{*/}}
  \State \texttt{if (x == 42) \{ tick(2); x = 0; \} else \{ tick(1); \}}
  \State \Gray{\texttt{/*}~postcondition: $Q_1=[\top;~0]$~\texttt{*/} \texttt{/*}~postcondition: $Q_2=[x=0;~0]$~\texttt{*/}}
\end{algorithmic}

\medskip

Let us test the four combinations of these pre- and post-conditions under both QFUA and QBUA.
The pair $(P_1, Q_1)$ is not valid under either logic.
The pair $(P_1, Q_2)$ is only valid under QFUA, and $(P_2, Q_1)$ is only valid under QBUA.
Finally, $(P_2, Q_2)$ is valid under both.
These results match the analysis given in \cref{sec:overview:qual}, where we discussed how QFUA and QBUA emphasize output- and input-oriented reasoning, respectively.

\subsection{QFUA: Password Validation}

This example illustrates the use of QFUA triples to characterize states that can potentially incur high resource consumption.
We consider a password validation program where the password is an array of $n$ integers, and the user must input $n$ integers to match it entry-by-entry.
Each user input is modeled by a \texttt{tick(1)} command, incurring one unit of resource.
The program stops early if a mismatch is found, but if the password is fully correct, it consumes $n$ units of input effort.
QFUA reasoning allows us to formally express that in any final state where the variable \texttt{valid} equals $1$, the execution might have consumed up to $n$ units of resource.

\medskip

\begin{algorithmic}[1]\small\setstretch{0.75}
  \State \Gray{\texttt{/*}~precondition: $[n\ge 0;~n]$\texttt{*/}}
  \State \texttt{int valid = 1;}
  \State \texttt{int i = 0;}
  \State \texttt{while (i < n \&\& valid == 1)}
  \State \Gray{\texttt{/*}~number of iterations: $n$~\texttt{;}}
  \State \Gray{\texttt{ \ }~subvariant: $i_0 \mapsto [
    i = i_0 \land valid = 1;~
    n - i_0]$~\texttt{*/}}
  \State \texttt{\{}
  \Indent
  \State \texttt{tick(1);} \Comment{user input}
  \State \texttt{int input;}
  \State \texttt{if (input != password[i])}
  \Indent
  \State \texttt{valid = 0;}
  \EndIndent
  \State \texttt{i = i + 1;}
  \EndIndent
  \State \texttt{\}}
  \State \Gray{\texttt{/*}~postcondition: $[n \ge 0 \land valid = 1;~0]$~\texttt{*/}}
\end{algorithmic}

\medskip

This QFUA-annotated program expresses that reaching a final state where $valid = 1$---i.e., successful password verification---may cost up to $n$ units of input, corresponding to the longest possible execution path.
The loop subvariant uses the index $i_0$ to represent progress through the password, and the quantitative postcondition tracks how many inputs remain to be matched.
By computing the strongest postcondition, the verifier confirms that only successful verification can justify the worst-case input cost.

While the same triple holds under QBUA, it does not capture that the high resource usage occurs only when $valid = 1$. In fact, even if we drop this constraint from the postcondition, the QBUA triple remains valid---unlike QFUA, which becomes invalid. This highlights QFUA’s expressiveness in specifying output-sensitive resource behavior.

\subsection{QBUA: Insertion Sort}

This example uses QBUA triples to under-approximate the total number of swaps performed by insertion sort on a worst-case input.
Compared to QFUA, QBUA is particularly suitable here because we are interested in identifying initial states (i.e., specific array configurations) that lead to executions incurring high resource usage.

The program below shows the insertion sort algorithm, and it is annotated with loop subvariants and resource assertions.
The variable \texttt{n} denotes the length of the array \texttt{a}, and \texttt{tick(1)} represents one unit of resource consumption per swap.
This annotation expresses that for input arrays sorted in strictly decreasing order, the total number of swaps executed by insertion sort is under-approximated by $n(n-1)/2$, which is tight.

\medskip

\begin{algorithmic}[1]\small\setstretch{0.75}
  \State \Gray{\texttt{/*}~precondition: $[n\ge 1\land\forall I\in[0,n).~\forall J\in[0,I).~a[J]>a[I];~n(n-1)/2]$~\texttt{*/}}
  \State \texttt{int i = 1;}
  \State \texttt{while (i < n)}
  \State \Gray{\texttt{/*}~number of iterations: $n-1$\texttt{;}}
  \State \Gray{\texttt{ \ }~subvariant: $i_0\mapsto\begin{bmatrix}i=i_0+1\land\forall I\in[i,n).~\forall J\in[0,I).~a[J]>a[I];\\-i_0(i_0+1)/2\end{bmatrix}$\texttt{;}}
  \State \Gray{\texttt{ \ }~constant prefix: $[\top;~n(n-1)/2]$~\texttt{*/}}
  \State \texttt{\{}
  \Indent
  \State \texttt{int j = i;}
  \State \texttt{while (j > 0)}
  \State \Gray{\texttt{/*}~number of iterations: $i$\texttt{;}}
  \State \Gray{\texttt{ \ }~subvariant: $j_0\mapsto\begin{bmatrix}j=i-j_0\land\forall I\in[0,j).~a[I]>a[j]\\\land\forall I\in[i+1,n).~\forall J\in[0,I).~a[J]>a[I];\\-j_0\end{bmatrix}$\texttt{;}}
  \State \Gray{\texttt{ \ }~constant prefix: $[i=i_0+1;~-i_0(i_0+1)/2]$~\texttt{*/}}
  \State \texttt{\{}
  \Indent
  \State \texttt{if (a[j - 1] > a[j]) \{}
  \Indent
  \State \texttt{tick(1);} \Comment{swap}
  \State \texttt{int t = a[j];}
  \State \texttt{a[j] = a[j - 1];}
  \State \texttt{a[j - 1] = t;}
  \EndIndent
  \State \texttt{\}}
  \State \texttt{j = j - 1;}
  \EndIndent
  \State \texttt{\}}
  \State \texttt{i = i + 1;}
  \EndIndent
  \State \texttt{\}}
  \State \Gray{\texttt{/*}~postcondition: $[n\ge 1;~0]$~\texttt{*/}}
\end{algorithmic}

\medskip

To verify this fact, our system introduces index variables $i_0$ and $j_0$ to define subvariants that track the outer and inner loop states, respectively.
The equalities $i = i_0 + 1$ and $j = i - j_0$ make express the relationship between the program variables \texttt{i}, \texttt{j} and the logical iteration counters $i_0$, $j_0$ that parameterize the subvariant families.

The logical part of these assertions captures key ordering properties of the array.
In the outer loop, the subvariant $\forall I\in[i,n).~\forall J\in[0,I).~a[J] > a[I]$ ensures that the suffix of the array from index $i$ onward remains sorted in strictly decreasing order.
This reflects that the portion of the array not yet visited by the outer loop is still in its worst-case configuration.
In the inner loop, the subvariant $\forall I\in[0,j).~a[I] > a[j]$ asserts that the element currently being inserted (at index $j$) is strictly smaller than all elements to its left, capturing the progress of the inner loop as it ``bubbles'' the element leftward.
In addition, the condition $\forall I\in[i+1,n).~\forall J\in[0,I).~a[J] > a[I]$ preserves the decreasing order in the untouched suffix beyond the current outer loop index, maintaining the worst-case assumption throughout execution.

The numeric part of each subvariant quantifies the resource consumption within the corresponding loop.
For the outer loop, this is expressed by the term $-i_0(i_0+1)/2$, which tracks the cumulative number of swaps performed after $i_0$ iterations.
For the inner loop, the numeric subvariant $-j_0$ reflects the swaps required for inserting the current element.
Together, these expressions describe the local resource evolution of each loop, enabling modular reasoning about their quantitative behavior.

To complete the reasoning for the entire program, each loop is further associated with a constant prefix that remains unchanged during the loop execution but contributes to the overall resource analysis.
For the outer loop, the constant prefix is applied after deriving the loop triple, using the \textsc{(B:Relax)} rule to uniformly shift both the pre- and post-potentials by $n(n-1)/2$ , thereby aligning the result with the precondition of the entire program.
For the inner loop, the constant prefix restores the context of the outer loop once the inner loop reasoning is finished: the logical condition $i=i_0+1$ through the \textsc{(B:Constancy)} rule, and the resource offset $-i_0(i_0+1)/2$ through the \textsc{(B:Relax)} rule.
These constant prefixes involve only variables that remain unchanged by the loop body and thus need not be reasoned about within the loop itself, yet they ensure that the verified subvariants compose correctly into the global quantitative reasoning of the program.

\subsection{QBUA: Quicksort with Stack}

We also apply QBUA reasoning to analyze the worst-case number of comparisons performed by quicksort.
To simplify reasoning, we implement quicksort using an explicit stack to simulate recursion, storing subarray bounds in \texttt{lstk} and \texttt{rstk}.
Instead of actually choosing pivots randomly, we use the command \texttt{assume(pivot >= l \&\& pivot < r)} to model nondeterministic pivot selection.

Our goal is to verify that when the input array is initially sorted in strictly increasing order, there exists a pivot selection strategy under which quicksort performs at least $n(n-1)/2$ comparisons.
This lower bound is tight and corresponds to the degenerate case where each pivot is chosen to be the last element in the subarray.
We model each comparison by inserting a \texttt{tick(1)} statement before any data movement occurs.

\medskip

\begin{algorithmic}[1]\small\setstretch{0.75}
  \State \Gray{\texttt{/*}~precondition: $[n \ge 1 \land \forall I \in [0,n).~\forall J \in [0,I).~a[J] < a[I];~n(n-1)/2]$~\texttt{*/}}
  \State \texttt{int top = 0;}
  \State \texttt{lstk[top] = 0;}
  \State \texttt{rstk[top] = n;}
  \State \texttt{top = top + 1;}
  \State \texttt{while (top > 0)}
  \State \Gray{\texttt{/*}~number of iterations: $n$\texttt{;}}
  \State \Gray{\texttt{ \ }~subvariant: $t \mapsto \begin{bmatrix}
    \text{if } t < n: &
    top = 1 \land lstk[0] = 0 \land rstk[0] = n - t \\
      & \land \forall I \in [0, n - t).~\forall J \in [0,I).~a[J] < a[I] \\
    \text{else:} &
      top = 0; \\
    & - \frac{t(2n - t - 1)}{2}
    \end{bmatrix}$}
  \State \Gray{\texttt{ \ }~constant prefix: $[n\ge 1;~n(n-1)/2]$~\texttt{*/}}
  \State \texttt{\{}
  \Indent
  \State \texttt{top = top - 1;}
  \State \texttt{int l = lstk[top];}
  \State \texttt{int r = rstk[top];}
  \State \texttt{int pivot;}
  \State \texttt{assume(pivot >= l \&\& pivot < r);}
  \State \texttt{int i = l;}
  \State \texttt{int j = l;}
  \State \texttt{int k = r;}
  \State \texttt{while (i < r)}
  \State \Gray{\texttt{/*}~number of iterations: $r - l$\texttt{;}}
  \State \Gray{\texttt{ \ }~subvariant: $i_0 \mapsto \begin{bmatrix}
    & pivot = r - 1 \land i = l + i_0 \\
    \land & j =
      \begin{cases}
        i & \text{if } i \le pivot \\
        i - 1 & \text{otherwise}
      \end{cases}
    \land k = r \\
    \land & \forall I \in [l,r).~\forall J \in [l,I).~a[J] < a[I] \\
    \land & \forall I \in [l,j).~b[I] =
      \begin{cases}
        a[I] & \text{if } I < pivot \\
        a[I + 1] & \text{otherwise}
      \end{cases}; \\
          & -
      \begin{cases}
        i_0 & \text{if } i \le pivot \\
        i_0 - 1 & \text{otherwise}
      \end{cases}
    \end{bmatrix}$}
  \State \Gray{\texttt{ \ }~constant prefix: $[top = 0 \land l = 0 \land r = n - t;~-t(2n - t - 1)/2]$~\texttt{*/}}
  \State \texttt{\{}
  \Indent
  \State \texttt{if (i != pivot) \{}
  \Indent
  \State \texttt{tick(1);} \Comment{comparison}
  \State \texttt{if (a[i] <= a[pivot]) \{}
  \Indent
  \State \texttt{b[j] = a[i];}
  \State \texttt{j = j + 1;}
  \EndIndent
  \State \texttt{\} else \{}
  \Indent
  \State \texttt{k = k - 1;}
  \State \texttt{b[k] = a[i];}
  \EndIndent
  \State \texttt{\}}
  \EndIndent
  \State \texttt{\}}
  \State \texttt{i = i + 1;}
  \EndIndent
  \State \texttt{\}}
  \State \texttt{b[j] = a[pivot];}
  \State \texttt{i = l;}
  \State \texttt{while (i < r)}
  \State \Gray{\texttt{/*}~number of iterations: $r - l$\texttt{;}}
  \State \Gray{\texttt{ \ }~subvariant: $i_0 \mapsto [i = l + i_0\land \forall I \in [l,i).~a[I] = b[I];~0]$}
  \State \Gray{\texttt{ \ }~constant prefix: $\begin{bmatrix}&top = 0 \land l = 0 \land r = n - t \land j = r - 1 \land k = r \\\land &\forall I \in [l,j).~\forall J \in [l,I).~b[J] < b[I];\\&t(2n - t - 1)/2 - (n - t - 1)\end{bmatrix}$~\texttt{*/}}
  \State \texttt{\{}
  \Indent
  \State \texttt{a[i] = b[i];}
  \State \texttt{i = i + 1;}
  \EndIndent
  \State \texttt{\}}
  \State \texttt{if (l < j) \{}
  \Indent
  \State \texttt{lstk[top] = l;}
  \State \texttt{rstk[top] = j;}
  \State \texttt{top = top + 1;}
  \EndIndent
  \State \texttt{\}}
  \State \texttt{if (j + 1 < r) \{}
  \Indent
  \State \texttt{lstk[top] = j + 1;}
  \State \texttt{rstk[top] = r;}
  \State \texttt{top = top + 1;}
  \EndIndent
  \State \texttt{\}}
  \EndIndent
  \State \texttt{\}}
  \State \Gray{\texttt{/*}~postcondition: $[n\ge 1;~0]$~\texttt{*/}}
\end{algorithmic}

\medskip

To verify this bound, we introduce a subvariant index $t$ to count the number of partitioning steps performed by the outer loop.
The subvariant maintains that the first $n - t$ elements of the array remain sorted, and that the stack contains at most one active subarray to process.

In the inner partitioning loop, a second subvariant indexed by $i_0$ expresses the progress of scanning through the current subarray $[l, r)$, along with logical assertions capturing the ordering of values and the correctness of the constructed temporary array $b$.
The second inner loop copies values from $b$ back into $a$, while maintaining that the written prefix is sorted.

The resource subvariants compute the total number of comparisons remaining, and allow us to verify that the number of \texttt{tick(1)} operations accumulates to at least $n(n-1)/2$ along a path where each pivot is selected as the last element of its subarray.

\subsection{\QBUAd: Producer-consumer System}

We now present an example that uses \QBUAd to verify a lower bound on resource consumption along some execution path---specifically, that the buffer may reach full capacity, meaning the resource is completely exhausted.

The program models a producer-consumer system with nondeterministic scheduling.
There are initially $n$ production steps and $n$ consumption steps, and at each iteration, the environment nondeterministically chooses to either produce or consume.
This nondeterminism is encoded using a special Boolean variable \texttt{demon}, where the conditional \texttt{if (demon)} indicates a nondeterministic choice between branches.
Each production step adds one item to the buffer and is modeled by \texttt{tick(1)}, which consumes one unit of resource---that is, one unit of available buffer space.
Each consumption step removes one item and is modeled by \texttt{tick(-1)}, which frees up buffer space, increasing the resource.

The variable \texttt{buf} tracks the current buffer occupancy, while \texttt{p} and \texttt{c} represent the number of remaining production and consumption steps, respectively.
In this setting, the resource represents the remaining buffer capacity. When the buffer becomes full, the resource value reaches zero. Our goal is to show that, under some execution strategy, the buffer occupancy may reach $n$ at some point---that is, the total resource consumption reaches at least $n$ before the program terminates.

\medskip

\begin{algorithmic}[1]\small\setstretch{0.75}
  \State \Gray{\texttt{/*}~precondition: $[n \ge 0;~n]$~\texttt{*/}}
  \State \texttt{int p = n;}
  \State \texttt{int c = n;}
  \State \texttt{int buf = 0;}
  \State \texttt{while (p + c > 0)}
  \State \Gray{\texttt{/*}~number of iterations: $2n$\texttt{;}}
  \State \Gray{\texttt{ \ }~subvariant: $i \mapsto \begin{bmatrix}
      \begin{cases}
      buf = i \land p = n - i \land c = n & \text{if } i < n \\
      buf = 2n - i \land p = 0 \land c = 2n - i & \text{otherwise}
      \end{cases};\\
      \begin{cases}
        n - i & \text{if } i < n \\
        i - n & \text{otherwise}
    \end{cases}
  \end{bmatrix}$\texttt{;}}
  \State \Gray{\texttt{ \ }exhaustion point: $n$~\texttt{*/}}
  \State \texttt{\{}
  \Indent
  \State \texttt{if (demon) \{}
  \Indent
  \State \texttt{assume(p > 0);}
  \State \texttt{p = p - 1;}
  \State \texttt{tick(1);} \Comment{addition to buffer}
  \State \texttt{buf = buf + 1;}
  \EndIndent
  \State \texttt{\} else \{}
  \Indent
  \State \texttt{assume(buf > 0 \&\& c > 0);}
  \State \texttt{c = c - 1;}
  \State \texttt{tick(-1);} \Comment{removal from buffer}
  \State \texttt{buf = buf - 1;}
  \EndIndent
  \State \texttt{\}}
  \EndIndent
  \State \texttt{\}}
  \State \Gray{\texttt{/*}~postcondition: $[n\ge 0;~n]$~\texttt{*/}}
\end{algorithmic}

\medskip

The subvariant for this loop is indexed by $i$, representing the total number of steps taken so far.
The first $n$ steps correspond to pure production, during which the buffer usage increases and reaches its peak value $n$ at step $i = n$.
The remaining $n$ steps simulate pure consumption, reducing the buffer back to zero.
In addition to tracking the logical evolution of variables---$p$ decreasing and $buf$ increasing before $i = n$, and $c$ decreasing and $buf$ decreasing afterward---the subvariant also includes an explicit exhaustion point at $i = n$.
This marks the iteration at which the resource value (i.e., remaining buffer capacity) reaches zero, indicating the buffer is full.

The \QBUAd verifier confirms there exists an execution path where the resource value eventually drops to zero---i.e., the buffer becomes full---and that execution can continue to termination without breaking the postcondition.

\section{Related Work}
\label{sec:related}


\paragraph*{Incorrectness Logic}
Incorrectness logic (IL), introduced by O'Hearn~\cite{POPL:OHearn20}, signifies a major shift in program verification by emphasizing \emph{under-approximate} reasoning instead of the \emph{over-approximate} approach employed in classic Hoare logic.
The core idea of IL is the use of \emph{forward, under-approximate} (FUA) triples to ensure that any bug detected is a true positive.
FUA triples were initially studied by de Vries and Koutavas~\cite{SEFM:dVK11}, and they discussed \emph{backward, under-approximate} (BUA) triples as \emph{total} Hoare triples.
BUA triples were later studied by M{\"o}ller et al.~\cite{RAMiCS:MHH21} and Ascari et al.~\cite{arxiv:ABG23}.
As a subroutine of \textsc{UNTer} for under-approximating non-termination, Raad et al.~\cite{OOPSLA:RVO24} studied the theory of BUA triples and realized them in practice.
Le et al.~\cite{OOPSLA:LRV22} extends IL with separation logic (ISL) to find bugs in heap-manipulating programs.
Raad et al.~\cite{POPL:RBD22} further developed a concurrent variant of ISL to detect concurrency bugs, e.g., races and deadlocks.
Zilberstein et al.~\cite{OOPSLA:ZDS23,OOPSLA:ZSS24} developed Outcome logic (OL) and Outcome separation logic (OSL) as a unified foundation for reasoning about both correctness and incorrectness.
Among the aforementioned studies, only \textsc{UNTer}~\cite{OOPSLA:RVO24} focuses on the resource usage of programs, as it aims to prove non-termination, i.e., a \emph{qualitative} justification for a program's usage of the time resource.
Recently, Cousot~\cite{POPL:Cousot25} studies transformational program logics for correctness and incorrectness, including termination and non-termination.
In this work, we aim to establish \emph{quantitative} under-approximations of the worst-case resource usage of programs. 

\paragraph*{Resource-aware Program Logic}
Haslbeck and Nipkow~\cite{TACAS:HN18} reviewed three variants of Hoare logic for establishing time bounds.
The first one is the quantitative Hoare logic (QHL)~\cite{PLDI:CHR14,PLDI:CHS15} that we reviewed in \cref{sec:overview:qhl}.
Note that QHL is not restricted to time bounds; it can also handle general resources like stack bounds.
QHL generalizes Boolean assertions by introducing the concept of \emph{potentials} to reason about resource consumption.
In QHL, the difference between the pre- and the post-potential provides an upper bound (i.e., an over-approximation) on the resource consumption.
Furthermore, because potentials can never be negative, the pre-potential also establishes an upper bound on the high-water mark for executing a program.
The second one is a Hoare-like logic to prove big-O-style time bounds, developed by Nielson~\cite{SCP:Nielson87}.
Nielson's logic establishes triples of the form $\left\{p\right\} C \left\{ E \Downarrow q \right\}$, where $p$ and $q$ are Boolean assertions and $E$ is a time bound.
The third one is a line of work of incorporating separation logic with potential-like notions such as \emph{time credits}~\cite{ESOP:Atkey10,ESOP:MJP19,POPL:PGJ24,phd:Gueneau19}.
The high-level idea is to treat resources (or potentials) as a special type of heap fragment, allowing us to use \emph{separating conjunction} to annotate program states with quantitative information, e.g., $\left\{p \ast \$3\right\} C \left\{ q \ast \$1 \right\}$ indicates that the pre-state carries $3$ units of resources and the post-state carries $1$ unit of resources.
All the aforementioned logics focus on over-approximating worst-case resource usage (i.e., sound upper bounds) or under-approximating best-case resource usage (i.e., sound lower bounds).
In this work, we aim to develop a program logic to under-approximate worst-case resource usage.
We could have developed our work based on time credits or similar mechanisms to address the compositionality and modularity issues mentioned in \cref{sec:overview:qual}, but we find it orthogonal to the under-approximation problem we aim to tackle in this paper.
In other words, we still need to devise the forward and backward reasoning principles for resource under-approximation, as well as the treatment of high-water marks.
Nevertheless, extending our work with time credits within the context of separation logic is an interesting area for our future work, especially in the setting of establishing \emph{asymptotic} bounds~\cite{phd:Gueneau19}.

Graded Hoare logic (GHL)~\cite{ESOP:GKO21} augments Hoare logic with a \emph{preordered monoidal analysis} to reason about side-effects, including resource usage, probabilistic behavior, and differential privacy-related properties.
Given a preordered monoid $(M,{\le},1,{\cdot})$, GHL proves judgements of the form $\vdash_m \left\{ \phi \right\} C \left\{ \psi \right\}$, where $m \in M$ is called the \emph{analysis} of program $C$ and $\phi,\psi$ are the pre- and post-conditions.
For example, one can use the natural-number monoid $(\bbN,{\le},0,{+})$ to instantiate GHL to reason about resource usage, e.g., $\vdash_0 \left\{\phi\right\} \Skip \left\{\phi\right\}$ and $\vdash_2 \left\{\phi\right\} \Tick{2} \left\{\phi\right\}$.
GHL still follows the methodology of \emph{over-approximating}.
%

\paragraph*{Worst-case Resource Analysis}
Most existing approaches for identifying reachable worst cases are based on testing, using techniques such as random testing and symbolic execution to generate concrete inputs that can lead to significant resource usage. 
Recent fuzzing-based methods include \textsc{KelinciWCA}~\cite{ISSTA:NKP18}, \textsc{SlowFuzz}~\cite{CCS:PZK17}, and resource-usage-aware fuzzing~\cite{FASE:CHL22}; they are considered black-box methods as they do not require knowledge of the program's content.
On the other hand, symbolic-execution-based methods look into the concrete structures of programs; to name a few, WISE~\cite{ICSE:BJS09}, SPF-WCA~\cite{ICST:LKP17}, \textsc{Badger}~\cite{ISSTA:NKP18}, and resource-type-guided symbolic execution~\cite{POPL:WH19}.
All the aforementioned methods can generate concrete inputs, i.e., they provide \emph{concrete} under-approximations of the worst-case resource usage, whereas our quantitative under-approximate logic proves \emph{symbolic} under-approximations.
%
%
%

\subsubsection*{Data Availability Statement.} The artifact accompanying this paper, including the OCaml implementation of the prototype verifier and the mechanized Rocq proofs, is available at Zenodo~\cite{software:JW26}.

\subsubsection*{Acknowledgements.} This work was sponsored by National Key R\&D Program of China, Grant No. 2024YFE0204100.

\bibliography{db}

\appendix
\newpage

%
%
%




\begin{table}[t!]
\centering
\caption{Rules for Computing Quantitative Predicate Transformer $\mathrm{wp}^{\le 0}$}
\label{fig:transformers-qbuad}

\vspace{0.5em}

 \begin{tabular}{l@{\hspace{0.5em}}|@{\hspace{0.5em}}l@{\hspace{0.5em}}}
 \toprule
 &$\prel{C}{Q}$\\
 \midrule
 $\Skip$&$Q\curlywedge 0$\\
 $\Assign{x}{e}$&$Q[e/x]\curlywedge 0$\\
 $\Assume{B}$&$P\curlywedge [B]\curlywedge 0$\\[8pt]
 $\Seq{C_1}{C_2}$&$\begin{matrix}\prel{C_1}{\pre{C_2}{Q}}\\\curlywedge\pre{C_1}{\prel{C_2}{Q}}\end{matrix}$\\[12pt]
 $\Choice{C_1}{C_2}$&$\prel{C_1}{Q}\curlyvee\prel{C_2}{Q}$\\
 $\Local{x}{C}$&$\Sup x.\prel{C}{Q}$\\
 \bottomrule
 \end{tabular}
\end{table}

\section{Proofs}

All proofs in this Appendix are also mechanized in Rocq.
The corresponding scripts are available in the artifact~\cite{software:JW26}.

\begin{figure}
\begin{mathpar}
  \inferrule[(BS:Skip)]{}{\bigstep{\Skip}{\sigma}{p}{p}{\sigma}{p}}
  \hva\and
  \inferrule[(BS:Assign)]{}{\bigstep{\Assign{x}{e}}{\sigma}{p}{p}{\sigma[x\mapsto\eval{e}{\sigma}]}{p}}
  \hva\and
  \inferrule[(BS:Assume)]{\eval{B}{\sigma}=\true}{\bigstep{\Assume{B}}{\sigma}{p}{p}{\sigma}{p}}
  \hva\and
  \inferrule[(BS:Tick)]{}{\bigstep{\Tick{e}}{\sigma}{p}{\min\{p,p-\eval{e}{\sigma}\}}{\sigma}{p-\eval{e}{\sigma}}}
  \hva\and
  \inferrule[(BS:Seq)]{\bigstep{C_1}{\sigma}{p}{l_1}{\rho}{r} \\ \bigstep{C_2}{\rho}{r}{l_2}{\tau}{q}}{\bigstep{\Seq{C_1}{C_2}}{\sigma}{p}{\min\{l_1,l_2\}}{\tau}{q}}
  \hva\and
  \inferrule[(BS:ChoiceL)]{\bigstep{C_1}{\sigma}{p}{l}{\tau}{q}}{\bigstep{\Choice{C_1}{C_2}}{\sigma}{p}{l}{\tau}{q}}
  \hva\and
  \inferrule[(BS:ChoiceR)]{\bigstep{C_2}{\sigma}{p}{l}{\tau}{q}}{\bigstep{\Choice{C_1}{C_2}}{\sigma}{p}{l}{\tau}{q}}
  \hva\and
  \inferrule[(BS:LoopZero)]{}{\bigstep{\Loop{C}}{\sigma}{p}{p}{\sigma}{p}}
  \hva\and
  \inferrule[(BS:Loop)]{\bigstep{\Seq{C}{\Loop{C}}}{\sigma}{p}{l}{\tau}{q}}{\bigstep{\Loop{C}}{\sigma}{p}{l}{\tau}{q}}
  \hva\and
  \inferrule[(BS:Local)]{\bigstep{C}{\sigma}{p}{l}{\tau}{q}}{\bigstep{\Local{x}{C}}{\sigma[x\mapsto v]}{p}{l}{\tau[x\mapsto v]}{q}}
\end{mathpar}
\caption{Rules for the Big-step Semantics}
\label{fig:fullsemantics}
\end{figure}

\cref{fig:fullsemantics} shows the complete rules for the big-step semantics.
We also write $\bigstepp{C}{\sigma}{p}{\tau}{q}$ to denote $\exists l.\bigstep{C}{\sigma}{p}{l}{\tau}{q}$, and $\bigstepl{C}{\sigma}{p}{\tau}{q}$ to denote $\exists l\le 0.\bigstep{C}{\sigma}{p}{l}{\tau}{q}$.

\begin{lemma}\label{lem:relax}
  For all $C,\sigma,p,\tau,q,l,f$, if $\bigstep{C}{\sigma}{p}{l}{\tau}{q}$, then $\bigstep{C}{\sigma}{p+f}{l+f}{\tau}{q+f}$.
\end{lemma}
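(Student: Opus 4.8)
The plan is to prove the statement by rule induction on the derivation of $\bigstep{C}{\sigma}{p}{l}{\tau}{q}$, keeping $f$ universally quantified throughout so that the induction hypothesis can be re-instantiated at a shifted budget in the sequencing case. The guiding intuition is that the reached state $\tau$ and all intermediate states are entirely independent of the numeric resource value: in every semantic rule the state components are determined by $\sigma$ and the program alone, whereas the three resource quantities $p$, $l$, $q$ are only ever modified by additive shifts arising from $\Tick{e}$. Consequently, replacing the starting budget $p$ by $p+f$ leaves the trace of states untouched and shifts every recorded resource level uniformly by $f$. The only algebraic fact needed is that minimum commutes with a uniform shift, i.e.\ $\min\{a+f,b+f\}=\min\{a,b\}+f$ over the integers.

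First, the cases \textsc{(BS:Skip)}, \textsc{(BS:Assign)}, and \textsc{(BS:Assume)} are immediate, since there $l=q=p$ and re-applying the same rule at budget $p+f$ yields exactly the shifted conclusion (the stored value $\eval{e}{\sigma}$ and the guard $\eval{B}{\sigma}$ do not depend on the budget). The case \textsc{(BS:Tick)} is the genuine base case that uses the shift identity: from $\bigstep{\Tick{e}}{\sigma}{p}{\min\{p,p-\eval{e}{\sigma}\}}{\sigma}{p-\eval{e}{\sigma}}$, re-applying \textsc{(BS:Tick)} at $p+f$ produces minimal level $\min\{p+f,(p+f)-\eval{e}{\sigma}\}$ and residual $(p+f)-\eval{e}{\sigma}$, which by the identity equal $l+f$ and $q+f$ respectively.

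For the inductive cases, \textsc{(BS:ChoiceL)}, \textsc{(BS:ChoiceR)}, \textsc{(BS:LoopZero)}, and \textsc{(BS:Local)} follow directly by applying the induction hypothesis to the premise and re-applying the same rule; for \textsc{(BS:Local)} the stored value of the local variable is unaffected by the budget shift. The case \textsc{(BS:Loop)} is handled by applying the induction hypothesis to its premise $\bigstep{\Seq{C}{\Loop{C}}}{\sigma}{p}{l}{\tau}{q}$ and then re-applying \textsc{(BS:Loop)}; rule induction covers the loop correctly here because this premise is a strictly smaller derivation. The one case requiring a little care is \textsc{(BS:Seq)}: from the two premises with intermediate state $\rho$ and intermediate budget $r$, the induction hypothesis on $C_1$ gives $\bigstep{C_1}{\sigma}{p+f}{l_1+f}{\rho}{r+f}$, and then---crucially re-instantiating the induction hypothesis on $C_2$ at the shifted intermediate budget $r+f$---we obtain $\bigstep{C_2}{\rho}{r+f}{l_2+f}{\tau}{q+f}$; composing via \textsc{(BS:Seq)} and applying the shift identity to $\min\{l_1,l_2\}$ closes the case. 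I do not anticipate any deep obstacle: since $p,l,q,f$ all range over $\mathbb{Z}$, there are no $\pm\infty$ subtleties, and the whole argument reduces to the uniform-shift identity together with the state-independence of the resource counter.
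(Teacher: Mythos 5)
Your proof is correct and follows essentially the same route as the paper's: induction on the derivation, with the \textsc{(BS:Tick)} and \textsc{(BS:Seq)} cases discharged by the identity $\min\{a+f,b+f\}=\min\{a,b\}+f$ and all remaining cases by re-applying the corresponding rule (noting only that \textsc{(BS:LoopZero)} has no premise and is immediate rather than inductive). No gaps.
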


\begin{proof}
  By induction on the derivation.
  \begin{itemize}
    \item \textbf{Case} \textsc{(BS:Skip)}, \textsc{(BS:Assign)}, \textsc{(BS:Assume)}:\\
      Immediate by the corresponding rule.
    \item \textbf{Case} \textsc{(BS:Tick)}:\\
      By (BS:Tick), we have $\bigstep{\Tick{e}}{\sigma}{p+f}{\min\{p+f,p-\eval{e}{\sigma}+f\}}{\sigma}{p-\eval{e}{\sigma}+f}$. Since $\min\{p+f,p-\eval{e}{\sigma}+f\}=\min\{p,p-\eval{e}{\sigma}\}+f$, the conclusion holds.
    \item \textbf{Case} \textsc{(BS:Seq)}:\\
      By the induction hypothesis, we have $\bigstep{C_1}{\sigma}{p+f}{l_1+f}{\rho}{r+f}$ and $\bigstep{C_2}{\rho}{r+f}{l_2+f}{\tau}{q+f}$. Then by (BS:Seq), we have $\bigstep{\Seq{C_1}{C_2}}{\sigma}{p+f}{\min\{l_1+f,l_2+f\}}{\tau}{q+f}$. Since $\min\{l_1+f,l_2+f\}=\min\{l_1,l_2\}+f$, the conclusion holds.
    \item \textbf{Case} \textsc{(BS:ChoiceL)}:\\
      By the induction hypothesis, we have $\bigstep{C_1}{\sigma}{p+f}{l+f}{\tau}{q+f}$. Then by (BS:ChoiceL), we have $\bigstep{\Choice{C_1}{C_2}}{\sigma}{p+f}{l+f}{\tau}{q+f}$.
    \item \textbf{Case} \textsc{(BS:ChoiceR)}:\\
      This case is similar to the previous case.
    \item \textbf{Case} \textsc{(BS:LoopZero)}:\\
      Immediate by the rule (BS:LoopZero).
    \item \textbf{Case} \textsc{(BS:Loop)}:\\
      By the induction hypothesis, we have $\bigstep{\Seq{C}{\Loop{C}}}{\sigma}{p+f}{l+f}{\tau}{q+f}$. Then by (BS:Loop), we have $\bigstep{\Loop{C}}{\sigma}{p+f}{l+f}{\tau}{q+f}$.
    \item \textbf{Case} \textsc{(BS:Local)}:\\
      By the induction hypothesis, we have $\bigstep{C}{\sigma}{p+f}{l+f}{\tau}{q+f}$. Then by (BS:Local), we have $\bigstep{\Local{x}{C}}{\sigma[x\mapsto v]}{p+f}{l+f}{\tau[x\mapsto v]}{q+f}$.
  \end{itemize}
\end{proof}

\begin{definition}[Semantics]
  The semantics of QFUA triple, QBUA triple, and \QBUAd triple are defined as follows:
  \begin{itemize}
    \item $\fJudge{P}{C}{Q}$ holds iff for all $\tau$ and $q$ such that $q\ge Q(\tau)$, there exists a $\sigma$ and a $p$ such that $p\ge P(\sigma)$ and $\bigstepp{C}{\sigma}{p}{\tau}{q}$.
    \item $\bJudge{P}{C}{Q}$ holds iff for all $\sigma$ and $p$ such that $p\le P(\sigma)$, there exists a $\tau$ and a $q$ such that $q\le Q(\tau)$ and $\bigstepp{C}{\sigma}{p}{\tau}{q}$.
    \item $\dJudge{P}{C}{Q}$ holds iff for all $\sigma$ and $p$ such that $p\le P(\sigma)$, there exists a $\tau$ and a $q$ such that $q\le Q(\tau)$ and $\bigstepl{C}{\sigma}{p}{\tau}{q}$.
  \end{itemize}
\end{definition}

\begin{definition}[Predicate Transformer] The $\mathrm{pre}$ and $\mathrm{post}$ transformers are defined as follows:
  \begin{itemize}
    \item $\pre{C}{Q}\coloneqq\lambda\sigma.\sup\{p:\exists\tau,q.q\le Q(\tau)\land\bigstepp{C}{\sigma}{p}{\tau}{q}\}$.
    \item $\prel{C}{Q}\coloneqq\lambda\sigma.\sup\{p:\exists\tau,q.q\le Q(\tau)\land\bigstepl{C}{\sigma}{p}{\tau}{q}\}$.
    \item $\post{C}{P}\coloneqq\lambda\tau.\inf\{q:\exists\sigma,p.p\ge P(\sigma)\land\bigstepp{C}{\sigma}{p}{\tau}{q}\}$.
  \end{itemize}
\end{definition}

\begin{proposition}\label{prop:const}
  For all $P,C$ such that $\Fv(P)\cap\Mod(C)=\emptyset$, for all $\sigma,p,\tau,q,l$, if $\bigstep{C}{\sigma}{p}{l}{\tau}{q}$, then $P(\sigma)=P(\tau)$.
\end{proposition}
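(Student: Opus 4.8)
The plan is to reduce the proposition to a purely syntactic ``footprint'' fact about the semantics: executing $C$ can only change the variables that $C$ is permitted to modify. Concretely, I would first establish an auxiliary lemma stating that for all $C,\sigma,p,l,\tau,q$, if $\bigstep{C}{\sigma}{p}{l}{\tau}{q}$, then $\sigma$ and $\tau$ agree on every variable outside $\Mod(C)$, i.e. $\sigma(y)=\tau(y)$ whenever $y\notin\Mod(C)$. Granting this lemma, the proposition follows at once: the hypothesis $\Fv(P)\cap\Mod(C)=\emptyset$ means $\Fv(P)\subseteq Var\setminus\Mod(C)$, so $\sigma$ and $\tau$ agree on all of $\Fv(P)$; since a resource function depends only on the values of its free variables (the defining coincidence property of $\Fv$), we conclude $P(\sigma)=P(\tau)$.

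The auxiliary lemma is proved by induction on the derivation of $\bigstep{C}{\sigma}{p}{l}{\tau}{q}$, ranging over the rules in \cref{fig:fullsemantics}. The cases \textsc{(BS:Skip)}, \textsc{(BS:Assume)}, \textsc{(BS:Tick)}, and \textsc{(BS:LoopZero)} are immediate, since each leaves the store unchanged ($\tau=\sigma$). For \textsc{(BS:Assign)} with command $\Assign{x}{e}$ we have $\tau=\sigma[x\mapsto\eval{e}{\sigma}]$ and $x\in\Mod(\Assign{x}{e})$, so any $y\notin\Mod$ satisfies $y\neq x$ and therefore $\sigma(y)=\tau(y)$. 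The compound cases \textsc{(BS:Seq)}, \textsc{(BS:ChoiceL)}, \textsc{(BS:ChoiceR)}, and \textsc{(BS:Loop)} follow by composing the induction hypotheses, using that $\Mod(\Seq{C_1}{C_2})=\Mod(\Choice{C_1}{C_2})=\Mod(C_1)\cup\Mod(C_2)$ and $\Mod(\Loop{C})=\Mod(C)$, so that a variable outside the larger modified set lies outside each component's modified set and the hypotheses chain together.

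The delicate case, and the main obstacle, is \textsc{(BS:Local)}: the conclusion $\bigstep{\Local{x}{C}}{\sigma[x\mapsto v]}{p}{l}{\tau[x\mapsto v]}{q}$ is derived from $\bigstep{C}{\sigma}{p}{l}{\tau}{q}$, so the initial and final states of the conclusion differ from those of the premise by an update at $x$, while $\Mod(\Local{x}{C})=\Mod(C)\setminus\{x\}$ excludes $x$. I would fix an arbitrary $y\notin\Mod(\Local{x}{C})$ and split on whether $y=x$. If $y=x$, both $\sigma[x\mapsto v]$ and $\tau[x\mapsto v]$ map $y$ to the same value $v$, so they agree. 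If $y\neq x$, then $y\notin\Mod(C)$ as well, so the induction hypothesis yields $\sigma(y)=\tau(y)$, and since the updates at $x$ do not touch $y$ we obtain $(\sigma[x\mapsto v])(y)=(\tau[x\mapsto v])(y)$. This step is exactly where the definition of $\Mod$ on the $\Local$ constructor (dropping the bound variable) must be used explicitly, which is why it is the crux of the argument; the remaining reasoning is routine.
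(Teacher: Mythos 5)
Your proof is correct. The paper itself states \cref{prop:const} without a textual proof (deferring to the Rocq mechanization), and your argument --- factoring the claim through a footprint lemma, proved by induction on the big-step derivation, that initial and final states agree on all variables outside $\Mod(C)$, and then invoking the coincidence property that $P$ depends only on $\Fv(P)$ --- is exactly the canonical argument one would expect that mechanization to contain. Your identification of \textsc{(BS:Local)} as the only nontrivial case, and your case split on $y=x$ there, is handled correctly; note also that doing induction on the derivation rather than on the structure of $C$ is what makes the \textsc{(BS:Loop)} case go through, since its premise concerns the syntactically larger command $\Seq{C}{\Loop{C}}$.
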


%

\begin{proposition}\label{prop:subst}
  For all $C,\sigma,p,\tau,q,l$ such that $\bigstep{C}{\sigma}{p}{l}{\tau}{q}$, for all $x,y,v$, if $y\notin\Fv(C)$, then $\bigstep{C[y/x]}{\sigma[x\mapsto v][y\mapsto\eval{x}{\sigma}]}{p}{l}{\tau[x\mapsto v][y\mapsto\eval{x}{\tau}]}{q}$.
\end{proposition}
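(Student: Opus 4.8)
The plan is to induct on the derivation of $\bigstep{C}{\sigma}{p}{l}{\tau}{q}$, renaming states in lockstep with the syntactic renaming $[y/x]$ applied to the command. Before doing so, I would establish (or invoke) the expected expression-level substitution lemma: whenever $y\notin\Fv(e)$ we have $\eval{e[y/x]}{\sigma[x\mapsto v][y\mapsto\eval{x}{\sigma}]}=\eval{e}{\sigma}$, and likewise for Boolean guards $B$. The point is that in the renamed state $\sigma[x\mapsto v][y\mapsto\eval{x}{\sigma}]$ the variable $y$ carries the old value of $x$, every occurrence of $x$ in $e$ has been rewritten to $y$, and the junk value $v$ placed at $x$ is never read because $x\notin\Fv(e[y/x])$. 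I would also first dispatch the degenerate case $x=y$, where both the substitution and the double state-update collapse to the identity so that the goal is exactly the hypothesis; thus in all remaining cases I may assume $x\ne y$.

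For the leaf rules \textsc{(BS:Skip)}, \textsc{(BS:Assign)}, \textsc{(BS:Assume)}, \textsc{(BS:Tick)}, and \textsc{(BS:LoopZero)}, the conclusion follows by re-applying the same rule to the renamed input state and simplifying. The only genuine work is in \textsc{(BS:Assign)}, where I would split on whether the assigned variable equals $x$; in both sub-cases the expression substitution lemma rewrites the stored value, and the resulting state-updates commute (using $x\ne y$ and freshness $y\notin\Fv(C)$) to exactly $\tau[x\mapsto v][y\mapsto\eval{x}{\tau}]$. For \textsc{(BS:Tick)}, the same lemma shows $\eval{e[y/x]}{\cdot}=\eval{e}{\sigma}$, so the recorded minimum $\min\{p,p-\eval{e}{\sigma}\}$ and residual $q$ are preserved verbatim.

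The inductive cases \textsc{(BS:ChoiceL)}, \textsc{(BS:ChoiceR)}, and \textsc{(BS:Loop)} are immediate: apply the induction hypothesis to the unique sub-derivation---noting $(\Seq{C}{\Loop{C}})[y/x]=\Seq{C[y/x]}{\Loop{C[y/x]}}$ for the loop-unrolling rule---and re-apply the constructor, using that freshness is inherited by sub-commands since $\Fv$ is monotone. The \textsc{(BS:Seq)} case is where the free choice of $v$ in the statement matters: I would instantiate the induction hypothesis for $C_1$ with the given $v$ to obtain an execution ending in $\rho[x\mapsto v][y\mapsto\eval{x}{\rho}]$, and then instantiate the induction hypothesis for $C_2$ with the \emph{same} $v$, so that its input state coincides exactly with that intermediate state; composing via \textsc{(BS:Seq)} threads the renaming through and leaves $\min\{l_1,l_2\}$ unchanged.

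The main obstacle is the binder rule \textsc{(BS:Local)} for $\Local{z}{C_0}$, where the interaction between the meta-level renaming $[y/x]$ and the object-level scope of $z$ must be handled by case analysis on how $z$ relates to $x$ and $y$. When $z$ is distinct from both, substitution commutes with the binder and the induction hypothesis applies after re-associating the (commuting) state-updates for $z$, $x$, and $y$; when $z=x$ the outer substitution does not descend under the binder, and one must check that overwriting the now-bound $x$ with the junk value $v$ is invisible to the body. I expect this case, together with confirming that the freshness hypothesis $y\notin\Fv(C)$ suffices to rule out capture (so that $z$ may be taken different from $y$, consistent with the mechanized development's variable representation), to require the most care; the remaining arithmetic on state-updates is routine bookkeeping of the kind the Rocq formalization discharges.
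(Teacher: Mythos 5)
The paper states \cref{prop:subst} with no inline proof at all---it is deferred entirely to the Rocq mechanization---so there is no textual argument to compare against. Your plan (induction on the derivation, an expression-level substitution lemma, dispatching $x=y$ as a degenerate identity, reusing the same $v$ across the two halves of \textsc{(BS:Seq)}) is exactly the proof one would expect, and your handling of the leaf and congruence cases is correct.

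However, the \textsc{(BS:Local)} case, which you rightly single out as the crux, is left genuinely open, and the difficulty there is more than bookkeeping. First, in the sub-case where the bound variable is $x$ itself, the substitution stops at the binder, so the induction hypothesis (which only speaks about renamed bodies $C_0[y'/x']$) gives you nothing: the target initial state carries $y\mapsto\eval{x}{\sigma}$, whereas the premise derivation for the body starts from a state where $y$ still holds its old value. What is needed is a separate agreement lemma---if $\bigstep{C_0}{\sigma_0}{p}{l}{\tau_0}{q}$ and $w\notin\Fv(C_0)$ then $\bigstep{C_0}{\sigma_0[w\mapsto u]}{p}{l}{\tau_0[w\mapsto u]}{q}$---which your sketch never names and which is not an instance of the proposition being proved. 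Second, the capture worry you flag is not merely a representational nuisance: under naive substitution, with $\Fv(\Local{z}{C_0})=\Fv(C_0)\setminus\{z\}$, the statement is actually \emph{false}. Take $C=\Local{y}{\Assign{x}{5}}$; then $y\notin\Fv(C)$, the original run sets $x$ to $5$ persistently, but $C[y/x]=\Local{y}{\Assign{y}{5}}$ discards the write on scope exit, so the claimed final state $\tau[x\mapsto v][y\mapsto\tau(x)]$ is unreachable whenever $\eval{x}{\sigma}\ne 5$. So this sub-case cannot be waved through ``consistent with the mechanized development''; it must be discharged explicitly by whichever convention the formalization adopts (capture-avoiding substitution, a stronger freshness side condition, or a binder representation that precludes the clash). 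Until both points are pinned down, the Local case of your proof is incomplete.
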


\begin{figure}
\begin{mathpar}
  \inferrule[(F:Skip)]{}{\fjudge{P}{\Skip}{P}}
  \hva\and
  \inferrule[(F:Assign)]{}{\fjudge{P}{\Assign{x}{e}}{\Inf x'.P[x'/x]\curlyvee[x\ne e[x'/x]]}}
  \hva\and
  \inferrule[(F:Assume)]{}{\fjudge{P\curlyvee[\neg B]}{\Assume{B}}{P\curlyvee[\neg B]}}
  \hva\and
  \inferrule[(F:Tick)]{}{\fjudge{P}{\Tick{e}}{P-e}}
  \hva\and
  \inferrule[(F:Seq)]{\fjudge{P}{C_1}{R} \\ \fjudge{R}{C_2}{Q}}{\fjudge{P}{\Seq{C_1}{C_2}}{Q}}
  \hva\and
  \inferrule[(F:ChoiceL)]{\fjudge{P}{C_1}{Q}}{\fjudge{P}{\Choice{C_1}{C_2}}{Q}}
  \hva\and
  \inferrule[(F:ChoiceR)]{\fjudge{P}{C_2}{Q}}{\fjudge{P}{\Choice{C_1}{C_2}}{Q}}
  \hva\and
  \inferrule[(F:Loop)]{\forall n<k.\fjudge{P(n)}{C}{P(n+1)}}{\fjudge{P(0)}{\Loop{C}}{P(k)}}
  \hva\and
  \inferrule[(F:Local)]{\fjudge{P}{C}{Q}}{\fjudge{\Inf x.P}{\Local{x}{C}}{\Inf x.Q}}
  \hva\and
  \inferrule[(F:Disj)]{\forall i\in I.\fjudge{P_i}{C}{Q_i}}{\fjudge{\bigcurlywedge_{i\in I}P_i}{C}{\bigcurlywedge_{i\in I}Q_i}}
  \hva\and
  \inferrule[(F:Constancy)]{\fjudge{P}{C}{Q}\\\Fv(B)\cap\Mod(C)=\emptyset}{\fjudge{P\curlyvee[B]}{C}{Q\curlyvee[B]}}
  \hva\and
  \inferrule[(F:Relax)]{\fjudge{P}{C}{Q}\\\Fv(F)\cap\Mod(C)=\emptyset}{\fjudge{P+F}{C}{Q+F}}
  \hva\and
  \inferrule[(F:Cons)]{P\preceq P'\\\fjudge{P'}{C}{Q'}\\Q'\preceq Q}{\fjudge{P}{C}{Q}}
  \hva\and
  \inferrule[(F:Subst)]{\fjudge{P}{C}{Q}\\y\notin\Fv(P)\cup\Fv(Q)\cup\Fv(C)}{\fjudge{P[y/x]}{C[y/x]}{Q[y/x]}}
\end{mathpar}
\caption{Proof Rules for QFUA Triples}
\label{fig:fullqfua}
\end{figure}

\cref{fig:fullqfua} shows the complete proof rules for QFUA triples.

\begin{theorem}[QFUA Soundness]\label{thm:qfua-sound}
  For all $P$, $C$, $Q$, if $\fjudge{P}{C}{Q}$ is derivable, then $\fJudge{P}{C}{Q}$ holds.
\end{theorem}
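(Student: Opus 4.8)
The plan is to proceed by structural induction on the derivation of $\fjudge{P}{C}{Q}$, unfolding the semantic definition of $\fJudge{P}{C}{Q}$ in each case. The defining condition has the shape $\forall\tau,q.\,\exists\sigma,p.\,(\cdots)$, quantified universally over the \emph{post}-configuration and existentially over the \emph{pre}-configuration; accordingly, in every case I would fix an arbitrary $\tau$ and $q\in\bbZ$ with $q\ge Q(\tau)$ and then \emph{construct} the witnessing $\sigma$ and $p$ by reasoning backward through the big-step rules of \cref{fig:fullsemantics}. A recurring observation I would isolate up front is that, since $q$ ranges over $\bbZ$ while the resource functions take values in $\bbZ\cup\{\pm\infty\}$, the hypothesis $q\ge Q(\tau)$ is vacuous whenever $Q(\tau)=+\infty$, and whenever $q\ge\inf_{i}R_i(\tau)$ for a family of $\bbZ\cup\{\pm\infty\}$-valued quantities there is always a concrete index (or value) $i$ with $q\ge R_i(\tau)$ — because an infimum of integers that is bounded below is attained, and an infimum equal to $-\infty$ is beaten by some member below the finite $q$. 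This discreteness fact is what lets the existential quantifiers in the semantics match the $\Inf$ and $\curlywedge$ operators appearing in the rules.

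For the base cases I would verify each rule directly. For \textsc{(F:Skip)}, \textsc{(F:Tick)}, and \textsc{(F:Assume)} the witness is essentially $\sigma=\tau$ with $p$ recovered from $q$ by inverting the resource-only effect of the command — e.g.\ for \textsc{(F:Tick)} taking $p=q+\eval{e}{\tau}$ so that \textsc{(BS:Tick)} lands at $\tau$ with residual $q$, and checking $p\ge P(\tau)$ from $q\ge P(\tau)-\eval{e}{\tau}$. The \textsc{(F:Assign)} case is the first genuinely quantitative one: since \textsc{(BS:Assign)} leaves the resource unchanged but rewrites the state, from $q\ge\bigl(\Inf x'.P[x'/x]\curlyvee[x\ne e[x'/x]]\bigr)(\tau)$ I would use the discreteness fact to extract a value $v$ for $x'$ with $\eval{e}{\tau[x\mapsto v]}=\tau(x)$ and $q\ge P(\tau[x\mapsto v])$, and then take $\sigma=\tau[x\mapsto v]$, $p=q$, so that \textsc{(BS:Assign)} maps $\sigma$ to $\sigma[x\mapsto\eval{e}{\sigma}]=\tau$.

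The compositional cases exploit the backward orientation of the semantics. For \textsc{(F:Seq)} I would apply the induction hypothesis for $C_2$ first, to the given $(\tau,q)$, obtaining an intermediate $(\rho,r)$ with $r\ge R(\rho)$ and $\bigstepp{C_2}{\rho}{r}{\tau}{q}$; feeding $(\rho,r)$ into the hypothesis for $C_1$ then yields $(\sigma,p)$ with $p\ge P(\sigma)$, and \textsc{(BS:Seq)} splices the two derivations. The choice rules are immediate from \textsc{(BS:ChoiceL)} and \textsc{(BS:ChoiceR)}, and \textsc{(F:Local)} threads the $\Inf x$ quantifier through \textsc{(BS:Local)}, again using the discreteness fact to select the bound value. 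For \textsc{(F:Loop)} I would run an inner induction on the unrolling index $k$: the base case $k=0$ uses \textsc{(BS:LoopZero)} with $\sigma=\tau$, $p=q$, and the step case shifts the subvariant family to $P'(n)\coloneqq P(n+1)$, combining one application of the hypothesis for $C$ with the inner induction hypothesis via \textsc{(BS:Loop)}, exactly as in the \textsc{(F:Seq)} case.

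The structural rules reduce to the auxiliary results already established. \textsc{(F:Cons)} only adjusts the inequalities $q\ge Q(\tau)\ge Q'(\tau)$ and $p\ge P'(\sigma)\ge P(\sigma)$; \textsc{(F:Disj)} again invokes the discreteness fact to pick an index $i$ with $q\ge Q_i(\tau)$, applies that premise, and notes $p\ge P_i(\sigma)\ge(\bigcurlywedge_j P_j)(\sigma)$; \textsc{(F:Subst)} is handled by \cref{prop:subst}. The most delicate of these is \textsc{(F:Relax)}: from $q\ge(Q+F)(\tau)$ I would apply the hypothesis at post-resource $q-F(\tau)$, obtaining a witness for $C$, then shift the entire derivation by $f=F(\tau)$ using \cref{lem:relax} to restore residual $q$; finally I would invoke \cref{prop:const} (legitimate since $\Fv(F)\cap\Mod(C)=\emptyset$) to equate $F(\sigma)=F(\tau)$, so that the shifted pre-resource $p+F(\tau)$ indeed dominates $(P+F)(\sigma)$. \textsc{(F:Constancy)} follows the same pattern with $[B]$ in place of $F$. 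I expect the main obstacle to be stating the discreteness/attainment argument once in sufficient generality to serve the \textsc{Assign}, \textsc{Disj}, and \textsc{Local} cases uniformly, together with the careful bookkeeping in \textsc{(F:Relax)} that the constant shift supplied by \cref{lem:relax} matches the state-indexed value of $F$ only because $F$ is invariant under $C$.
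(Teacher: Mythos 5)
Your proposal is correct and follows essentially the same route as the paper's proof: induction on the derivation, constructing the pre-configuration backward from the fixed $(\tau,q)$ via the big-step rules, applying the induction hypothesis for $C_2$ before $C_1$ in \textsc{(F:Seq)}, an inner induction for \textsc{(F:Loop)}, and the same use of \cref{lem:relax}, \cref{prop:const}, and \cref{prop:subst} for \textsc{(F:Relax)}, \textsc{(F:Constancy)}, and \textsc{(F:Subst)}. Your explicit isolation of the attainment-of-infima observation for the \textsc{Assign}/\textsc{Disj}/\textsc{Local} cases is a point the paper uses only implicitly, and your re-indexed inner induction for \textsc{(F:Loop)} is a cosmetic variant of the paper's descent from $P(k)$ to $P(0)$.
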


\begin{proof}
  By induction on the derivation.
  \begin{itemize}
    \item \textbf{Case} \textsc{(F:Skip)}:\\
      Let $P$ be arbitrary such that $\fjudge{P}{\Skip}{P}$. For any $\tau$ and $q$ satisfying $q\ge P(\tau)$, take $\sigma=\tau$ and $p=q$. Then we have $p\ge P(\sigma)$ and $\bigstep{\Skip}{\sigma}{p}{q}{\tau}{q}$ by \textsc{(BS:Skip)}.
    \item \textbf{Case} \textsc{(F:Assign)}:\\
      Let $P$ be arbitrary such that $\fjudge{P}{\Assign{x}{e}}{Q}$, where $Q=\Inf x'.P[x'/x]\curlyvee[x\ne e[x'/x]]$. For any $\tau$ and $q$ satisfying $q\ge Q(\tau)$, notice that
      \begin{align*}
        Q(\tau)
        &=\inf_v\left\{(P[x'/x]\curlyvee[x\ne e[x'/x]])(\tau[x'\mapsto v])\right\}\\
        &=\inf_v\{\max\{P[x'/x](\tau[x'\mapsto v]),[x\ne e[x'/x]](\tau[x'\mapsto v])\}\}\\
        &=\inf_v\{\max\{P(\tau[x\mapsto v]),[\eval{x}{\tau}\ne\eval{e}{\tau[x\mapsto v]}]\}\}.
      \end{align*}
      Thus, there exists a $v$ such that $q\ge P(\tau[x\mapsto v])$ and $\eval{x}{\tau}=\eval{e}{\tau[x\mapsto v]}$. Take $\sigma=\tau[x\mapsto v]$ and $p=q$, then we have $p\ge P(\sigma)$ and $\tau=\sigma[x\mapsto\eval{e}{\sigma}]$. Therefore, $\bigstep{\Assign{x}{e}}{\sigma}{p}{q}{\sigma}{q}$ holds by \textsc{(BS:Assign)}.
    \item \textbf{Case} \textsc{(F:Assume)}:\\
      Let $P$ be arbitrary such that $\fjudge{P\curlyvee[\neg B]}{\Assume{B}}{P\curlyvee[\neg B]}$. For any $\tau$ and $q$ satisfying $q\ge(P\curlyvee[\neg B])(\tau)$, we have $q\ge P(\tau)$ and $\eval{B}{\tau}=\true$. Take $\sigma=\tau$ and $p=q$. Then we have $p\ge(P\curlyvee[\neg B])(\sigma)$ and $\bigstep{\Assume{B}}{\sigma}{p}{q}{\sigma}{q}$ by \textsc{(BS:Assume)}.
    \item \textbf{Case} \textsc{(F:Tick)}:\\
      Let $P$ be arbitrary such that $\fjudge{P}{\Tick{e}}{P-e}$. For any $\tau$ and $q$ satisfying $q\ge (P-e)(\tau)=P(\tau)-\eval{e}{\tau}$, take $\sigma=\tau$ and $p=q+\eval{e}{\tau}$. Then we have $p\ge P(\sigma)$ and $\bigstep{\Tick{e}}{\sigma}{p}{\min\{p,p-\eval{e}{\sigma}\}}{\tau}{q}$ by \textsc{(BS:Tick)}.
    \item \textbf{Case} \textsc{(F:Seq)}:\\
      Let $P,Q,R,C_1,C_2$ be arbitrary such that $\fjudge{P}{C_1}{R}$ and $\fjudge{R}{C_2}{Q}$. By induction hypothesis, we have $\fJudge{P}{C_1}{R}$ and $\fJudge{R}{C_2}{Q}$. For any $\tau$ and $q$ satisfying $q\ge Q(\tau)$, there exists a $\rho$ and an $r$ such that $r\ge R(\rho)$ and $\bigstep{C_2}{\rho}{r}{l_2}{\tau}{q}$ for some $l_2$. Then, there exists a $\sigma$ and a $p$ such that $p\ge P(\sigma)$ and $\bigstep{C_1}{\sigma}{p}{l_1}{\rho}{r}$ for some $l_1$. Hence, $\bigstep{\Seq{C_1}{C_2}}{\sigma}{p}{\min\{l_1,l_2\}}{\tau}{q}$ holds by \textsc{(BS:Seq)}.
    \item \textbf{Case} \textsc{(F:ChoiceL)}:\\
      Let $P,Q,C_1,C_2$ be arbitrary such that $\fjudge{P}{C_1}{Q}$. By induction hypothesis, we have $\fJudge{P}{C_1}{Q}$. For any $\tau$ and $q$ satisfying $q\ge Q(\tau)$, there exists a $\sigma$ and a $p$ such that $p\ge P(\sigma)$ and $\bigstep{C_1}{\sigma}{p}{l}{\tau}{q}$ for some $l$. Hence, $\bigstep{\Choice{C_1}{C_2}}{\sigma}{p}{l}{\tau}{q}$ holds by \textsc{(BS:ChoiceL)}.
    \item \textbf{Case} \textsc{(F:ChoiceR)}:\\
      This case is similar to the previous one.
    \item \textbf{Case} \textsc{(F:Loop)}:\\
      Let $P,C$ be arbitrary such that $\forall n<k.\fjudge{P(n)}{C}{P(n+1)}$. By induction hypothesis, we have $\forall n<k.\fJudge{P(n)}{C}{P(n+1)}$. We will inductively prove that for all $0\le i\le k$, $\fJudge{P(k-i)}{\Loop{C}}{P(k)}$ holds:
      \begin{itemize}
        \item \textbf{Base case} $i=0$:\\
          Let $\tau$ and $q$ be arbitrary such that $q\ge P(k)(\tau)$. Take $\sigma=\tau$ and $p=q$. Then we have $p\ge P(k)(\sigma)$ and $\bigstep{\Loop{C}}{\sigma}{p}{p}{\sigma}{p}$ by \textsc{(BS:LoopZero)}. It follows that $\fJudge{P(k)}{\Loop{C}}{P(k)}$.
        \item \textbf{Inductive step} $i\to i+1$:\\
          Assume that $\fJudge{P(k-i)}{\Loop{C}}{P(k)}$ holds. Let $\tau$ and $q$ be arbitrary such that $q\ge P(k)(\tau)$. By the assumption, there exists a $\rho$ and an $r$ such that $r\ge P(k-i)(\rho)$ and $\bigstep{\Loop{C}}{\rho}{r}{l_1}{\tau}{q}$ for some $l_1$. Besides, $\fJudge{P(k-(i+1))}{C}{P(k-i)}$ implies that there exists a $\sigma$ and a $p$ such that $p\ge P(k-(i+1))(\sigma)$ and $\bigstep{C}{\sigma}{p}{l_2}{\rho}{r}$ for some $l_2$. Thus, $\bigstep{\Seq{C}{\Loop{C}}}{\sigma}{p}{l}{\tau}{q}$ holds by \textsc{(BS:Seq)} for $l=\min\{l_1,l_2\}$, and then $\bigstep{\Loop{C}}{\sigma}{p}{l}{\tau}{q}$ holds by \textsc{(BS:Loop)}. It follows that $\fJudge{P(k-(i+1))}{\Loop{C}}{P(k)}$.
      \end{itemize}
      Taking $i=k$, we have $\fJudge{P(0)}{\Loop{C}}{P(k)}$.
    \item \textbf{Case} \textsc{(F:Local)}:\\
      Let $P,Q,C$ be arbitrary such that $\fjudge{P}{C}{Q}$. By induction hypothesis, we have $\fJudge{P}{C}{Q}$. For any $\tau$ and $q$ satisfying $q\ge (\Inf x.Q)(\tau)=\inf_v\{Q(\tau[x\mapsto v])\}$, there exists a $v$ such that $q\ge Q(\tau[x\mapsto v])$. Since $\fJudge{P}{C}{Q}$, there exists a $\sigma$ and a $p$ such that $p\ge P(\sigma)$ and $\bigstep{C}{\sigma}{p}{l}{\tau[x\mapsto v]}{q}$ for some $l$. This implies that $p\ge (\Inf x.P)(\sigma[x\mapsto\eval{x}{\tau}])$ and $\bigstep{\Local{x}{C}}{\sigma[x\mapsto\eval{x}{\tau}]}{p}{l}{\tau}{q}$ by \textsc{(BS:Local)}. It follows that $\fJudge{\Inf x.P}{\Local{x}{C}}{\Inf x.Q}$.
    \item \textbf{Case} \textsc{(F:Disj)}:\\
      Let $C$ be arbitrary. And for $i\in I$, let $P_i$ and $Q_i$ be arbitrary such that $\fjudge{P_i}{C}{Q_i}$. By induction hypothesis, we have $\fJudge{P_i}{C}{Q_i}$. For any $\tau$ and $q$ satisfying $q\ge\left(\bigcurlywedge_{i\in I}Q_i\right)(\tau)$, there exists an $i\in I$ such that $q\ge Q_i(\tau)$. Since $\fJudge{P_i}{C}{Q_i}$, there exists a $\sigma$ and a $p$ such that $p\ge P_i(\sigma)$ and $\bigstep{C}{\sigma}{p}{l}{\tau}{q}$ for some $l$. Therefore, we have $p\ge\left(\bigcurlywedge_{i\in I}P_i\right)(\sigma)$ and $\bigstep{C}{\sigma}{p}{l}{\tau}{q}$.
    \item \textbf{Case} \textsc{(F:Constancy)}:\\
      Let $P,Q,B,C$ be arbitrary such that $\fjudge{P}{C}{Q}$ and $\Fv(B)\cap\Mod(C)=\emptyset$. By induction hypothesis, we have $\fJudge{P}{C}{Q}$. For any $\tau$ and $q$ satisfying $q\ge (Q\curlyvee[B])(\tau)$, we have $q\ge Q(\tau)$ and $\eval{\neg B}{\tau}=\true$. Then there exists a $\sigma$ and a $p$ such that $p\ge P(\sigma)$ and $\bigstep{C}{\sigma}{p}{l}{\tau}{q}$ for some $l$. By proposition \ref{prop:const}, we have $\eval{\neg B}{\sigma}=\true$, and then $p\ge (P\curlyvee[B])(\sigma)$. It follows that $\fJudge{P\curlyvee[B]}{C}{Q\curlyvee[B]}$.
    \item \textbf{Case} \textsc{(F:Relax)}:\\
      Let $P,Q,F,C$ be arbitrary such that $\fjudge{P}{C}{Q}$ and $\Fv(F)\cap\Mod(C)=\emptyset$. By induction hypothesis, we have $\fJudge{P}{C}{Q}$. For any $\tau$ and $q$ satisfying $q\ge (Q+F)(\tau)=Q(\tau)+F(\tau)$, we have $q-F(\tau)\ge Q(\tau)$. Then there exists a $\sigma$ and a $p$ such that $p\ge P(\sigma)$ and $\bigstep{C}{\sigma}{p}{l}{\tau}{q-F(\tau)}$ for some $l$. By lemma \ref{lem:relax} and proposition \ref{prop:const}, we have $\bigstep{C}{\sigma}{p+F(\sigma)}{l+F(\sigma)}{\tau}{q}$. Since $p+F(\sigma)\ge P(\sigma)+F(\sigma)=(P+F)(\sigma)$, we have $\fJudge{P+F}{C}{Q+F}$.
    \item \textbf{Case} \textsc{(F:Cons)}:\\
      Let $P,P',Q,Q',C$ be arbitrary such that $P\preceq P'$, $Q'\preceq Q$ and $\fjudge{P'}{C}{Q'}$. By induction hypothesis, we have $\fJudge{P'}{C}{Q'}$. For any $\tau$ and $q$ satisfying $q\ge Q(\tau)$, we have $q\ge Q'(\tau)$. Then there exists a $\sigma$ and a $p$ such that $p\ge P'(\sigma)$ and $\bigstep{C}{\sigma}{p}{l}{\tau}{q}$ for some $l$. Since $P(\sigma)\le P'(\sigma)$, we have $p\ge P(\sigma)$. It follows that $\fJudge{P}{C}{Q}$.
    \item \textbf{Case} \textsc{(F:Subst)}:\\
      Let $P,Q,C$ be arbitrary such that $\fjudge{P}{C}{Q}$, and let $x,y$ by arbitrary such that $y\notin\Fv(P)\cup\Fv(Q)\cup\Fv(C)$. By induction hypothesis, we have $\fJudge{P}{C}{Q}$. For any $\tau$ and $q$ satisfying $q\ge Q[y/x](\tau)=Q(\tau[x\mapsto\eval{y}{\tau}])$, we have $q\ge Q(\tau')$ since $y\notin\Fv(Q)$, where $\tau'=\tau[x\mapsto\eval{y}{\tau}][y\mapsto\eval{x}{\tau}]$. Then there exists a $\sigma$ and a $p$ such that $p\ge P(\sigma)$ and $\bigstep{C}{\sigma}{p}{l}{\tau'}{q}$ for some $l$. By proposition \ref{prop:subst}, we have $\bigstep{C[y/x]}{\sigma'}{p}{l}{\tau}{q}$, where $\sigma'=\sigma[x\mapsto\eval{x}{\tau}][y\mapsto\eval{x}{\sigma}]$. Since $y\notin\Fv(P)$, weh have $p\ge P(\sigma)=P[y/x](\sigma')$. It follows that $\fJudge{P[y/x]}{C[y/x]}{Q[y/x]}$.
  \end{itemize}
\end{proof}

\begin{figure}
\begin{mathpar}
  \inferrule[(B:Skip)]{}{\bjudge{P}{\Skip}{P}}
  \hva\and
  \inferrule[(B:Assign)]{}{\bjudge{P}{\Assign{x}{e}}{\Sup x'.P[x'/x]\curlywedge[x=e[x'/x]]}}
  \hva\and
  \inferrule[(B:Assume)]{}{\bjudge{P\curlywedge[B]}{\Assume{B}}{P\curlywedge[B]}}
  \hva\and
  \inferrule[(B:Tick)]{}{\bjudge{P}{\Tick{e}}{P-e}}
  \hva\and
  \inferrule[(B:Seq)]{\bjudge{P}{C_1}{R} \\ \bjudge{R}{C_2}{Q}}{\bjudge{P}{\Seq{C_1}{C_2}}{Q}}
  \hva\and
  \inferrule[(B:ChoiceL)]{\bjudge{P}{C_1}{Q}}{\bjudge{P}{C_1+C_2}{Q}}
  \hva\and
  \inferrule[(B:ChoiceR)]{\bjudge{P}{C_2}{Q}}{\bjudge{P}{C_1+C_2}{Q}}
  \hva\and
  \inferrule[(B:Loop)]{\forall n<k.\bjudge{P(n)}{C}{P(n+1)}}{\bjudge{P(0)}{C^\star}{P(k)}}
  \hva\and
  \inferrule[(B:Local)]{\bjudge{P}{C}{Q}}{\bjudge{\Sup x.P}{\Local{x}{C}}{\Sup x.Q}}
  \hva\and
  \inferrule[(B:Disj)]{\forall i\in I.\bjudge{P_i}{C}{Q_i}}{\bjudge{\bigcurlyvee_{i\in I}P_i}{C}{\bigcurlyvee_{i\in I}Q_i}}
  \hva\and
  \inferrule[(B:Constancy)]{\bjudge{P}{C}{Q}\\\Fv(B)\cap\Mod(C)=\emptyset}{\bjudge{P\curlywedge[B]}{C}{Q\curlywedge[B]}}
  \hva\and
  \inferrule[(B:Relax)]{\bjudge{P}{C}{Q}\\\Fv(F)\cap\Mod(C)=\emptyset}{\bjudge{P+F}{C}{Q+F}}
  \hva\and
  \inferrule[(B:Cons)]{P\preceq P' \\ \judge{P'}{C}{Q'} \\ Q'\preceq Q}{\judge{P}{C}{Q}}
  \hva\and
  \inferrule[(B:Subst)]{\bjudge{P}{C}{Q}\\y\notin\Fv(P)\cup\Fv(Q)\cup\Fv(C)}{\bjudge{P[y/x]}{C[y/x]}{Q[y/x]}}
\end{mathpar}
\caption{Proof Rules for QBUA Triples}
\label{fig:fullqbua}
\end{figure}

\cref{fig:fullqbua} shows the complete proof rules for QBUA triples.

\begin{theorem}[QBUA Soundness]\label{thm:qbua-sound}
  For all $P$, $C$, $Q$, if $\bjudge{P}{C}{Q}$ is derivable, then $\bJudge{P}{C}{Q}$ holds.
\end{theorem}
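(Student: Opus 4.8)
The plan is to prove soundness by induction on the derivation of $\bjudge{P}{C}{Q}$, mirroring the structure of the QFUA soundness proof (\cref{thm:qfua-sound}) but dualizing every ingredient: the universal quantifier now ranges over pre-states $(\sigma,p)$ with $p\le P(\sigma)$, and we must exhibit a post-state $(\tau,q)$ with $q\le Q(\tau)$; accordingly $\curlyvee$ becomes $\curlywedge$, $\Inf$ becomes $\Sup$, and every inequality flips direction. Since the big-step semantics always runs forward (from $\sigma,p$ to $\tau,q$), the witness construction for QBUA is in fact more direct than for QFUA: we follow the execution forward rather than reconstructing it backward from the post-state.

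For the axiom cases I would proceed as follows. \textsc{(B:Skip)}, \textsc{(B:Tick)}, and \textsc{(B:Assume)} are immediate: given $\sigma,p$ we take $\tau=\sigma$ (or the tick/assume result) and set $q$ to the residual resource, and the required inequality $q\le Q(\tau)$ holds by definition of the postcondition (for \textsc{(B:Assume)} the premise $p\le(P\curlywedge[B])(\sigma)$ with $p\in\mathbb{Z}$ forces $[B](\sigma)=+\infty$, i.e.\ $B(\sigma)=\true$, so \textsc{(BS:Assume)} applies). For \textsc{(B:Assign)} I would, dually to the QFUA case, choose the witness $x'\mapsto\eval{x}{\sigma}$ inside the $\Sup$, so that the $[x=e[x'/x]]$ conjunct evaluates to $+\infty$ while the $P[x'/x]$ conjunct evaluates to $P(\sigma)\ge p$. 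The compositional cases \textsc{(B:Seq)}, \textsc{(B:ChoiceL)}, \textsc{(B:ChoiceR)} chain the induction hypotheses forward: for sequencing, run $C_1$ from $(\sigma,p)$ to some $(\rho,r)$ with $r\le R(\rho)$, then feed $(\rho,r)$ into the hypothesis for $C_2$, and combine via \textsc{(BS:Seq)}.

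The structural cases reuse the auxiliary results already established. \textsc{(B:Constancy)} and \textsc{(B:Relax)} invoke \cref{prop:const} (so that $B$, resp.\ $F$, is unchanged from $\sigma$ to $\tau$); \textsc{(B:Relax)} additionally applies \cref{lem:relax} to shift the whole execution by $F(\sigma)$, and \textsc{(B:Subst)} uses \cref{prop:subst}. \textsc{(B:Cons)} follows by transitivity of $\le$. The cases \textsc{(B:Local)} and \textsc{(B:Disj)} require one extra observation: from $p\le\sup_v\{P(\sigma[x\mapsto v])\}$ (resp.\ $p\le\sup_i P_i(\sigma)$) with $p\in\mathbb{Z}$, I can extract an actual witness $v$ (resp.\ $i$) attaining $p\le P(\sigma[x\mapsto v])$, because a nonempty set of extended integers whose supremum is at least a finite $p$ must contain an element $\ge p$. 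This discreteness fact is the dual of the ``infimum becomes minimum'' remark used for QFUA existentials; \textsc{(B:Local)} then applies \textsc{(BS:Local)}, restoring $x$ to its original value at exit.

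The main obstacle is the loop case \textsc{(B:Loop)}. The naive forward induction---building $\bJudge{P(0)}{\Loop{C}}{P(i)}$ and appending one more copy of $C$ at the end---does not work, because \textsc{(BS:Loop)} unfolds $\Loop{C}$ front-to-back as $\Seq{C}{\Loop{C}}$, so an extra iteration attaches at the front, not the back. To avoid an auxiliary ``append-at-end'' lemma, I would instead perform downward induction exactly as in the QFUA proof: show $\bJudge{P(i)}{\Loop{C}}{P(k)}$ for $i=k,k-1,\dots,0$. The base case $i=k$ uses \textsc{(BS:LoopZero)}; in the step from $i+1$ to $i$, given $(\sigma,p)$ with $p\le P(i)(\sigma)$, I first run one iteration of $C$ (hypothesis $\bJudge{P(i)}{C}{P(i+1)}$) to reach $(\rho,r)$ with $r\le P(i+1)(\rho)$, then invoke $\bJudge{P(i+1)}{\Loop{C}}{P(k)}$ to reach $(\tau,q)$ with $q\le P(k)(\tau)$, and finally glue the two via \textsc{(BS:Seq)} followed by \textsc{(BS:Loop)}, which is precisely the $\Seq{C}{\Loop{C}}$ unfolding. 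Taking $i=0$ yields the desired conclusion.
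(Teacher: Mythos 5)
Your proposal is correct and follows essentially the same route as the paper's proof: induction on the derivation, forward construction of the witness $(\tau,q)$ from $(\sigma,p)$, the witness $x'\mapsto\eval{x}{\sigma}$ for \textsc{(B:Assign)}, witness extraction from the supremum for \textsc{(B:Local)} and \textsc{(B:Disj)}, the auxiliary results \cref{lem:relax}, \cref{prop:const}, \cref{prop:subst} in the structural cases, and the same downward induction $\bJudge{P(k-i)}{\Loop{C}}{P(k)}$ for the loop case glued via \textsc{(BS:Seq)} and \textsc{(BS:Loop)}. No gaps.
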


\begin{proof}
  By induction on the derivation.
  \begin{itemize}
    \item \textbf{Case} \textsc{(B:Skip)}:\\
      Let $P$ be arbitrary such that $\bjudge{P}{\Skip}{P}$. For any $\sigma$ and $p$ satisfying $p\le P(\sigma)$, take $\tau=\sigma$ and $q=p$. Then we have $q\le P(\tau)$ and $\bigstep{\Skip}{\sigma}{p}{p}{\tau}{q}$ by \textsc{(BS:Skip)}.
    \item \textbf{Case} \textsc{(B:Assign)}:\\
      Let $P$ be arbitrary such that $\bjudge{P}{\Assign{x}{e}}{Q}$, where $Q=\Sup x'.P[x'/x]\curlywedge[x=e[x'/x]]$. For any $\sigma$ and $p$ satisfying $p\le P(\sigma)$, take $\tau=\sigma[x\mapsto\eval{e}{\sigma}]$ and $q=p$. Notice that
      \begin{align*}
        Q(\tau)
        &=\sup_v\{(P[x'/x]\curlywedge[x=e[x'/x]])(\tau[x\mapsto v])\}\\
        &=\sup_v\{\min\{(P[x'/x])(\tau[x\mapsto v]),[x=e[x'/x]](\tau[x\mapsto v])\}\}\\
        &=\sup_v\{\min\{P(\tau[x\mapsto v]),[\eval{x}{\tau}=\eval{e}{\tau[x\mapsto v]}]\}\}\\
        &\ge\min\{P(\tau[x\mapsto\eval{x}\sigma]),[\eval{x}{\tau}=\eval{e}{\tau[x\mapsto\eval{\sigma}{x}]}]\}\\
        &=P(\sigma).
    \end{align*}
    Combining with $\bigstep{\Assign{x}{e}}{\sigma}{p}{p}{\tau}{q}$ by \textsc{(BS:Assign)}, we have $\bJudge{P}{\Assign{x}{e}}{Q}$.
    \item \textbf{Case} \textsc{(B:Assume)}:\\
      Let $P$ be arbitrary such that $\bjudge{P\curlywedge[B]}{\Assume{B}}{P\curlywedge[B]}$. For any $\sigma$ and $p$ satisfying $p\le (P\curlywedge[B])(\sigma)$, we have $p\le P(\sigma)$ and $\eval{B}{\sigma}=\true$. Take $\tau=\sigma$ and $q=p$. Then we have $q\le (P\curlywedge[B])(\tau)$ and $\bigstep{\Assume{B}}{\sigma}{p}{p}{\tau}{q}$ by \textsc{(BS:Assume)}.
    \item \textbf{Case} \textsc{(B:Tick)}:\\
      Let $P$ be arbitrary such that $\bjudge{P}{\Tick{e}}{P-e}$. For any $\sigma$ and $p$ satisfying $p\le P(\sigma)$, take $\tau=\sigma$ and $q=p-\eval{e}{\sigma}$. Then we have $q\le (P-e)(\tau)$ and $\bigstep{\Tick{e}}{\sigma}{p}{\min\{p,p-\eval{e}{\sigma}\}}{\tau}{q}$ by \textsc{(BS:Tick)}.
    \item \textbf{Case} \textsc{(B:Seq)}:\\
      Let $P,Q,R,C_1,C_2$ be arbitrary such that $\bjudge{P}{C_1}{R}$ and $\bjudge{R}{C_2}{Q}$. By induction hypothesis, we have $\bJudge{P}{C_1}{R}$ and $\bJudge{R}{C_2}{Q}$. For any $\sigma$ and $p$ satisfying $p\le P(\sigma)$, there exists a $\rho$ and an $r$ such that $r\le R(\rho)$ and $\bigstep{C_1}{\sigma}{p}{l_1}{\rho}{r}$ for some $l_1$. Then, there exists a $\tau$ and a $q$ such that $q\le Q(\tau)$ and $\bigstep{C_2}{\rho}{r}{l_2}{\tau}{q}$ for some $l_2$. Hince, $\bigstep{\Seq{C_1}{C_2}}{\sigma}{p}{\min\{l_1,l_2\}}{\tau}{q}$ holds by \textsc{(BS:Seq)}.
    \item \textbf{Case} \textsc{(B:ChoiceL)}:\\
      Let $P,Q,C_1,C_2$ be arbitrary such that $\bjudge{P}{C_1}{Q}$. By induction hypothesis, we have $\bJudge{P}{C_1}{Q}$. For any $\sigma$ and $p$ satisfying $p\le P(\sigma)$, there exists a $\tau$ and a $q$ such that $q\le Q(\tau)$ and $\bigstep{C_1}{\sigma}{p}{l}{\tau}{q}$ for some $l$. Hence, $\bigstep{\Choice{C_1}{C_2}}{\sigma}{p}{l}{\tau}{q}$ holds by \textsc{(BS:ChoiceL)}.
    \item \textbf{Case} \textsc{(B:ChoiceR)}:\\
      This case is similar to the previous one.
    \item \textbf{Case} \textsc{(B:Loop)}:\\
      Let $P,C$ be arbitrary such that $\forall n<k.\bjudge{P(n)}{C}{P(n+1)}$. By induction hypothesis, we have $\forall n<k.\bJudge{P(n)}{C}{P(n+1)}$. We will inductively prove that for all $0\le i\le k$, $\bJudge{P(k-i)}{\Loop{C}}{P(k)}$ holds:
      \begin{itemize}
        \item \textbf{Base case} $i=0$:\\
          Let $\sigma$ and $p$ be arbitrary such that $p\le P(k)(\sigma)$. Take $\tau=\sigma$ and $q=p$. Then we have $q\le P(k)(\tau)$ and $\bigstep{\Loop{C}}{\sigma}{p}{p}{\tau}{q}$ by \textsc{(BS:LoopZero)}. It follows that $\bJudge{P(k)}{\Loop{C}}{P(k)}$.
        \item \textbf{Inductive step} $i\to i+1$:\\
          Assume that $\bJudge{P(k-i)}{\Loop{C}}{P(k)}$ holds. Let $\sigma$ and $p$ be arbitrary such that $p\le P(k-(i+1))(\sigma)$. By $\bJudge{P(k-(i+1))}{C}{P(k-i)}$, there exists a $\rho$ and an $r$ such that $r\le P(k-i)(\rho)$ and $\bigstep{C}{\sigma}{p}{l_1}{\rho}{r}$ for some $l_1$. Besides, $\bJudge{P(k-i)}{\Loop{C}}{P(k)}$ implies that there exists a $\tau$ and a $q$ such that $q\le P(k)(\tau)$ and $\bigstep{\Loop{C}}{\rho}{r}{l_2}{\tau}{q}$ for some $l_2$. Thus, $\bigstep{\Seq{C}{\Loop{C}}}{\sigma}{p}{l}{\tau}{q}$ holds by \textsc{(BS:Seq)} for $l=\min\{l_1,l_2\}$, and then $\bigstep{\Loop{C}}{\sigma}{p}{l}{\tau}{q}$ holds by \textsc{(BS:Loop)}. It follows that $\bJudge{P(k-(i+1))}{\Loop{C}}{P(k)}$.
        \end{itemize}
        Taking $i=k$, we have $\bJudge{P(0)}{\Loop{C}}{P(k)}$.
    \item \textbf{Case} \textsc{(B:Local)}:\\
      Let $P,Q,C$ be arbitrary such that $\bjudge{P}{C}{Q}$. By induction hypothesis, we have $\bJudge{P}{C}{Q}$. For any $\sigma$ and $p$ satisfying $p\le (\Sup x.P)(\sigma)=\sup_v\{P(\sigma[x\mapsto v])\}$, there exists a $v$ such that $p\le P(\sigma[x\mapsto v])$. Since $\bJudge{P}{C}{Q}$, there exists a $\tau$ and a $q$ such that $q\le Q(\tau)$ and $\bigstep{C}{\sigma[x\mapsto v]}{p}{l}{\tau}{q}$ for some $l$. This implies that $q\le(\Sup x.Q)(\tau[x\mapsto\eval{x}{\sigma}])$ and $\bigstep{\Local{x}{C}}{\sigma}{p}{l}{\tau[x\mapsto\eval{x}{\sigma}]}{q}$ by \textsc{(BS:Local)}. It follows that $\bJudge{\Sup x.P}{\Local{x}{C}}{\Sup x.Q}$.
    \item \textbf{Case} \textsc{(B:Disj)}:\\
      Let $C$ be arbitrary. And for $i\in I$, let $P_i$ and $Q_i$ be arbitrary such that $\bjudge{P_i}{C}{Q_i}$. By induction hypothesis, we have $\bJudge{P_i}{C}{Q_i}$. For any $\sigma$ and $p$ satisfying $p\le\left(\bigcurlyvee_{i\in I}P_i\right)(\sigma)$, there exists an $i\in I$ such that $p\le P_i(\sigma)$. Since $\bJudge{P_i}{C}{Q_i}$, there exists a $\tau$ and a $q$ such that $q\le Q_i(\tau)$ and $\bigstep{C}{\sigma}{p}{l}{\tau}{q}$ for some $l$. Therefore, we have $q\le\left(\bigcurlyvee_{i\in I}Q_i\right)(\tau)$ and $\bigstep{C}{\sigma}{p}{l}{\tau}{q}$.
    \item \textbf{Case} \textsc{(B:Constancy)}:\\
      Let $P,Q,B,C$ be arbitrary such that $\bjudge{P}{C}{Q}$ and $\Fv(B)\cap\Mod(C)=\emptyset$. By induction hypothesis, we have $\bJudge{P}{C}{Q}$. For any $\sigma$ and $p$ satisfying $p\le (P\curlywedge[B])(\sigma)$, we have $p\le P(\sigma)$ and $\eval{B}{\sigma}=\true$. Then there exists a $\tau$ and a $q$ such that $q\le Q(\tau)$ and $\bigstep{C}{\sigma}{p}{l}{\tau}{q}$ for some $l$. By proposition \ref{prop:const}, we have $\eval{B}{\tau}=\true$, and then $q\le (Q\curlywedge[B])(\tau)$. It follows that $\bJudge{P\curlywedge[B]}{C}{Q\curlywedge[B]}$.
    \item \textbf{Case} \textsc{(B:Relax)}:\\
      Let $P,Q,F,C$ be arbitrary such that $\bjudge{P}{C}{Q}$ and $\Fv(F)\cap\Mod(C)=\emptyset$. By induction hypothesis, we have $\bJudge{P}{C}{Q}$. For any $\sigma$ and $p$ satisfying $p\le (P+F)(\sigma)=P(\sigma)+F(\sigma)$, we have $p-F(\sigma)\le P(\sigma)$. Then there exists a $\tau$ and a $q$ such that $q\le Q(\tau)$ and $\bigstep{C}{\sigma}{p-F(\sigma)}{l}{\tau}{q}$ for some $l$. By lemma \ref{lem:relax} and proposition \ref{prop:const}, we have $\bigstep{C}{\sigma}{p}{l+F(\tau)}{\tau}{q+F(\tau)}$. Since $q+F(\tau)\le Q(\tau)+F(\tau)=(Q+F)(\tau)$, we have $\bJudge{P+F}{C}{Q+F}$.
    \item \textbf{Case} \textsc{(B:Cons)}:\\
      Let $P,P',Q,Q',C$ be arbitrary such that $P\preceq P'$, $Q'\preceq Q$ and $\bjudge{P'}{C}{Q'}$. By induction hypothesis, we have $\bJudge{P'}{C}{Q'}$. For any $\sigma$ and $p$ satisfying $p\le P(\sigma)$, we have $p\le P'(\sigma)$. Then there exists a $\tau$ and a $q$ such that $q\le Q'(\tau)$ and $\bigstep{C}{\sigma}{p}{l}{\tau}{q}$ for some $l$. Since $Q'(\tau)\le Q(\tau)$, we have $q\le Q(\tau)$. It follows that $\bJudge{P}{C}{Q}$.
    \item \textbf{Case} \textsc{(B:Subst)}:\\
      Let $P,Q,C$ be arbitrary such that $\bjudge{P}{C}{Q}$, and let $x,y$ be arbitrary such that $y\notin\Fv(P)\cup\Fv(Q)\cup\Fv(C)$. By induction hypothesis, we have $\bJudge{P}{C}{Q}$. For any $\sigma$ and $p$ satisfying $p\le P[y/x](\sigma)=P(\sigma[x\mapsto\eval{y}{\sigma}])$, we have $p\le P(\sigma')$ since $y\notin\Fv(P)$, where $\sigma'=\sigma[x\mapsto\eval{y}{\sigma}][y\mapsto\eval{x}{\sigma}]$. Then there exists a $\tau$ and a $q$ such that $q\le Q(\tau)$ and $\bigstep{C}{\sigma'}{p}{l}{\tau}{q}$ for some $l$. By proposition \ref{prop:subst}, we have $\bigstep{C[y/x]}{\sigma}{p}{l}{\tau'}{q}$, where $\tau'=\tau[x\mapsto\eval{x}{\sigma}][y\mapsto\eval{x}{\tau}]$. Since $y\notin\Fv(Q)$, we have $q\le Q(\tau)=Q[y/x](\tau')$. It follows that $\bJudge{P[y/x]}{C[y/x]}{Q[y/x]}$.
  \end{itemize}
\end{proof}

\begin{figure}
\begin{mathpar}
  \inferrule[(\Bd:Skip)]{P\preceq 0}{\djudge{P}{\Skip}{P}}
  \hva\and
  \inferrule[(\Bd:Assign)]{P\preceq 0}{\djudge{P}{\Assign{x}{e}}{\Sup x'.P[x'/x]\curlywedge[x=e[x'/x]]}}
  \hva\and
  \inferrule[(\Bd:Assume)]{P\preceq 0}{\djudge{P\curlywedge[B]}{\Assume{B}}{P\curlywedge[B]}}
  \hva\and
  \inferrule[(\Bd:Tick)]{P\curlywedge P-e\preceq 0}{\djudge{P}{\Tick{e}}{P-e}}
  \hva\and
  \inferrule[(\Bd:SeqL)]{\djudge{P}{C_1}{R} \\ \bjudge{R}{C_2}{Q}}{\djudge{P}{\Seq{C_1}{C_2}}{Q}}
  \hva\and
  \inferrule[(\Bd:SeqR)]{\bjudge{P}{C_1}{R} \\ \djudge{R}{C_2}{Q}}{\djudge{P}{\Seq{C_1}{C_2}}{Q}}
  \hva\and
  \inferrule[(\Bd:ChoiceL)]{\djudge{P}{C_1}{Q}}{\djudge{P}{C_1+C_2}{Q}}
  \hva\and
  \inferrule[(\Bd:ChoiceR)]{\djudge{P}{C_2}{Q}}{\djudge{P}{C_1+C_2}{Q}}
  \hva\and
  \inferrule[(\Bd:LoopZero)]{P\preceq 0}{\djudge{P}{C^\star}{P}}
  \hva\and
  \inferrule[(\Bd:Loop)]{\forall n<k.\bjudge{P(n)}{C}{P(n+1)} \\ \exists m<k.\djudge{P(m)}{C}{P(m+1)}}{\djudge{P(0)}{C^\star}{P(k)}}
  \hva\and
  \inferrule[(\Bd:Local)]{\djudge{P}{C}{Q}}{\djudge{\Sup x.P}{\Local{x}{C}}{\Sup x.Q}}
  \hva\and
  \inferrule[(\Bd:Disj)]{\forall i\in I.\djudge{P_i}{C}{Q_i}}{\djudge{\bigcurlyvee_{i\in I}P_i}{C}{\bigcurlyvee_{i\in I}Q_i}}
  \hva\and
  \inferrule[\Bd:Constancy)]{\djudge{P}{C}{Q}\\\Fv(B)\cap\Mod(C)=\emptyset}{\djudge{P\curlywedge[B]}{C}{Q\curlywedge[B]}}
  \hva\and
  \inferrule[(\Bd:Relax)]{\djudge{P}{C}{Q}\\\Fv(F)\cap\Mod(C)=\emptyset\\F\preceq 0}{\djudge{P+F}{C}{Q+F}}
  \hva\and
  \inferrule[(\Bd:Cons)]{P\preceq P' \\ \djudge{P'}{C}{Q'} \\ Q'\preceq Q}{\djudge{P}{C}{Q}}
  \hva\and
  \inferrule[(\Bd:Subst)]{\djudge{P}{C}{Q}\\y\notin\Fv(P)\cup\Fv(Q)\cup\Fv(C)}{\djudge{P[y/x]}{C[y/x]}{Q[y/x]}}
\end{mathpar}
\caption{Proof Rules for \QBUAd Triples}
\label{fig:fullqbuad}
\end{figure}

\cref{fig:fullqbuad} shows the complete proof rules for \QBUAd triples.

\begin{theorem}[\QBUAd Soundness]\label{thm:qbuad-sound}
  For all $P$, $C$, $Q$, if $\djudge{P}{C}{Q}$ is derivable, then $\dJudge{P}{C}{Q}$ holds.
\end{theorem}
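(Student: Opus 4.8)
The plan is to induct on the derivation of $\djudge{P}{C}{Q}$, mirroring the structure of the QBUA soundness proof (\cref{thm:qbua-sound}) but strengthening every case so that the minimal resource level $l$ recorded by the big-step semantics is witnessed to be $\le 0$; this is exactly the extra content of $\bigstepl{C}{\sigma}{p}{\tau}{q}$ over $\bigstepp{C}{\sigma}{p}{\tau}{q}$. For the atomic rules \textsc{(\Bd:Skip)}, \textsc{(\Bd:Assign)} and \textsc{(\Bd:Assume)} the semantics records $l=p$, so the side condition $P\preceq 0$ together with $p\le P(\sigma)$ gives $l=p\le 0$. For \textsc{(\Bd:Tick)} the recorded level is $l=\min\{p,p-\eval{e}{\sigma}\}$, and since $p\le P(\sigma)$ the side condition $P\curlywedge(P-e)\preceq 0$ propagates to $l\le 0$ by monotonicity of $\min$.

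The rules that neither branch nor sequence---\textsc{(\Bd:ChoiceL)}/\textsc{(\Bd:ChoiceR)}, \textsc{(\Bd:Local)}, \textsc{(\Bd:Disj)}, \textsc{(\Bd:Constancy)}, \textsc{(\Bd:Cons)} and \textsc{(\Bd:Subst)}---reuse their QBUA counterparts almost verbatim: the induction hypothesis already supplies an execution with $l\le 0$, and the same manipulation (appealing to \cref{lem:relax}, \cref{prop:const} or \cref{prop:subst}) leaves the recorded level either unchanged or replaced by a value still $\le 0$. The one place requiring the new side condition is \textsc{(\Bd:Relax)}: applying \cref{lem:relax} shifts the level from $l$ to $l+F(\tau)$, so preserving $l+F(\tau)\le 0$ forces us to use the hypothesis $F\preceq 0$ (hence $F(\tau)\le 0$)---this is precisely why \textsc{(\Bd:Relax)} is more restrictive than \textsc{(B:Relax)}.

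The first genuinely new phenomenon is that the sequencing rules mix the two judgement forms, so the witness of non-positivity may originate in either sub-derivation. Note that in \textsc{(\Bd:SeqL)}, \textsc{(\Bd:SeqR)} and \textsc{(\Bd:Loop)} the premises written with $\vdash_{\mathsf{B}}$ are ordinary QBUA triples; their validity is supplied by the already-established \cref{thm:qbua-sound}, whereas the induction hypothesis applies only to the $\vdash^{\Diamond}_{\mathsf{B}}$ sub-derivations. For \textsc{(\Bd:SeqL)} I would execute $C_1$ using the QBUAd hypothesis, obtaining a witness with level $l_1\le 0$, then continue through $C_2$ using QBUA soundness with some level $l_2$; \textsc{(BS:Seq)} records $\min\{l_1,l_2\}\le l_1\le 0$, so the composite is a valid $\bigstepl{\Seq{C_1}{C_2}}{\sigma}{p}{\tau}{q}$ derivation. \textsc{(\Bd:SeqR)} is symmetric, and \textsc{(\Bd:LoopZero)} collapses to the \textsc{(\Bd:Skip)} argument.

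The main obstacle is \textsc{(\Bd:Loop)}, whose premises give QBUA validity $\bJudge{P(n)}{C}{P(n+1)}$ for all $n<k$ (via \cref{thm:qbua-sound}) but QBUAd validity $\dJudge{P(m)}{C}{P(m+1)}$ only at a single distinguished index $m<k$. Since the non-positive iteration sits in the middle of the unrolling, the tail-only induction used for \textsc{(B:Loop)} does not apply directly, and I plan to build the execution in two stages. First, reusing the tail induction of \cref{thm:qbua-sound} I obtain $\bJudge{P(m+1)}{\Loop{C}}{P(k)}$; prepending the single distinguished iteration via \textsc{(BS:Seq)} and \textsc{(BS:Loop)} then yields $\dJudge{P(m)}{\Loop{C}}{P(k)}$, because that iteration contributes a level $l_1\le 0$ and the recorded minimum is therefore $\le 0$. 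Second, I prepend the earlier iterations $m-1,\dots,0$ by a downward induction: from $\bJudge{P(j)}{C}{P(j+1)}$ and $\dJudge{P(j+1)}{\Loop{C}}{P(k)}$ one derives $\dJudge{P(j)}{\Loop{C}}{P(k)}$, since \textsc{(BS:Seq)} takes the minimum of the head level $l_1$ and the tail level $l_2\le 0$, keeping the witness $\le 0$. Instantiating the downward induction at $j=0$ gives the goal $\dJudge{P(0)}{\Loop{C}}{P(k)}$. I expect all remaining difficulty to be bookkeeping---threading the chain of intermediate states and residual resources and confirming that the minimum level stays $\le 0$ once the distinguished iteration has fired---with no semantic lemma beyond \cref{thm:qbua-sound} and \cref{lem:relax} needed.
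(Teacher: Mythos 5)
Your proposal is correct and follows essentially the same route as the paper's proof: induction on the derivation, discharging the atomic cases via the $P\preceq 0$ (resp.\ $P\curlywedge P-e\preceq 0$) side conditions, invoking \cref{thm:qbua-sound} for the plain QBUA premises of \textsc{(\Bd:SeqL)}, \textsc{(\Bd:SeqR)} and \textsc{(\Bd:Loop)}, and handling \textsc{(\Bd:Loop)} by first establishing $\dJudge{P(m)}{\Loop{C}}{P(k)}$ from the distinguished iteration plus $\bJudge{P(m+1)}{\Loop{C}}{P(k)}$ and then prepending iterations $m-1,\dots,0$ by a downward induction. The treatment of \textsc{(\Bd:Relax)} via \cref{lem:relax} and $F\preceq 0$ also matches the paper exactly.
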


\begin{proof}
  By induction on the derivation.
  \begin{itemize}
    \item \textbf{Case} \textsc{(\Bd:Skip)}:\\
      Let $P$ be arbitrary such that $P\preceq 0$. For any $\sigma$ and $p$ satisfying $p\le P(\sigma)\le 0$, take $\tau=\sigma$ and $q=p$. Then we have $q\le P(\tau)$ and $\bigstep{\Skip}{\sigma}{p}{p}{\tau}{q}$ by \textsc{(BS:Skip)} for $p\le 0$. It follows that $\dJudge{P}{\Skip}{P}$.
    \item \textbf{Case} \textsc{(\Bd:Assign)}:\\
      Let $P$ be arbitrary such that $P\preceq 0$, and let $Q=\Sup x'.P[x'/x]\curlywedge[x=e[x'/x]]$. For any $\sigma$ and $p$ satisfying $p\le P(\sigma)$, take $\tau=\sigma[x\mapsto\eval{e}{\sigma}]$ and $q=p$. Notice that
      \begin{align*}
        Q(\tau)
        &=\sup_v\{(P[x'/x]\curlywedge[x=e[x'/x]])(\tau[x\mapsto v])\}\\
        &=\sup_v\{\min\{(P[x'/x])(\tau[x\mapsto v]),[x=e[x'/x]](\tau[x\mapsto v])\}\}\\
        &=\sup_v\{\min\{P(\tau[x\mapsto v]),[\eval{x}{\tau}=\eval{e}{\tau[x\mapsto v]}]\}\}\\
        &\ge\min\{P(\tau[x\mapsto\eval{x}\sigma]),[\eval{x}{\tau}=\eval{e}{\tau[x\mapsto\eval{\sigma}{x}]}]\}\\
        &=P(\sigma).
    \end{align*}
    Combining with $\bigstep{\Assign{x}{e}}{\sigma}{p}{p}{\tau}{q}$ by \textsc{(BS:Assign)} and $p\le P(\sigma)\le 0$, we have $\dJudge{P}{\Assign{x}{e}}{Q}$.
    \item \textbf{Case} \textsc{(\Bd:Assume)}:\\
      Let $P$ be arbitrary such that $P\preceq 0$. For any $\sigma$ and $p$ satisfying $p\le (P\curlywedge[B])(\sigma)\le 0$, we have $p\le P(\sigma)\le 0$ and $\eval{B}{\sigma}=\true$. Take $\tau=\sigma$ and $q=p$. Then we have $q\le (P\curlywedge[B])(\tau)$ and $\bigstep{\Assume{B}}{\sigma}{p}{p}{\tau}{q}$ by \textsc{(BS:Assume)} for $p\le 0$. It follows that $\dJudge{P\curlywedge[B]}{\Assume{B}}{P\curlywedge[B]}$.
    \item \textbf{Case} \textsc{(\Bd:Tick)}:\\
      Let $P$ be arbitrary such that $(P\curlywedge P-e)\preceq 0$. For any $\sigma$ and $p$ satisfying $p\le P(\sigma)$, take $\tau=\sigma$ and $q=p-\eval{e}{\sigma}$. Then we have $q\le (P-e)(\tau)$ and $\bigstep{\Tick{e}}{\sigma}{p}{\min\{p,p-\eval{e}{\sigma}\}}{\tau}{q}$ by \textsc{(BS:Tick)}. Since $\min\{p,p-\eval{e}{\sigma}\}=(P\curlywedge P-e)(\sigma)\le 0$, we have $\dJudge{P}{\Tick{e}}{P-e}$.
    \item \textbf{Case} \textsc{(\Bd:SeqL)}:\\
      Let $P,Q,R,C_1,C_2$ be arbitrary such that $\djudge{P}{C_1}{R}$ and $\bjudge{R}{C_2}{Q}$. By induction hypothesis, we have $\dJudge{P}{C_1}{R}$ and by theorem \ref{thm:qbua-sound} we have $\bJudge{R}{C_2}{Q}$. For any $\sigma$ and $p$ satisfying $p\le P(\sigma)$, there exists a $\rho$ and an $r$ such that $r\le R(\rho)$ and $\bigstep{C_1}{\sigma}{p}{l_1}{\rho}{r}$ for some $l_1\le 0$. Then there exists a $\tau$ and a $q$ such that $q\le Q(\tau)$ and $\bigstep{C_2}{\rho}{r}{l_2}{\tau}{q}$ for some $l_2$. Hence, $\bigstep{\Seq{C_1}{C_2}}{\sigma}{p}{\min\{l_1,l_2\}}{\tau}{q}$ holds by \textsc{(BS:Seq)}. Since $l_1\le 0$, we have $\min\{l_1,l_2\}\le 0$. It follows that $\dJudge{P}{\Seq{C_1}{C_2}}{Q}$.
    \item \textbf{Case} \textsc{(\Bd:SeqR)}:\\
      This case is similar to the previous one.
    \item \textbf{Case} \textsc{(\Bd:ChoiceL)}:\\
      Let $P,Q,C_1,C_2$ be arbitrary such that $\djudge{P}{C_1}{Q}$. By induction hypothesis, we have $\dJudge{P}{C_1}{Q}$. For any $\sigma$ and $p$ satisfying $p\le P(\sigma)$, there exists a $\tau$ and a $q$ such that $q\le Q(\tau)$ and $\bigstep{C_1}{\sigma}{p}{l}{\tau}{q}$ for some $l\le 0$. Hence, $\bigstep{\Choice{C_1}{C_2}}{\sigma}{p}{l}{\tau}{q}$ holds by \textsc{(BS:ChoiceL)} for $l\le 0$. It follows that $\dJudge{P}{\Choice{C_1}{C_2}}{Q}$.
    \item \textbf{Case} \textsc{(\Bd:ChoiceR)}:\\
      This case is similar to the previous one.
    \item \textbf{Case} \textsc{(\Bd:LoopZero)}:\\
      Let $P$ be arbitrary such that $P\preceq 0$. For any $\sigma$ and $p$ satisfying $p\le P(\sigma)\le 0$, take $\tau=\sigma$ and $q=p$. Then we have $q\le P(\tau)$ and $\bigstep{\Skip}{\Loop{C}}{p}{p}{\tau}{q}$ by \textsc{(BS:LoopZero)} for $p\le 0$. It follows that $\dJudge{P}{\Loop{C}}{P}$.
    \item \textbf{Case} \textsc{(\Bd:Loop)}:\\
      Let $P,C$ be arbitrary such that $\bjudge{P(n)}{C}{P(n+1)}$ for all $n<k$ and $\djudge{P(m)}{C}{P(m+1)}$ for some $m<k$. By theorem \ref{thm:qbua-sound} we have $\bJudge{P(n)}{C}{P(n+1)}$ for all $n<k$ and by induction hypothesis we have $\dJudge{P(m)}{C}{P(m+1)}$. Notice that $\forall m+1\le n<k. \bjudge{P(n)}{C}{P(n+1)}$, so by \textsc{(B:Loop)} we can derive $\bjudge{P(m+1)}{\Loop{C}}{P(k)}$. Thus, we have $\bJudge{P(m+1)}{\Loop{C}}{P(k)}$ by theorem \ref{thm:qbua-sound}.\\
      Therefore, we can proof $\dJudge{P(m)}{\Loop{C}}{P(k)}$ as follows: for any $\sigma$ and $p$ satisfying $p\le P(m)(\sigma)$, by $\dJudge{P(m)}{C}{P(m+1)}$, there exists a $\rho$ and an $r$ such that $r\le P(m+1)(\rho)$ and $\bigstep{C}{\sigma}{p}{l_1}{\rho}{r}$ for some $l_1\le 0$. Then, by $\bJudge{P(m+1)}{\Loop{C}}{P(k)}$, there exists a $\tau$ and a $q$ such that $q\le P(k)(\tau)$ and $\bigstep{\Loop{C}}{\rho}{r}{l_2}{\tau}{q}$ for some $l_2$. Hence, $\bigstep{\Seq{C}{\Loop{C}}}{\sigma}{p}{\min\{l_1,l_2\}}{\tau}{q}$ holds by \textsc{(BS:Seq)}, and consequently $\bigstep{\Loop{C}}{\sigma}{p}{\min\{l_1,l_2\}}{\tau}{q}$ holds by \textsc{(BS:Loop)}. Since $l_1\le 0$, we have $\min\{l_1,l_2\}\le 0$. It follows that $\dJudge{P(m)}{\Loop{C}}{P(k)}$.\\
      The remain thing is to inductively prove that for all $0\le i\le m$, $\dJudge{P(m-i)}{\Loop{C}}{P(k)}$ holds:
      \begin{itemize}
        \item \textbf{Base case} $i=0$:\\
          $\dJudge{P(m)}{\Loop{C}}{P(k)}$ is already proved.
        \item \textbf{Inductive step} $i\to i+1$:\\
          Assume $\dJudge{P(m-i)}{\Loop{C}}{P(k)}$ holds. Let $\sigma$ and $p$ be arbitrary such that $p\le P(m-(i+1))(\sigma)$. By $\bJudge{P(m-(i+1))}{C}{P(m-i)}$, there exists a $\rho$ and an $r$ such that $r\le P(k-i)(\rho)$ and $\bigstep{C}{\sigma}{p}{l_1}{\rho}{r}$ for some $l_1$. Besides, $\dJudge{P(m-i)}{\Loop{C}}{k}$ implies that there exists a $\tau$ and a $q$ such that $q\le P(k)(\tau)$ and $\bigstep{\Loop{C}}{\rho}{r}{l_2}{\tau}{q}$ for some $l_2\le 0$. Hence, $\bigstep{\Seq{C}{\Loop{C}}}{\sigma}{p}{\min\{l_1,l_2\}}{\tau}{q}$ holds by \textsc{(BS:Seq)}, and consequently $\bigstep{\Loop{C}}{\sigma}{p}{\min\{l_1,l_2\}}{\tau}{q}$ holds by \textsc{(BS:Loop)}. Since $l_2\le 0$, we have $\min\{l_1,l_2\}\le 0$. It follows that $\dJudge{P(m-(i+1))}{\Loop{C}}{P(k)}$.
      \end{itemize}
      Taking $i=m$, we have $\dJudge{P(0)}{\Loop{C}}{P(k)}$.
    \item \textbf{Case} \textsc{(\Bd:Local)}:\\
      Let $P,Q,C$ be arbitrary such that $\djudge{P}{C}{Q}$. By induction hypothesis, we have $\dJudge{P}{C}{Q}$. For any $\sigma$ and $p$ satisfying $p\le (\Sup x.P)(\sigma)=\sup_v\{P(\sigma\mapsto v)\}$, there exists a $v$ such that $p\le P(\sigma[x\mapsto v])$. Since $\dJudge{P}{C}{Q}$, there exists a $\tau$ and a $q$ such that $q\le Q(\tau)$ and $\bigstep{C}{\sigma[x\mapsto v]}{p}{l}{\tau}{q}$ for some $l\le 0$. This implies that $q\le (\Sup x.Q)(\tau[x\mapsto\eval{x}{\sigma}]$ and $\bigstep{\Local{x}{C}}{\sigma}{p}{l}{\tau[x\mapsto\eval{x}{\sigma}]}{q}$ by \textsc{(BS:Local)} for $l\le 0$. It follows that $\dJudge{(\Sup x.P)}{\Local{x}{C}}{(\Sup x.Q)}$.
    \item \textbf{Case} \textsc{(\Bd:Constancy)}:\\
      Let $P,Q,B,C$ be arbitrary such that $\djudge{P}{C}{Q}$ and $\Fv(B)\cap\Mod(C)=\emptyset$. By induction hypothesis, we have $\dJudge{P}{C}{Q}$. For any $\sigma$ and $p$ satisfying $p\le (P\curlywedge[B])(\sigma)$, we have $p\le P(\sigma)$ and $\eval{B}{\sigma}=\true$. Then there exists a $\tau$ and a $q$ such that $q\le Q(\tau)$ and $\bigstep{C}{\sigma}{p}{l}{\tau}{q}$ for some $l\le 0$. By proposition \ref{prop:const}, we have $\eval{B}{\tau}=\true$, and then $q\le (Q\curlywedge[B])(\tau)$. It follows that $\dJudge{P\curlywedge[B]}{C}{Q\curlywedge[B]}$.
    \item \textbf{Case} \textsc{(\Bd:Relax)}:\\
      Let $P,Q,F,C$ be arbitrary such that $\djudge{P}{C}{Q}$, $\Fv(F)\cap\Mod(C)=\emptyset$ and $F\preceq 0$. By induction hypothesis, we have $\dJudge{P}{C}{Q}$. For any $\sigma$ and $p$ satisfying $p\le (P+F)(\sigma)=P(\sigma)+F(\sigma)$, we have $p-F(\sigma)\le P(\sigma)$. Then there exists a $\tau$ and a $q$ such that $q\le Q(\tau)$ and $\bigstep{C}{\sigma}{p-F(\sigma)}{l}{\tau}{q}$ for some $l\le 0$. By lemma \ref{lem:relax} and proposition \ref{prop:const}, we have $\bigstep{C}{\sigma}{p}{l+F(\tau)}{\tau}{q+F(\tau)}$. Since $l\le 0$ and $F\preceq 0$, we have $l+F(\tau)\le 0$. It follows that $\dJudge{P+F}{C}{Q}$.
    \item \textbf{Case} \textsc{(\Bd:Cons)}:\\
      Let $P,P',Q,Q',C$ be arbitrary such that $P\preceq P'$, $Q'\preceq Q$ and $\djudge{P'}{C}{Q'}$. By induction hypothesis, we have $\dJudge{P'}{C}{Q'}$. For any $\sigma$ and $p$ satisfying $p\le P(\sigma)$, we have $p\le P'(\sigma)$. Then there exists a $\tau$ and a $q$ such that $q\le Q'(\tau)$ and $\bigstep{C}{\sigma}{p}{l}{\tau}{q}$ for some $l\le 0$. Since $Q'(\tau)\le Q(\tau)$, we have $q\le Q(\tau)$. It follows that $\dJudge{P}{C}{Q}$.
    \item \textbf{Case} \textsc{(\Bd:Subst)}:\\
      Let $P,Q,C$ be arbitrary such that $\djudge{P}{C}{Q}$, and let $x,y$ be arbitrary such that $y\notin\Fv(P)\cup\Fv(Q)\cup\Fv(C)$. By induction hypothesis, we have $\dJudge{P}{C}{Q}$. For any $\sigma$ and $p$ satisfying $p\le P[y/x](\sigma)=P(\sigma[x\mapsto\eval{y}{\sigma}])$, we have $p\le P(\sigma')$ since $y\notin\Fv(P)$, where $\sigma'=\sigma[x\mapsto\eval{y}{\sigma}][y\mapsto\eval{x}{\sigma}]$. Then there exists a $\tau$ and a $q$ such that $q\le Q(\tau)$ and $\bigstep{C}{\sigma'}{p}{l}{\tau}{q}$ for some $l\le 0$. By proposition \ref{prop:subst}, we have $\bigstep{C[y/x]}{\sigma}{p}{l}{\tau'}{q}$ for $l\le 0$, where $\tau'=\tau[x\mapsto\eval{x}{\sigma}][y\mapsto\eval{x}{\tau}]$. Since $y\notin\Fv(Q)$, we have $q\le Q(\tau)=Q[y/x](\tau')$. It follows that $\dJudge{P[y/x]}{C[y/x]}{Q[y/x]}$.
  \end{itemize}
\end{proof}

\begin{lemma}\label{lem:loop-unroll}
  For all $C,\sigma,p,l,\tau,q$, if $\bigstep{\Loop{C}}{\sigma}{p}{l}{\tau}{q}$, then there exists an $n\ge 0$ such that $\bigstep{C^n}{\sigma}{p}{l}{\tau}{q}$.
\end{lemma}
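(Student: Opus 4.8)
The plan is to induct on the derivation of $\bigstep{\Loop{C}}{\sigma}{p}{l}{\tau}{q}$, under the standard convention $C^0 \coloneqq \Skip$ and $C^{n+1} \coloneqq \Seq{C}{C^n}$, which is precisely the left-unrolling shape matched by rule \textsc{(BS:Loop)}. A crucial preliminary observation is that structural induction on the \emph{command} is useless here: the command stays $\Loop{C}$ across the loop rules and never shrinks. The induction must therefore be on the height (equivalently, the structure) of the derivation tree itself, since that is what genuinely decreases.

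Only two rules can conclude a judgement about $\Loop{C}$. In the base case \textsc{(BS:LoopZero)} forces $l = p$, $\tau = \sigma$, and $q = p$; I take $n = 0$ and note that $\bigstep{\Skip}{\sigma}{p}{p}{\sigma}{p}$ holds by \textsc{(BS:Skip)}, which is exactly $\bigstep{C^0}{\sigma}{p}{l}{\tau}{q}$. In the inductive case the judgement is concluded by \textsc{(BS:Loop)} from a strictly smaller derivation of $\bigstep{\Seq{C}{\Loop{C}}}{\sigma}{p}{l}{\tau}{q}$. I invert this through \textsc{(BS:Seq)} to recover an intermediate state $\rho$ and resource $r$ together with sub-derivations $\bigstep{C}{\sigma}{p}{l_1}{\rho}{r}$ and $\bigstep{\Loop{C}}{\rho}{r}{l_2}{\tau}{q}$, where $l = \min\{l_1, l_2\}$. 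The second of these is a proper sub-derivation, so the induction hypothesis applies and yields some $n' \ge 0$ with $\bigstep{C^{n'}}{\rho}{r}{l_2}{\tau}{q}$. Taking $n = n'+1$ and re-applying \textsc{(BS:Seq)} to the pair $\bigstep{C}{\sigma}{p}{l_1}{\rho}{r}$ and $\bigstep{C^{n'}}{\rho}{r}{l_2}{\tau}{q}$ reconstructs $\bigstep{\Seq{C}{C^{n'}}}{\sigma}{p}{\min\{l_1,l_2\}}{\tau}{q}$, i.e. $\bigstep{C^{n}}{\sigma}{p}{l}{\tau}{q}$, as required. The minimal-level bookkeeping matches automatically because both uses of \textsc{(BS:Seq)} record the same $\min\{l_1,l_2\}$.

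The only delicate point, and the step I expect to require care in the mechanization, is the shape of the induction together with the inversion. Because \textsc{(BS:Loop)} mentions $\Loop{C}$ only \emph{indirectly}---its premise is a sequence that merely contains a recursive loop judgement---one cannot read off the inner loop sub-derivation without explicitly inverting \textsc{(BS:Seq)}, and one must be sure that this inner derivation is recognized as smaller than the whole so that the hypothesis is applicable. Once that inversion is set up correctly, everything else is immediate: the resource-level arithmetic needs no computation beyond observing that $\min\{l_1,l_2\}$ appears identically on both sides.
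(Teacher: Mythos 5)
Your proof is correct and follows essentially the same route as the paper's: induction on the derivation, with \textsc{(BS:LoopZero)} giving $n=0$ and \textsc{(BS:Loop)} handled by inverting \textsc{(BS:Seq)}, applying the induction hypothesis to the inner $\Loop{C}$ sub-derivation, and taking $n=n'+1$. The extra care you devote to the shape of the induction and the $C^{n+1}\coloneqq\Seq{C}{C^{n}}$ convention is sound but does not change the argument.
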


\begin{proof}
  By induction on the derivation of $\bigstep{\Loop{C}}{\sigma}{p}{l}{\tau}{q}$.
  \begin{itemize}
    \item \textbf{Case} \textsc{(BS:LoopZero)}:\\
      In this case, we have $\sigma=\tau$ and $p=l=q$, so we can take $n=0$.
    \item \textbf{Case} \textsc{(BS:Loop)}:\\
      Suppose $\bigstep{\Seq{C}{\Loop{C}}}{\sigma}{p}{l}{\tau}{q}$, then there exist $\rho,r,l_1,l_2$ such that $\bigstep{C}{\sigma}{p}{l_1}{\rho}{r}$, $\bigstep{\Loop{C}}{\rho}{r}{l_2}{\tau}{q}$, and $l=\min\{l_1,l_2\}$. By induction hypothesis, there exists an $n'\ge 0$ such that $\bigstep{C^n}{\rho}{r}{l_2}{\tau}{q}$. Taking $n=n'+1$, we have $\bigstep{C^n}{\sigma}{p}{l}{\tau}{q}$.
  \end{itemize}
\end{proof}

%

\begin{theorem}[QFUA Completeness]\label{thm:qfua-complete}
  For all $P,C,Q$ such that $P$ and $Q$ are finitely supported, if $\fJudge{P}{C}{Q}$ holds, then $\fjudge{P}{C}{Q}$ is derivable.
\end{theorem}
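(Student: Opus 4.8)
The plan is to reduce the completeness statement to a single canonical derivation built from the strongest-postcondition transformer $\post{C}{P}$. The two ingredients are: (i) the semantic characterization $\fJudge{P}{C}{Q}\Rightarrow\post{C}{P}\preceq Q$, and (ii) derivability of the tight triple $\fjudge{P}{C}{\post{C}{P}}$ for every $P$ and $C$. Granting both, the theorem follows at once: from $\fJudge{P}{C}{Q}$ we obtain $\post{C}{P}\preceq Q$, and applying \textsc{(F:Cons)} (with $P\preceq P$ and $\post{C}{P}\preceq Q$) to the tight triple yields $\fjudge{P}{C}{Q}$. Ingredient (i) is immediate from the definition of $\post$: for each $\tau$ with $Q(\tau)$ finite, instantiating the semantics at $q=Q(\tau)$ produces a witnessing execution, so $Q(\tau)$ lies in the set whose infimum is $\post{C}{P}(\tau)$; the cases $Q(\tau)=\pm\infty$ are handled separately. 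The structural fact underlying everything is the translation invariance of \cref{lem:relax}, which makes the set of reachable final potentials at each $\tau$ upward closed; this is what lets $\post$ behave like a genuine strongest postcondition, attaining every integer value above its infimum.

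The core is ingredient (ii), proved by induction on the structure of $C$. The base cases $\Skip$, $\Assign{x}{e}$, and $\Tick{e}$ are exactly the axioms \textsc{(F:Skip)}, \textsc{(F:Assign)}, \textsc{(F:Tick)}, whose postconditions coincide with $\post{C}{P}$; the case $\Assume{B}$ follows from \textsc{(F:Assume)} together with \textsc{(F:Cons)} to weaken the precondition from $P\curlyvee[\neg B]$ to $P$. For $\Seq{C_1}{C_2}$ I first establish the compositionality identity $\post{\Seq{C_1}{C_2}}{P}=\post{C_2}{\post{C_1}{P}}$ (again using upward closure from \cref{lem:relax} to realize the intermediate potential as an actual $C_1$-output), and then combine the two instances of the induction hypothesis through \textsc{(F:Seq)}. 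The case $\Choice{C_1}{C_2}$ uses $\post{\Choice{C_1}{C_2}}{P}=\post{C_1}{P}\curlywedge\post{C_2}{P}$: the induction hypothesis with \textsc{(F:ChoiceL)} and \textsc{(F:ChoiceR)} yields the two triples $\fjudge{P}{\Choice{C_1}{C_2}}{\post{C_i}{P}}$, which \textsc{(F:Disj)} merges. The case $\Local{x}{C}$ reduces to \textsc{(F:Local)} after renaming $x$ fresh so that $\Inf x.P=P$.

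The loop case is the main obstacle, because \textsc{(F:Loop)} summarizes only executions with a fixed number $k$ of iterations, whereas the final states covered by $Q$ may be reached after different numbers of iterations. Here \cref{lem:loop-unroll} gives $\post{\Loop{C}}{P}=\bigcurlywedge_{n}\post{C^{n}}{P}$. For each fixed $n$ I derive $\fjudge{P}{\Loop{C}}{\post{C^{n}}{P}}$ by applying \textsc{(F:Loop)} with the subvariant $i\mapsto\post{C^{i}}{P}$ and bound $k=n$; each premise $\fjudge{\post{C^{i}}{P}}{C}{\post{C^{i+1}}{P}}$ is the induction hypothesis for $C$ combined with the composition identity $\post{C^{i+1}}{P}=\post{C}{\post{C^{i}}{P}}$. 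Finally \textsc{(F:Disj)} aggregates these over all $n$ (every precondition being $P$) to give $\fjudge{P}{\Loop{C}}{\bigcurlywedge_{n}\post{C^{n}}{P}}=\fjudge{P}{\Loop{C}}{\post{\Loop{C}}{P}}$. The role of the finite-support hypothesis on $Q$ is to keep this last step well-founded: since only finitely many final states carry a finite bound, each is reached within a bounded number of iterations, so the aggregation (and the attainment of the infima defining $\post$) reduces to a finite disjunction rather than relying on an infinitary rule. I expect verifying this interaction between finite support and the loop summary—together with the composition identity that feeds \textsc{(F:Loop)}—to be the most delicate part of the argument.
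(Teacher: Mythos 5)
Your proposal is correct and follows essentially the same route as the paper's proof: structural induction on $C$ with $\post{C}{P}$ as the canonical intermediate assertion and loop subvariant (fed to \textsc{(F:Seq)} and \textsc{(F:Loop)}), \textsc{(F:Disj)} to aggregate branches and iteration counts, upward closure of reachable potentials via \cref{lem:relax}, \cref{lem:loop-unroll} for the loop case, and fresh renaming for \textsf{local}; your only real departure is the cleaner factorization into the single most-general triple $\fjudge{P}{C}{\post{C}{P}}$ followed by one application of \textsc{(F:Cons)}, whereas the paper applies \textsc{(F:Cons)} and the $Q$-splittings case by case. Two small caveats: the \textsf{local} case requires the strengthened induction hypothesis covering substitution instances $C[y/x]$ (as the paper notes in a footnote, $C[y/x]$ is not a subterm of $\Local{x}{C}$), and your reading of the finite-support hypothesis is off---\textsc{(F:Disj)} is already infinitary and the paper invokes it over all $n\ge 0$, so finite support is not about making the loop aggregation finite but rather about guaranteeing, e.g., the existence of a fresh $y\notin\Fv(P)\cup\Fv(Q)\cup\Fv(C)$.
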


\begin{proof}
  By induction on the structure of $C$.
  \begin{itemize}
    \item \textbf{Case} $\Skip$:\\
      Let $P,Q$ be arbitrary such that $\fJudge{P}{\Skip}{Q}$. From the semantics of the QFUA triple and $\Skip$, we have $P\preceq Q$. Then we can derive $\fjudge{P}{\Skip}{Q}$ using \textsc{(F:Skip)} and \textsc{(F:Cons)}.
    \item \textbf{Case} $\Assign{x}{e}$:\\
      Let $P,Q$ be arbitrary such that $\fJudge{P}{\Assign{x}{e}}{Q}$. From the semantics of the QFUA triple and $\Assign{x}{e}$, we have that for any $\tau$ and $q$ satisfying $q\ge Q(\tau)$, there exists a $\sigma$ such that $q\ge P(\sigma)$ and $\tau=\sigma[x\mapsto\eval{e}{\sigma}]$. Taking $x'=\eval{x}{\sigma}$, we have $q\ge P(\tau[x\mapsto x'])$ and $\eval{x}{\tau}=\eval{e}{\sigma}=\eval{e[x'/x]}{\tau}$. This implies that $q\ge Q'(\tau)$, where $Q'(\tau)=\Inf x'.P[x'/x]\curlyvee[x\ne e[x'/x]]$. Since $q$ is arbitrary, we have $Q'\preceq Q$, and consequently $\fjudge{P}{\Assign{x}{e}}{Q}$ is derivable by \textsc{(F:Assign)} and \textsc{(F:Cons)}.
    \item \textbf{Case} $\Assume{B}$:\\
      Let $P,Q$ be arbitrary such that $\fJudge{P}{\Assume{B}}{Q}$. From the semantics of the QFUA triple and $\Assume{B}$, we have $P\preceq P\curlyvee[\neg B]\preceq Q$. Then we can derive $\fjudge{P}{\Assume{B}}{Q}$ using \textsc{(F:Assume)} and \textsc{(F:Cons)}.
    \item \textbf{Case} $\Tick{e}$:\\
      Let $P,Q$ be arbitrary such that $\fJudge{P}{\Tick{e}}{Q}$. From the semantics of the QFUA triple and $\Tick{e}$, we have $P-e\preceq Q$. Then we can derive $\fjudge{P}{\Tick{e}}{Q}$ using \textsc{(F:Tick)} and \textsc{(F:Cons)}.
    \item \textbf{Case} $\Seq{C_1}{C_2}$:\\
      Let $P,Q$ be arbitrary such that $\fJudge{P}{\Seq{C_1}{C_2}}{Q}$. From the semantics of the QFUA triple and $\Seq{C_1}{C_2}$, we have that for any $\tau$ and $q$ satisfying $q\ge Q(\tau)$, there exists $\sigma,p,\rho,r$ such that $p\ge P(\sigma)$, $\bigstepp{C_1}{\sigma}{p}{\rho}{r}$, and $\bigstepp{C_2}{\rho}{r}{\tau}{q}$. We take $R=\post{C_1}{P}$, then $\fJudge{P}{C_1}{R}$ and $\fJudge{R}{C_2}{Q}$ holds. By induction hypothesis, we have $\fjudge{P}{C_1}{R}$ and $\fjudge{R}{C_2}{Q}$. Thus, we can derive $\fjudge{P}{\Seq{C_1}{C_2}}{Q}$ using \textsc{(F:Seq)}.
    \item \textbf{Case} $\Choice{C_1}{C_2}$:\\
      Let $P,Q$ be arbitrary such that $\fJudge{P}{\Choice{C_1}{C_2}}{Q}$. From the semantics of the QFUA triple and $\Choice{C_1}{C_2}$, we have that for any $\tau$ and $q$ satisfying $q\ge Q(\tau)$, there exists a $\sigma$ and a $p$ such that $p\ge P(\sigma)$ and either $\bigstepp{C_1}{\sigma}{p}{\tau}{q}$ or $\bigstepp{C_2}{\sigma}{p}{\tau}{q}$. So we can divide $Q$ into $Q=Q_1\curlywedge Q_2$, where $Q_i(\tau)=\inf\{q:q\ge Q(\tau)\land\exists\sigma,p.p\ge P(\sigma)\land\bigstepp{C_i}{\sigma}{p}{\tau}{q}\}$ for $i\in\{1,2\}$. By definition, $\fJudge{P}{C_i}{Q_i}$ holds for $i\in\{1,2\}$, which implies that $\fjudge{P}{C_i}{Q_i}$ using induction hypothesis. Then we can derive $\fjudge{P}{\Choice{C_1}{C_2}}{Q}$ using \textsc{(F:Choice)} and \textsc{(F:Disj)}.
    \item \textbf{Case} $\Loop{C}$:\\
      Let $P,Q$ be arbitrary such that $\fJudge{P}{\Loop{C}}{Q}$. From the semantics of the QFUA triple and by lemma \ref{lem:loop-unroll}, we have that for any $\tau$ and $q$ satisfying $q\ge Q(\tau)$, there exists $\sigma,p,n$ such that $p\ge P(\sigma)$ and $\bigstepp{C^n}{\sigma}{p}{\tau}{q}$. So we can divide $Q$ into $Q=\bigcurlywedge_{n\ge 0}Q_n$, where $Q_n(\tau)=\inf\{q:q\ge Q(\tau)\land\exists\sigma,p.p\ge P(\sigma)\land\bigstepp{C^n}{\sigma}{p}{\tau}{q}\}$. In this case, $\fJudge{P}{C^n}{Q_n}$ holds for all $n\ge 0$.\\
      For each $n\ge 0$, let $R_n(i)=\post{C^i}{P}$ for $0\le i<n$ and $R_n(n)=Q_n$.\footnote{By the associativity of sequential composition, $C^i$ can be read as left-associative.} Thus, we have $\fJudge{R_n(i)}C{R_n(i+1)}$ for all $0\le i<n-1$ by definition and $\fJudge{R_n(n-1)}C{R_n(n)}$ by $\fJudge{P}{C^n}{Q_n}$. From induction hypothesis, we have $\fjudge{R_n(i)}C{R_n(i+1)}$ for all $0\le i<n$. Then we can derive $\fjudge{P}{\Loop{C}}{Q_n}$ using \textsc{(F:Loop)}. Finally, we can derive $\fjudge{P}{\Loop{C}}{Q}$ using \textsc{(F:Disj)}.
    \item \textbf{Case} $\Local{x}{C}$:\\
      Let $P,Q$ be arbitrary such that $\fJudge{P}{\Local{x}{C}}{Q}$. Pick $y$ be a fresh variable such that $y\notin\Fv(P)\cup\Fv(Q)\cup\Fv(C)$, and let $Q'=\post{C[y/x]}{P}$. We have $\fJudge{P}{C[y/x]}{Q'}$.
      By induction hypothesis,\footnote{Since $C[y/x]$ is not a subterm of $\Local{x}{C}$, we should use a stronger induction hypothesis: $\fJudge{P}{C[\overrightarrow{y}/\overrightarrow{x}]}{Q}$ implies $\fjudge{P}{C[\overrightarrow{y}/\overrightarrow{x}]}{Q}$, same as incorrectness logic~\cite{POPL:OHearn20}. Formally, we also rely on the semantic equivalence $\Local{y}{C[y/x]} \equiv \Local{x}{C}$, which is captured in our Rocq formalization by an additional rule: $\fjudge{P}{C}{Q}$ and $C\equiv C'$ implies $\fjudge{P}{C'}{Q}$. Similar strengthen applies to the proof of QBUA and \QBUAd completeness.} we have $\fjudge{P}{C[y/x]}{Q'}$, so we can derive $\fjudge{\Inf y.P}{\Local{x}{C}}{\Inf y.Q'}$ using \textsc{(F:Local)}.\\
      Next, we will prove that $\Inf y.Q'\preceq Q$. For any $\tau$ and $q$ satisfying $q\ge Q(\tau)$, since $\fJudge{P}{\Local{x}{C}}{Q}$ holds, there exists a $\sigma$ and a $p$ such that $p\ge P(\sigma)$ and $\bigstepp{\Local{x}{C}}{\sigma}{p}{\tau}{q}$. From the semantics of $\Local{x}{C}$, there exists $\sigma',\tau',v$ such that $\sigma=\sigma'[x\mapsto v]$, $\tau=\tau'[x\mapsto v]$, and $\bigstepp{C}{\sigma'}{p}{\tau'}{q}$. Then we have $\bigstepp{C[y/x]}{\sigma''}{p}{\tau''}{q}$ by proposition \ref{prop:subst}, where $\sigma''=\sigma[y\mapsto\eval{x}{\sigma'}]$ and $\tau''=\tau[y\mapsto\eval{x}{\tau'}]$. Since $y$ is fresh, we have $P(\sigma'')=P(\sigma)$, then we have $q\ge Q'(\tau'')$ from $\bigstepp{C[y/x]}{\sigma''}{p}{\tau''}{q}$. This implies that $q\ge\Inf y.Q'(\tau)$, and consequently $\Inf y.Q'\preceq Q$.\\
      Finally, since $\Inf y.P=P$ by the freshness of $y$, we can derive $\fjudge{P}{\Local{x}{C}}{Q}$ using \textsc{(F:Cons)}.
  \end{itemize}
\end{proof}

\begin{theorem}[QBUA Completeness]\label{thm:qbua-complete}
  For all $P,C,Q$ such that $P$ and $Q$ are finitely supported, if $\bJudge{P}{C}{Q}$ holds, then $\bjudge{P}{C}{Q}$ is derivable.
\end{theorem}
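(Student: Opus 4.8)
The plan is to prove QBUA completeness by induction on the structure of $C$, mirroring the QFUA completeness proof (\cref{thm:qfua-complete}) but systematically dualizing every construction according to the QFUA/QBUA correspondence (\cref{fig:correspondence}): the strongest-postcondition transformer $\mathrm{sp}$ is replaced by the weakest-precondition transformer $\mathrm{wp}$, the operators $\curlywedge$ and $\Inf$ become $\curlyvee$ and $\Sup$, and every inequality is reversed. The workhorse is the fact that $\pre{C}{Q}$ is the greatest valid precondition: $\bJudge{\pre{C}{Q}}{C}{Q}$ always holds, and any semantically valid pre-potential lies pointwise below it. Establishing the former requires that the supremum in the definition of $\mathrm{wp}$ is \emph{attained}: by \cref{lem:relax}, if some execution $\bigstepp{C}{\sigma}{p}{\tau}{q}$ with $q \le Q(\tau)$ exists, then shifting the resource down by any $f \le 0$ yields $\bigstepp{C}{\sigma}{p+f}{\tau}{q+f}$ with $q+f \le Q(\tau)$, so the set of admissible initial potentials at each $\sigma$ is downward closed; since we work over $\mathbb{Z}$, a downward-closed set of integers bounded above contains its supremum. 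This discreteness-plus-monotonicity argument is what lets $\mathrm{wp}$ serve as an honest witness.

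For the atomic cases ($\Skip$, $\Assign{x}{e}$, $\Assume{B}$, $\Tick{e}$) I would read off from the big-step rules and the QBUA semantics a single refinement fact and close with the corresponding rule of \cref{fig:fullqbua} together with \textsc{(B:Cons)}. Concretely, $\bJudge{P}{\Skip}{Q}$ forces $P \preceq Q$; $\bJudge{P}{\Tick{e}}{Q}$ forces $P - e \preceq Q$; $\bJudge{P}{\Assume{B}}{Q}$ forces $P \curlywedge [B] \preceq Q$ and $P \preceq [B]$ (hence $P = P\curlywedge[B]$); and for assignment the canonical postcondition $\Sup x'.P[x'/x]\curlywedge[x=e[x'/x]]$ produced by \textsc{(B:Assign)} refines $Q$, which is exactly the assignment computation of the soundness proof read in reverse. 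For $\Seq{C_1}{C_2}$ I take the intermediate assertion $R \coloneqq \pre{C_2}{Q}$: the full sequencing semantics gives $\bJudge{P}{C_1}{R}$ and the $\mathrm{wp}$ property gives $\bJudge{R}{C_2}{Q}$, after which the induction hypothesis and \textsc{(B:Seq)} finish the case.

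The two interesting cases are choice and loop, both handled by a \emph{join}-decomposition of the precondition. For $\Choice{C_1}{C_2}$ I set $P_i(\sigma) \coloneqq \sup\{p \le P(\sigma) : \exists \tau,q.\, q \le Q(\tau) \land \bigstepp{C_i}{\sigma}{p}{\tau}{q}\}$ and prove $P = P_1 \curlyvee P_2$: each admissible $p$ routes through $C_1$ or $C_2$, both route-sets are downward closed (again by \cref{lem:relax}), and their union is $(-\infty, P(\sigma)]$, so over $\mathbb{Z}$ one of them attains $P(\sigma)$. Then $\bJudge{P_i}{C_i}{Q}$ holds, the induction hypothesis gives $\bjudge{P_i}{C_i}{Q}$, and \textsc{(B:ChoiceL)}/\textsc{(B:ChoiceR)} followed by \textsc{(B:Disj)} reassemble $\bjudge{P}{\Choice{C_1}{C_2}}{Q}$. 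For $\Loop{C}$ I first use \cref{lem:loop-unroll} to replace loop executions by $C^n$ executions, define $P_n$ analogously with $C^n$ in place of $C_i$, and show $P = \bigcurlyvee_{n \ge 0} P_n$. For each $n$ I build a subvariant family with $R_n(n) = Q$ and $R_n(i) = \pre{C^{n-i}}{Q}$ for $i \ge 1$, so that $\bJudge{R_n(i)}{C}{R_n(i+1)}$ holds for $1 \le i < n$ by the $\mathrm{wp}$ property, while $\bJudge{P_n}{C}{R_n(1)}$ holds by decomposing $C^n = \Seq{C}{C^{n-1}}$; \textsc{(B:Loop)} then yields $\bjudge{P_n}{\Loop{C}}{Q}$, and a single (infinite) application of \textsc{(B:Disj)} gives $\bjudge{P}{\Loop{C}}{Q}$. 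The case $\Local{x}{C}$ follows the QFUA template with $\Sup$ in place of $\Inf$: pick a fresh $y$, let $P' \coloneqq \pre{C[y/x]}{Q}$, apply the (strengthened, renaming-closed) induction hypothesis, use \textsc{(B:Local)} together with the semantic equivalence $\Local{y}{C[y/x]} \equiv \Local{x}{C}$, and close with \textsc{(B:Cons)} after checking $P \preceq \Sup y.P'$ via \cref{prop:subst} and freshness.

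I expect the main obstacle to be the bookkeeping that keeps $\mathrm{wp}$ within the hypotheses of the theorem, namely verifying that $\pre{C}{Q}$ (and each $P_i$, $P_n$, and $R_n(i)$) remains \emph{finitely supported}, so that the induction hypothesis actually applies; this will likely require an auxiliary lemma stating that the $\mathrm{wp}$ transformer preserves finite support, established alongside the main induction. The secondary subtlety is ensuring that every supremum is genuinely attained, so that the discreteness argument above must be invoked uniformly at each step, and confirming that the infinite disjunction in the loop case is legitimate — which it is, since \textsc{(B:Disj)} is stated for an arbitrary index set $I$.
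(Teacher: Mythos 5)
Your proposal is correct and follows essentially the same route as the paper's proof: structural induction with $\pre{C_2}{Q}$ as the intermediate assertion for sequencing, the join-decompositions $P=P_1\curlyvee P_2$ and $P=\bigcurlyvee_{n\ge 0}P_n$ (via \cref{lem:loop-unroll}) discharged by \textsc{(B:Disj)} for choice and loop, and the fresh-variable renaming with $\pre{C[y/x]}{Q}$ for \textsf{local}. Your explicit remarks on supremum attainment via downward-closedness over $\mathbb{Z}$ and on preservation of finite support are points the paper leaves implicit, but they do not change the argument.
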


\begin{proof}
  By induction on the structure of $C$.
  \begin{itemize}
    \item \textbf{Case} $\Skip$:\\
      Let $P,Q$ be arbitrary such that $\bJudge{P}{\Skip}{Q}$. From the semantics of the QBUA triple and $\Skip$, we have $P\preceq Q$. Then we can derive $\bjudge{P}{\Skip}{Q}$ using \textsc{(B:Skip)} and \textsc{(B:Cons)}.
    \item \textbf{Case} $\Assign{x}{e}$:\\
      Let $P,Q$ be arbitrary such that $\bJudge{P}{\Assign{x}{e}}{Q}$. Denote $Q'=\Sup x'.P[x'/x]\curlywedge[x=e[x'/x]]$. We will prove that $Q'\preceq Q$. For seek of contradiction, suppose there exists $\tau$ such that $Q'(\tau)>Q(\tau)$. Then there exists $v$ such that $\eval{x}{\tau}=\eval{e}{\tau[x\mapsto v]}$ and $P(\tau[x\mapsto v])>Q(\tau)$. Therefore, take $\sigma=\tau[x\mapsto v]$ and $p=P(\sigma)$. From the semantics of $\bJudge{P}{\Assign{x}{e}}{Q}$, we have $p\le Q(\tau)$, which leads to a contradiction. Thus, we have $Q'\preceq Q$, and consequently $\bjudge{P}{\Assign{x}{e}}{Q}$ is derivable by \textsc{(B:Assign)} and \textsc{(B:Cons)}.
    \item \textbf{Case} $\Assume{B}$:\\
      Let $P,Q$ be arbitrary such that $\bJudge{P}{\Assume{B}}{Q}$. From the semantics of the QBUA triple and $\Assume{B}$, we have $P=P\curlywedge[B]\preceq Q$. Then we can derive $\bjudge{P}{\Assume{B}}{Q}$ using \textsc{(B:Assume)} and \textsc{(B:Cons)}.
    \item \textbf{Case} $\Tick{e}$:\\
      Let $P,Q$ be arbitrary such that $\bJudge{P}{\Tick{e}}{Q}$. From the semantics of the QBUA triple and $\Tick{e}$, we have $P-e\preceq Q$. Then we can derive $\bjudge{P}{\Tick{e}}{Q}$ using \textsc{(B:Tick)} and \textsc{(B:Cons)}.
    \item \textbf{Case} $\Seq{C_1}{C_2}$:\\
      Let $P,Q$ be arbitrary such that $\bJudge{P}{\Seq{C_1}{C_2}}{Q}$. From the semantics of the QBUA triple and $\Seq{C_1}{C_2}$, we have that for any $\sigma$ and $p$ satisfying $p\le P(\sigma)$, there exists $\tau,q,\rho,r$ such that $q\le Q(\tau)$, $\bigstepp{C_1}{\sigma}{p}{\rho}{r}$, and $\bigstepp{C_2}{\rho}{r}{\tau}{q}$. We take $R=\pre{C_2}{Q}$, then $\bJudge{P}{C_1}{R}$ and $\bJudge{R}{C_2}{Q}$ holds. By induction hypothesis, we have $\bjudge{P}{C_1}{R}$ and $\bjudge{R}{C_2}{Q}$. Thus, we can derive $\bjudge{P}{\Seq{C_1}{C_2}}{Q}$ using \textsc{(B:Seq)}.
    \item \textbf{Case} $\Choice{C_1}{C_2}$:\\
      Let $P,Q$ be arbitrary such that $\bJudge{P}{\Choice{C_1}{C_2}}{Q}$. From the semantics of the QBUA triple and $\Choice{C_1}{C_2}$, we have that for any $\sigma$ and $p$ satisfying $p\le P(\sigma)$, there exists a $\tau$ and a $q$ such that $q\le Q(\tau)$ and either $\bigstepp{C_1}{\sigma}{p}{\tau}{q}$ or $\bigstepp{C_2}{\sigma}{p}{\tau}{q}$. So we can divide $P$ into $P=P_1\curlyvee P_2$, where $P_i(\sigma)=\sup\{p:p\le P(\sigma)\land\exists\tau,q.q\le Q(\tau)\land\bigstepp{C_i}{\sigma}{p}{\tau}{q}\}$ for $i\in\{1,2\}$. By definition, $\bJudge{P_i}{C_i}{Q}$ holds for $i\in\{1,2\}$, which implies that $\bjudge{P_i}{C_i}{Q}$ using induction hypothesis. Then we can derive $\bjudge{P}{\Choice{C_1}{C_2}}{Q}$ using \textsc{(B:Choice)} and \textsc{(B:Disj)}.
    \item \textbf{Case} $\Loop{C}$:\\
      Let $P,Q$ be arbitrary such that $\bJudge{P}{\Loop{C}}{Q}$. From the semantics of the QBUA triple and by lemma \ref{lem:loop-unroll}, we have that for any $\sigma$ and $p$ satisfying $p\le P(\sigma)$, there exists $\tau,q,n$ such that $q\le Q(\tau)$ and $\bigstepp{C^n}{\sigma}{p}{\tau}{q}$. So we can divide $P$ into $P=\bigcurlyvee_{n\ge 0}P_n$, where $P_n(\sigma)=\sup\{p:p\le P(\sigma)\land\exists\tau,q.q\le Q(\tau)\land\bigstepp{C^n}{\sigma}{p}{\tau}{q}\}$. In this case, $\bJudge{P_n}{C^n}{Q}$ holds for all $n\ge 0$.\\
      For each $n\ge 0$, let $R_n(i)=\pre{C^{n-i}}{Q}$ for $0<i\le n$ and $R_n(0)=P_n$. Thus, we have $\bJudge{R_n(i-1)}C{R_n(i)}$ for all $1<i\le n$ by definition and $\bJudge{R_n(0)}C{R_n(1)}$ by $\bJudge{P_n}{C^n}{Q}$. From induction hypothesis, we have $\bjudge{R_n(i-1)}C{R_n(i)}$ for all $0<i\le n$. Then we can derive $\bjudge{P_n}{\Loop{C}}{Q}$ using \textsc{(B:Loop)}. Finally, we can derive $\bjudge{P}{\Loop{C}}{Q}$ using \textsc{(B:Disj)}.
    \item \textbf{Case} $\Local{x}{C}$:\\
      Let $P,Q$ be arbitrary such that $\bJudge{P}{\Local{x}{C}}{Q}$. Pick $y$ be a fresh variable such that $y\notin\Fv(P)\cup\Fv(Q)\cup\Fv(C)$, and let $P'=\pre{C[y/x]}{Q}$. We have $\bJudge{P'}{C[y/x]}{Q}$. By induction hypothesis, we have $\bjudge{P'}{C[y/x]}{Q}$, so we can derive $\bjudge{\Sup y.P'}{\Local{x}{C}}{\Sup y.Q}$ using \textsc{(B:Local)}.\\
      Next, we will prove that $P\preceq\Sup y.P'$. For any $\sigma$ and $p$ satisfying $p\le P(\sigma)$, since $\bJudge{P}{\Local{x}{C}}{Q}$ holds, there exists a $\tau$ and a $q$ such that $q\le Q(\tau)$ and $\bigstepp{\Local{x}{C}}{\sigma}{p}{\tau}{q}$. From the semantics of $\Local{x}{C}$, there exists $\sigma',\tau',v$ such that $\sigma=\sigma'[x\mapsto v]$, $\tau=\tau'[x\mapsto v]$, and $\bigstepp{C}{\sigma'}{p}{\tau'}{q}$. Then we have $\bigstepp{C[y/x]}{\sigma''}{p}{\tau''}{q}$ by proposition \ref{prop:subst}, where $\sigma''=\sigma[y\mapsto\eval{x}{\sigma'}]$ and $\tau''=\tau[y\mapsto\eval{x}{\tau'}]$. Since $y$ is fresh, we have $Q(\tau'')=Q(\tau)$, then we have $p\le P'(\sigma'')$ from $\bigstepp{C[y/x]}{\sigma''}{p}{\tau''}{q}$. This implies that $p\le\Sup y.P'(\sigma)$, and consequently $P\preceq\Sup y.P'$.\\
      Finally, since $\Sup y.Q=Q$ by the freshness of $y$, we can derive $\bjudge{P}{\Local{x}{C}}{Q}$ using \textsc{(B:Cons)}.
  \end{itemize}
\end{proof}

\begin{theorem}[\QBUAd Completeness]\label{thm:qbuad-complete}
  For all $P,C,Q$ such that $P$ and $Q$ are finitely supported, if $\dJudge{P}{C}{Q}$ holds, then $\djudge{P}{C}{Q}$ is derivable.
\end{theorem}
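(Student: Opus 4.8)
The plan is to proceed by induction on the structure of $C$, closely mirroring the QBUA completeness proof (\cref{thm:qbua-complete}) but additionally tracking the requirement that some intermediate resource level be non-positive, which is recorded by the transformer $\prel{C}{Q}$. Throughout, I would freely invoke the already-established QBUA completeness (\cref{thm:qbua-complete}) to discharge the ``non-critical'' parts of an execution---those that need not contain the $\le 0$ point---reserving the structural induction hypothesis for the single sub-execution that does contain it. The guiding principle is that $\prel{C}{Q}$ is the weakest precondition among all $\le 0$-executions, so $\dJudge{P}{C}{Q}$ forces $P\preceq\prel{C}{Q}$, and the matching derived triple is recovered by applying the syntactic rule that computes $\prel{C}{Q}$ followed by \textsc{(\Bd:Cons)}.

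For the base cases I would read off the side conditions directly from the semantics. Since the only execution of $\Skip$, $\Assign{x}{e}$, or $\Assume{B}$ leaves the resource unchanged, a $\le 0$-execution exists exactly when $p\le 0$; taking the supremum over $p\le P(\sigma)$ yields the side condition $P\preceq 0$ of \textsc{(\Bd:Skip)}, \textsc{(\Bd:Assign)}, and \textsc{(\Bd:Assume)}, together with the same pointwise inequality on postconditions as in the QBUA case, closed off by \textsc{(\Bd:Cons)}. For $\Tick{e}$, the recorded minimal level is $\min\{p,p-\eval{e}{\sigma}\}$, so the $\le 0$ requirement becomes $P\curlywedge(P-e)\preceq 0$---precisely the premise of \textsc{(\Bd:Tick)}---while $P-e\preceq Q$ gives the post side.

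For sequencing, the key observation is that a $\le 0$-execution of $\Seq{C_1}{C_2}$ attains its $\le 0$ point either inside $C_1$ or inside $C_2$. I would therefore split the precondition as $P=P_1\curlyvee P_2$, where $P_1$ collects those starting resources whose witnessing execution dips to $\le 0$ during $C_1$ and $P_2$ those that dip during $C_2$. Using $R=\pre{C_2}{Q}$ for the first bucket gives $\dJudge{P_1}{C_1}{R}$ and $\bJudge{R}{C_2}{Q}$, which, via the induction hypothesis on $C_1$ and \cref{thm:qbua-complete} on $C_2$, feed \textsc{(\Bd:SeqL)}; symmetrically, $R'=\prel{C_2}{Q}$ feeds \textsc{(\Bd:SeqR)} for the second bucket, and \textsc{(\Bd:Disj)} reconstructs $\djudge{P}{\Seq{C_1}{C_2}}{Q}$. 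The non-deterministic choice is analogous but simpler: split $P=P_1\curlyvee P_2$ by which branch carries the witnessing $\le 0$-execution and combine \textsc{(\Bd:ChoiceL)}/\textsc{(\Bd:ChoiceR)} with \textsc{(\Bd:Disj)}. The $\Local{x}{C}$ case follows the QBUA pattern: choose a fresh $y$, take $P'=\prel{C[y/x]}{Q}$, apply the (suitably strengthened) induction hypothesis and \textsc{(\Bd:Local)}, and finish with $P\preceq\Sup y.P'$ and $\Sup y.Q=Q$ through \textsc{(\Bd:Cons)}, invoking \cref{prop:subst} to transport the execution across the renaming.

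I expect the loop case to be the main obstacle, because \textsc{(\Bd:Loop)} demands a single subvariant family in which every step is a QBUA transition except one distinguished step that is \QBUAd. My plan is to unroll via \cref{lem:loop-unroll}, reducing each $\le 0$-execution of $\Loop{C}$ to one of $C^N$ for some $N\ge 0$, and to classify it further by the iteration $m$ (with $0\le m<N$) during which the resource first becomes non-positive, writing $P=\bigcurlyvee_{N,m}P_{N,m}$ (with the degenerate $N=0$ bucket handled by \textsc{(\Bd:LoopZero)} under $P\preceq 0$). For a fixed $(N,m)$ I would build the subvariant by backward weakest-precondition propagation from $R(N)=Q$, setting $R(i)=\pre{C}{R(i+1)}$ at every index except $R(m)=\prel{C}{R(m+1)}$. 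Because $\prel{C}{\cdot}\preceq\pre{C}{\cdot}$, the step at $m$ simultaneously satisfies the QBUA premise $\bJudge{R(m)}{C}{R(m+1)}$ required by the universally quantified hypothesis of \textsc{(\Bd:Loop)} and the \QBUAd premise $\dJudge{R(m)}{C}{R(m+1)}$ required by its existential hypothesis; the former is discharged by \cref{thm:qbua-complete} and the latter by the induction hypothesis on $C$. After checking $P_{N,m}\preceq R(0)$ and applying \textsc{(\Bd:Loop)} followed by \textsc{(\Bd:Cons)} for each bucket, a final \textsc{(\Bd:Disj)} assembles $\djudge{P}{\Loop{C}}{Q}$. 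The delicate points are ensuring the intermediate $R(i)$ remain finitely supported so the induction hypothesis applies, and verifying that pairing a single $\prel$-step with $\pre$-steps on both sides captures exactly those executions whose first $\le 0$ point lies in iteration $m$.
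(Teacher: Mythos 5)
Your proposal is correct and follows essentially the same route as the paper's own proof: structural induction with the side conditions read off for the atomic commands, a $\curlyvee$-decomposition of the precondition by the location of the $\le 0$ point (discharged via \textsc{(\Bd:SeqL)}/\textsc{(\Bd:SeqR)}, \textsc{(\Bd:ChoiceL)}/\textsc{(\Bd:ChoiceR)} and \textsc{(\Bd:Disj)}), QBUA completeness for the non-critical segments, and for loops an unrolling via \cref{lem:loop-unroll} followed by a per-$(n,m)$ subvariant built by backward $\mathrm{wp}$ propagation with a single $\mathrm{wp}^{\le 0}$ step at index $m$. The only cosmetic differences are that the paper sets $R_{n,m}(0)=P_{n,m}$ rather than closing with a separate $P_{n,m}\preceq R(0)$ check, and it does not insist that $m$ be the \emph{first} iteration where the resource dips (overlapping buckets are harmless under $\curlyvee$), so the "exactness" worry at the end of your sketch is unnecessary.
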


\begin{proof}
  By induction on the structure of $C$.
  \begin{itemize}
    \item \textbf{Case} $\Skip$:\\
      Let $P,Q$ be arbitrary such that $\dJudge{P}{\Skip}{Q}$. From the semantics of the \QBUAd triple and $\Skip$, we have $P\preceq Q$ and $P\preceq 0$. Then we can derive $\djudge{P}{\Skip}{Q}$ using \textsc{(\Bd:Skip)} and \textsc{(\Bd:Cons)}.
    \item \textbf{Case} $\Assign{x}{e}$:\\
      Let $P,Q$ be arbitrary such that $\dJudge{P}{\Assign{x}{e}}{Q}$. Denote $Q'=\Sup x'.P[x'/x]\curlywedge[x=e[x'/x]]$. We will prove that $Q'\preceq Q$. For seek of contradiction, suppose there exists $\tau$ such that $Q'(\tau)>Q(\tau)$. Then there exists $v$ such that $\eval{x}{\tau}=\eval{e}{\tau[x\mapsto v]}$ and $P(\tau[x\mapsto v])>Q(\tau)$. Therefore, take $\sigma=\tau[x\mapsto v]$ and $p=P(\sigma)$. From $\dJudge{P}{\Assign{x}{e}}{Q}$, we have $p\le Q(\tau)$, which leads to a contradiction. Thus, we have $Q'\preceq Q$. Besides, we also have $P\preceq 0$ from $\dJudge{P}{\Assign{x}{e}}{Q}$, so $\djudge{P}{\Assign{x}{e}}{Q}$ is derivable by \textsc{(\Bd:Assign)} and \textsc{(\Bd:Cons)}.
    \item \textbf{Case} $\Assume{B}$:\\
      Let $P,Q$ be arbitrary such that $\dJudge{P}{\Assume{B}}{Q}$. From the semantics of the \QBUAd triple and $\Assume{B}$, we have $P=P\curlywedge[B]\preceq Q$ and $P\preceq 0$. Then we can derive $\djudge{P}{\Assume{B}}{Q}$ using \textsc{(\Bd:Assume)} and \textsc{(\Bd:Cons)}.
    \item \textbf{Case} $\Tick{e}$:\\
      Let $P,Q$ be arbitrary such that $\dJudge{P}{\Tick{e}}{Q}$. From the semantics of the \QBUAd triple and $\Tick{e}$, we have $P-e\preceq Q$ and $P\curlywedge P-e\preceq 0$. Then we can derive $\djudge{P}{\Tick{e}}{Q}$ using \textsc{(\Bd:Tick)} and \textsc{(\Bd:Cons)}.
    \item \textbf{Case} $\Seq{C_1}{C_2}$:\\
      Let $P,Q$ be arbitrary such that $\dJudge{P}{\Seq{C_1}{C_2}}{Q}$. From the semantics of the \QBUAd triple and $\Seq{C_1}{C_2}$, we have that for any $\sigma$ and $p$ satisfying $p\le P(\sigma)$, there exists $\tau,q,\rho,r$ such that one of the following holds:
      \begin{itemize}
        \item[(1)] $q\le Q(\tau)$, $\bigstepl{C_1}{\sigma}{p}{\rho}{r}$, and $\bigstepp{C_2}{\rho}{r}{\tau}{q}$;
        \item[(2)] $q\le Q(\tau)$, $\bigstepp{C_1}{\sigma}{p}{\rho}{r}$, and $\bigstepl{C_2}{\rho}{r}{\tau}{q}$.
      \end{itemize}
      So we can divide $P$ into $P=P_1\curlyvee P_2$, where $P_i(\sigma)=\sup\{p:p\le P(\sigma)\land\text{ there exists }\tau,q,\rho,r\text{ such that (}i\text{) holds}\}$ for $i\in\{1,2\}$.
      \begin{itemize}
        \item For the case (1), let $R_1=\pre{C_2}{Q}$, then $\dJudge{P_1}{C_1}{R_1}$ and $\bJudge{R_1}{C_2}{Q}$ holds. By induction hypothesis and theorem \ref{thm:qbua-complete}, we have $\djudge{P_1}{C_1}{R_1}$ and $\bjudge{R_1}{C_2}{Q}$. Thus, we can derive $\djudge{P_1}{\Seq{C_1}{C_2}}{Q}$ using \textsc{(\Bd:SeqL)}.
        \item For the case (2), let $R_2=\prel{C_2}{Q}$, then $\bJudge{P_2}{C_1}{R_2}$ and $\dJudge{R_2}{C_2}{Q}$ holds. By theorem \ref{thm:qbua-complete} and induction hypothesis, we have $\bjudge{P_2}{C_1}{R_2}$ and $\djudge{R_2}{C_2}{Q}$. Thus, we can derive $\djudge{P_2}{\Seq{C_1}{C_2}}{Q}$ using \textsc{(\Bd:SeqR)}.
      \end{itemize}
      Finally, we can derive $\djudge{P}{\Seq{C_1}{C_2}}{Q}$ using \textsc{(\Bd:Disj)}.
    \item \textbf{Case} $\Choice{C_1}{C_2}$:\\
      Let $P,Q$ be arbitrary such that $\dJudge{P}{\Choice{C_1}{C_2}}{Q}$. From the semantics of the \QBUAd triple and $\Choice{C_1}{C_2}$, we have that for any $\sigma$ and $p$ satisfying $p\le P(\sigma)$, there exists a $\tau$ and a $q$ such that $q\le Q(\tau)$ and either $\bigstepl{C_1}{\sigma}{p}{\tau}{q}$ or $\bigstepl{C_2}{\sigma}{p}{\tau}{q}$. So we can divide $P$ into $P=P_1\curlyvee P_2$, where $P_i(\sigma)=\sup\{p:p\le P(\sigma)\land\exists\tau,q.q\le Q(\tau)\land\bigstepl{C_i}{\sigma}{p}{\tau}{q}\}$ for $i\in\{1,2\}$. By definition, $\dJudge{P_i}{C_i}{Q}$ holds for $i\in\{1,2\}$, which implies that $\djudge{P_i}{C_i}{Q}$ using induction hypothesis. Then we can derive $\djudge{P}{\Choice{C_1}{C_2}}{Q}$ using \textsc{(\Bd:Choice)} and \textsc{(\Bd:Disj)}.
    \item \textbf{Case} $\Loop{C}$:\\
      Let $P,Q$ be arbitrary such that $\dJudge{P}{\Loop{C}}{Q}$. From the semantics of the \QBUAd triple and $\Loop{C}$, we have that for any $\sigma$ and $p$ satisfying $p\le P(\sigma)$, there exists $\tau,q,n$ such that $q\le Q(\tau)$ and $\bigstepl{C^n}{\sigma}{p}{\tau}{q}$. So we can divide $P$ into $P=\bigcurlyvee_{n\ge 0}P_n$, where $P_n(\sigma)=\sup\{p:p\le P(\sigma)\land\exists\tau,q.q\le Q(\tau)\land\bigstepl{C^n}{\sigma}{p}{\tau}{q}\}$. In this case, $\dJudge{P_n}{C^n}{Q}$ holds for all $n\ge 0$.\\
      For $n=0$, we have $P_0\preceq 0$ and then $\djudge{P_0}{\Loop{C}}{Q}$ is derivable by \textsc{(\Bd:LoopZero)} and \textsc{(\Bd:Cons)}.
      For each $n>0$, similar to the sequential composition, we can divide $P_n$ into $P_n=\bigcurlyvee_{0\le m<n}P_{n,m}$, where $P_{n,m}(\sigma)=\sup\{p:p\le P_n(\sigma)\land\exists\tau,q,\rho_1,r_1,\rho_2,r_2.q\le Q(\tau)\land\bigstepp{C^m}{\sigma}{p}{\rho_1}{r_1}\land\bigstepl{C}{\rho_1}{r_1}{\rho_2}{r_2}\land\bigstepp{C^{n-m-1}}{\rho_2}{r_2}{\tau}{q}\}$. For each $0\le m<n$, let
      \begin{align*}
        R_{n,m}(i)=\begin{cases}
          P_{n,m} & \text{if }i=0,\\
          \pre{C^{m-i}}{\prel{C}{\pre{C^{n-m-1}}{Q}}} & \text{if }0<i\le m,\\
          \pre{C^{n-i}}{Q} & \text{if }m<i\le n.\\
        \end{cases}
      \end{align*}
      Then we have $\bJudge{R_{n,m}(i)}{C}{R_{n,m}{i+1}}$ for all $i<n$ and $\dJudge{R_{n,m}(m)}{C}{R_{n,m}(m+1)}$. By theorem \ref{thm:qbua-complete} and induction hypothesis, we have $\bjudge{R_{n,m}(i)}{C}{R_{n,m}(i+1)}$ for all $i<n$ and $\djudge{R_{n,m}(m)}{C}{R_{n,m}(m+1)}$. Thus, we can derive $\djudge{P_{n,m}}{C^m}{Q}$ using \textsc{(\Bd:Loop)}, and then derive $\djudge{P_n}{C^n}{Q}$ using \textsc{(\Bd:Disj)}.\\
      Finally, we can derive $\djudge{P}{\Loop{C}}{Q}$ using \textsc{(\Bd:Disj)}.
    \item \textbf{Case} $\Local{x}{C}$:\\
      Let $P,Q$ be arbitrary such that $\dJudge{P}{\Local{x}{C}}{Q}$. Pick $y$ be a fresh variable such that $y\notin\Fv(P)\cup\Fv(Q)\cup\Fv(C)$, and let $P'=\prel{C[y/x]}{Q}$. We have $\dJudge{P'}{C[y/x]}{Q}$. By induction hypothesis, we have $\djudge{P'}{C[y/x]}{Q}$, so we can derive $\djudge{\Sup y.P'}{\Local{x}{C}}{\Sup y.Q}$ using \textsc{(\Bd:Local)}.\\
      Next, we will prove that $P\preceq\Sup y.P'$. For any $\sigma$ and $p$ satisfying $p\le P(\sigma)$, since $\dJudge{P}{\Local{x}{C}}{Q}$ holds, there exists a $\tau$ and a $q$ such that $q\le Q(\tau)$ and $\bigstepl{\Local{x}{C}}{\sigma}{p}{\tau}{q}$. From the semantics of $\Local{x}{C}$, there exists $\sigma',\tau',v$ such that $\sigma=\sigma'[x\mapsto v]$, $\tau=\tau'[x\mapsto v]$, and $\bigstepl{C}{\sigma'}{p}{\tau'}{q}$. Then we have $\bigstepl{C[y/x]}{\sigma''}{p}{\tau''}{q}$ by proposition \ref{prop:subst}, where $\sigma''=\sigma[y\mapsto\eval{x}{\sigma'}]$ and $\tau''=\tau[y\mapsto\eval{x}{\tau'}]$. Since $y$ is fresh, we have $Q(\tau'')=Q(\tau)$, then we have $p\le P'(\sigma'')$ from $\bigstepl{C[y/x]}{\sigma''}{p}{\tau''}{q}$. This implies that $p\le\Sup y.P'(\sigma)$, and consequently $P\preceq\Sup y.P'$.\\
      Finally, since $\Sup y.Q=Q$ by the freshness of $y$, we can derive $\djudge{P}{\Local{x}{C}}{Q}$ using \textsc{(\Bd:Cons)}.
  \end{itemize}
\end{proof}

\end{document}